\pgfplotsset{compat=1.5}
\newtheorem{theorem}{Theorem}[section]
\newtheorem{corollary}[theorem]{Corollary}
\newtheorem{lemma}[theorem]{Lemma}
\newtheorem{definition}[theorem]{Definition}
\newenvironment{proofof}[1]{\begin{trivlist} \item {\bf Proof
#1:~~}}
  {\qed\end{trivlist}}
\newcommand{\namedref}[2]{\hyperref[#2]{#1~\ref*{#2}}}
\newcommand\norm[1]{\left\lVert#1\right\rVert}
\newcommand{\PPr}[1]{\ensuremath{\mathbf{Pr}\left[#1\right]}}
\newcommand{\Ex}[1]{\ensuremath{\mathbb{E}\left[#1\right]}}
\renewcommand{\O}[1]{\ensuremath{\mathcal{O}\left(#1\right)}}
\newcommand{\tO}[1]{\ensuremath{\tilde{\mathcal{O}}\left(#1\right)}}
\newcommand{\eps}{\varepsilon}
\newcommand{\lprp}[1]{\left(#1\right)}
\newcommand{\lbrb}[1]{\left\{#1\right\}}
\newcommand{\lsrs}[1]{\left[#1\right]}
\newcommand{\mb}[1]{\mathbb{#1}}
\newcommand{\mc}[1]{\mathcal{#1}}
\newcommand{\mrm}[1]{\mathrm{#1}}
\DeclarePairedDelimiterX{\inp}[2]{\langle}{\rangle}{#1, #2}
\newcommand{\N}{\mathbb{N}}
\newcommand{\R}{\mathbb{R}}
\newcommand{\Exp}{\mathbb{E}}
\renewcommand{\P}{\mathbb{P}}
\def \ba    {\mdef{\mathbf{a}}}
\def \calE    {\mdef{\mathcal{E}}}
\def \calS    {\mdef{\mathcal{S}}}
\def \bA    {\mdef{\mathbf{A}}}
\def \bB    {\mdef{\mathbf{B}}}
\def \boC    {\mdef{\mathbf{C}}}
\def \bD    {\mdef{\mathbf{D}}}
\def \bG    {\mdef{\mathbf{G}}}
\def \bH    {\mdef{\mathbf{H}}}
\def \bI    {\mdef{\mathbb{I}}}
\def \bM    {\mdef{\mathbf{M}}}
\def \bR    {\mdef{\mathbf{R}}}
\def \bS    {\mdef{\mathbf{S}}}
\def \bSigma    {\mdef{\mathbf{\Sigma}}}
\def \bQ    {\mdef{\mathbf{Q}}}
\def \bV    {\mdef{\mathbf{V}}}
\def \bU    {\mdef{\mathbf{U}}}
\def \bY    {\mdef{\mathbf{Y}}}
\def \bZ    {\mdef{\mathbf{Z}}}
\def \bW    {\mdef{\mathbf{W}}}
\def \bb    {\mdef{\mathbf{b}}}
\def \be    {\mdef{\mathbf{e}}}
\def \bbm    {\mdef{\mathbf{m}}}
\def \bu    {\mdef{\mathbf{u}}}
\def \bw    {\mdef{\mathbf{w}}}
\def \bv    {\mdef{\mathbf{v}}}
\def \bx    {\mdef{\mathbf{x}}}
\def \by    {\mdef{\mathbf{y}}}
\def \bz    {\mdef{\mathbf{z}}}
\def \bg    {\mdef{\mathbf{g}}}
\newcommand{\mdef}[1]{{\ensuremath{#1}}\xspace}  
\DeclareMathOperator*{\argmin}{argmin}
\DeclareMathOperator*{\polylog}{polylog}
\DeclareMathOperator*{\poly}{poly}
\DeclareMathOperator*{\nnz}{nnz}
\DeclareMathOperator*{\rank}{rank}
\newcommand{\abs}[1]{\mdef{\left|#1\right|}}         
\newcommand{\E}[2][]{\mdef{\underset{#1}{\mathbb{E}}\left[#2\right]}} 
\newcommand{\ignore}[1]{}
\newif\ifnotes\notestrue 
\newcommand{\samson}[1]{\textcolor{purple}{{\bf (Samson:} {#1}{\bf ) }} \marginpar{\tiny\bf
             \begin{minipage}[t]{0.5in}
               \raggedright S:
            \end{minipage}}}            							
\newcommand{\samson}[1]{}
\newcommand{\sandeep}[1]{\textcolor{red}{{\bf (Sandeep:} {#1}{\bf ) }} \marginpar{\tiny\bf
             \begin{minipage}[t]{0.5in}
               \raggedright S:
            \end{minipage}}}            							
\newcommand{\sandeep}[1]{}
\renewcommand*{\@fnsymbol}[1]{\textcolor{mahogany}{\ensuremath{\ifcase#1\or *\or \dagger\or \ddagger\or
 \mathsection\or \triangledown\or \mathparagraph\or \|\or **\or \dagger\dagger
   \or \ddagger\ddagger \else\@ctrerr\fi}}}
\providecommand{\email}[1]{\href{mailto:#1}{\nolinkurl{#1}\xspace}}
\definecolor{mahogany}{rgb}{0.75, 0.25, 0.0}
\definecolor{darkblue}{rgb}{0.0, 0.0, 0.55}
\definecolor{darkpastelgreen}{rgb}{0.01, 0.75, 0.24}
\definecolor{bleudefrance}{rgb}{0.19, 0.55, 0.91}
\title{Optimal Algorithms for Linear Algebra in the Current Matrix Multiplication Time}
\author{
Yeshwanth Cherapanamjeri\thanks{UC Berkeley. 
E-mail: \email{yeshwanth@berkeley.edu}}
\and
Sandeep Silwal\thanks{MIT. 
E-mail: \email{silwal@mit.edu}}
\and
David P. Woodruff\thanks{Carnegie Mellon University. 
E-mail: \email{dwoodruf@cs.cmu.edu}}
\and
Samson Zhou\thanks{UC Berkeley and Rice University. 
Work done in part while at Carnegie Mellon University. 
E-mail: \url{samsonzhou@gmail.com}}
}
\date{\today}
\date{}
\begin{document}
\allowdisplaybreaks
\maketitle

\begin{abstract}
We study fundamental problems in linear algebra, such as finding a maximal linearly independent subset of rows or columns (a basis), solving linear regression, or computing a subspace embedding. For these problems, we consider input matrices  $\mathbf{A}\in\mathbb{R}^{n\times d}$ with $n > d$. The input can be read in $\text{nnz}(\mathbf{A})$ time, which denotes the number of nonzero entries of $\mathbf{A}$. In this paper, we show that beyond the time required to read the input matrix, these fundamental linear algebra problems can be solved in $d^{\omega}$ time, i.e., where $\omega \approx 2.37$ is the current matrix-multiplication exponent.

To do so, we introduce a constant-factor subspace embedding with the optimal $m=\mathcal{O}(d)$ number of rows, and which can be applied in time $\mathcal{O}\left(\frac{\text{nnz}(\mathbf{A})}{\alpha}\right) + d^{2 + \alpha}\text{poly}(\log d)$ for any trade-off parameter $\alpha>0$, tightening a recent result by Chepurko et. al. [SODA 2022] that achieves an $\exp(\text{poly}(\log\log n))$ distortion with $m=d\cdot\text{poly}(\log\log d)$ rows in $\mathcal{O}\left(\frac{\text{nnz}(\mathbf{A})}{\alpha}+d^{2+\alpha+o(1)}\right)$ time. Our subspace embedding uses a recently shown property of {\it stacked} Subsampled Randomized Hadamard Transforms (SRHT), which actually increase the input dimension, to ``spread'' the mass of an input vector among a large number of coordinates, followed by random sampling. To control the effects of random sampling, we use fast semidefinite programming to reweight the rows. We then use our constant-factor subspace embedding to give the first optimal runtime algorithms for finding a maximal linearly independent subset of columns, regression, and leverage score sampling. To do so, we also introduce a novel subroutine that iteratively grows a set of independent rows, which may be of independent interest.
\end{abstract}

\section{Introduction}
In this paper, we consider fundamental problems in linear algebra, such as finding a maximal linearly independent subset of rows or columns, i.e., a basis, or solving linear regression, or computing a subspace embedding. Surprisingly, we still do not have optimal algorithms for these tasks. 
The input to these problems is generally a matrix $\bA\in\mathbb{R}^{n\times d}$ with $n > d$ that requires $\nnz(\bA)$ time to read, where $\nnz(\bA)$ is the number of non-zero entries of $\bA$. 
Algorithms for these problems frequently use subroutines such as matrix multiplication, inverse computation, or decomposition (singular value, QR, LU, etc.), that use at least $nd^{\omega-1}$ time, where $\omega\approx2.37$ is the exponent for matrix multiplication~\cite{AlmanW21}.

Dimensionality reduction techniques are often utilized to decrease the effective input size, so that a solution to the smaller input is often a good approximation to the optimal solution of the original problem. 
These approaches transform $\bA$ into a matrix $\bM\in\mathbb{R}^{m\times d}$ with $m\ll n$ and (approximately) solve the problem on the instance $\bM$ with a significantly smaller number of rows. 
However, existing results could only achieve $m=d\polylog(d)$ for dimensionality reduction techniques with input-sparsity runtime, which prevented true matrix-multiplication runtime algorithms, i.e., running times of the form $\O{d^\omega}$. 

Here we emphasize that $\omega$ is the parameter between $2$ and $3$ for the matrix multiplication exponent, possibly depending on the input parameters for matrix multiplication, e.g., matrix multiplication between two $n\times n$ matrices uses $\O{n^\omega}$ time, for some fixed matrix multiplication oracle that we are given. 
By contrast, many previous works define $\omega$ to be the smallest \emph{constant} such that matrix multiplication between two $n\times n$ matrices runs in time $\O{n^{\omega+\eps}}$ for any constant $\eps>0$. 
In particular, \cite{CoppersmithW82} showed that given a matrix multiplication algorithm with runtime  $\O{n^{\omega+\eps_1}}$, there exists a matrix multiplication algorithm with runtime $\O{n^{\omega+\eps_2}}$ with $\eps_2\in(0,\eps_1)$ and this process can continue ad infinitum. 
However, at some point we require an explicit fixed matrix multiplication algorithm for downstream applications. 
Thus, we consider access to a fixed matrix multiplication algorithm with runtime $\O{n^\omega}$, so that it is important to track and eliminate additional $\polylog$ overheads on top of the matrix multiplication \emph{runtime of the fixed algorithm}.  
Indeed, the removal of the last logarithmic factors is related to well-known conjectures on the construction of sparse Johnson-Lindenstrauss transforms~\cite{NelsonN13,ChepurkoCKW22}. 
In a recent work, \cite{ChepurkoCKW22} showed that these logarithmic factors were not necessary, achieving algorithms for linear algebra in near matrix-multiplication runtime, up to $\poly(\log\log d)$ factors. 

\subsection{Our Contribution}
In this work, we give the first algorithms for linear algebra in true matrix-multiplication runtime, removing the last $\poly(\log\log d)$ factors in the algorithms of \cite{ChepurkoCKW22} and thus closing a long line of work. 
Our results show that beyond the time required to read the input matrix, fundamental linear algebra problems such as finding a maximal linearly independent subset of rows or columns (a basis), linear regression, or computing a subspace embedding can be solved in the current matrix-multiplication runtime. 

We first introduce a constant-factor subspace embedding that uses input-sparsity runtime:
\begin{restatable}{theorem}{thmse}
\label{thm:se}
For any $\bA \in \R^{n \times d}$ and any tradeoff parameter $\alpha > 0$, we can compute matrix $\bG \in \R^{p \times n}$ such that:
\[\forall \bx \in \R^d: \norm{\bA\bx}_2 \leq \norm{\bG\bA \bx}_2 \leq \xi \norm{\bA\bx}_2,\]
with probability at least $0.9$ for a fixed constant $\xi > 1$. Furthermore, we have $p = \O{d}$ and $\bG\bA$ may be computed in time $\O{\frac{\nnz(\bA)}{\alpha}} + d^{2 + \alpha} \polylog (d)$.
\end{restatable}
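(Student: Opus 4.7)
The plan is to compose three ingredients in sequence: a fast sparse sketch that brings the row count from $n$ down to $\poly(d)$; a stacked SRHT followed by uniform sampling that reduces the row count further to $\O{d}$; and a fast positive SDP solve that reweights the sampled rows to sharpen the distortion down to a fixed constant $\xi$. The tradeoff parameter $\alpha$ controls the width of the initial sparse sketch: a sparser sketch reads the input faster but yields more rows to be processed downstream.

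First I would apply an OSNAP-style sparse embedding $\bS_1$ with column sparsity $\O{1/\alpha}$ and target dimension $m_1 = d^{1+\Theta(\alpha)}\polylog(d)$, which can be evaluated on $\bA$ in $\O{\nnz(\bA)/\alpha}$ time and which is a constant-distortion subspace embedding of the column span of $\bA$. All subsequent work then operates on the $m_1 \times d$ matrix $\bS_1\bA$, so the remaining cost depends only on $d$. Next I apply a constant number of stacked Subsampled Randomized Hadamard Transforms to obtain $\bH\bS_1\bA \in \R^{\Theta(m_1)\times d}$ whose leverage scores are uniformly $\O{d/m_1}$, i.e., every unit vector $\bS_1\bA\bx$ has its mass spread nearly equally across coordinates. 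A uniform sampling matrix $\bS_2$ that picks and rescales $\O{d}$ rows of $\bH\bS_1\bA$ then produces, via a matrix Chernoff/Bernstein bound over an $\eps$-net for the column span, a subspace embedding of $\bA$ with some bounded but suboptimal distortion $\zeta$.

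Finally, from the $\O{d}\times d$ matrix $\bS_2\bH\bS_1\bA$ I would compute nonnegative weights $\bw$ so that $\bW^{1/2}\bS_2\bH\bS_1\bA$ is a $\xi$-subspace embedding of $\bA$. This reduces to a positive spectral-approximation SDP on an $\O{d}\times d$ matrix, solvable in $\O{d^\omega}$ time by a fast MWU-type solver; the existence of a constant-distortion $\O{d}$-row solution is guaranteed by a Batson--Spielman--Srivastava style argument applied to the already constant-distortion matrix $\bS_2\bH\bS_1\bA$. Composing the three steps, $\bG = \bW^{1/2}\bS_2\bH\bS_1$ has $p = \O{d}$ rows and total runtime $\O{\nnz(\bA)/\alpha} + d^{2+\alpha}\polylog(d)$.

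The hardest part will be the analysis of the stacked SRHT step. A single SRHT only flattens one fixed vector at a time, but here I need flatness uniformly over the entire $d$-dimensional column span of $\bS_1\bA$; a naive $\eps$-net union bound would reintroduce $\polylog$ factors I cannot afford. I expect the correct argument to be iterative: each additional stacked SRHT reduces the maximum coordinate mass of every vector in the subspace by a constant factor with high probability, so that $\O{1}$ stacks suffice to drive all leverage scores simultaneously to $\O{d/m_1}$. A secondary concern is ensuring the SDP solve introduces no extra $\polyloglog(d)$ slack in either the final row count or the runtime, which requires a solver whose dependence on the number of input rows is genuinely linear so that its total cost is absorbed into the $d^{2+\alpha}\polylog(d)$ term.
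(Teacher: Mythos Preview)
Your skeleton matches the paper's --- sparse sketch, stacked randomized Hadamard transform, uniform sample to $\O{d}$ rows, SDP reweight --- but two of the technical steps are misidentified and would fail as written.

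First, the stacked transform is not used for leverage-score flattening. Even if every leverage score were $\O{d/m_1}$, matrix Chernoff/Bernstein with only $\O{d}$ uniform samples yields at best $\polylog(d)$ distortion; the $\log$ from Bernstein (or from your net union bound) is precisely what forces the usual $\O{d\log d}$ sample count, and iterating the transform to push coordinates down further does not remove it. What the paper actually exploits is \emph{anti}-concentration: with $m=\polylog(d)$ stacked blocks (not $\O{1}$), for every unit $\bu$ in the span simultaneously, at least a $0.999$ fraction of the $m\ell$ coordinates of $\bM\bu$ exceed a fixed constant $c$ in absolute value. This is a lower bound on most coordinates, not an upper bound on all. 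A VC-dimension argument for halfspace indicators then shows that $\O{d}$ uniform samples already satisfy $\norm{\bS\bM\bu}_2 \ge c'\norm{\bu}_2$ uniformly over the span --- no Chernoff, no net. The dilation $\norm{\bS\bM}_2$ is where the slack lives and is what the reweighting must repair.

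Second, Batson--Spielman--Srivastava is the wrong existence tool for the reweighting. BSS sparsifies an isotropic family down to $\O{d}$ vectors while preserving its quadratic form; here you already have only $\O{d}$ rows and want to \emph{improve} their distortion relative to $\bA$, which reweighting cannot do in general (nothing prevents the sampled rows from being poorly conditioned in a way no scaling fixes). The paper instead proves that with high probability, for every unit $\bu$ in the span, fewer than half of the sampled $\bx_i$ satisfy $\abs{\inp{\bu}{\bx_i}}\ge r$ for a large constant $r$; the argument runs through symmetrization, Ledoux--Talagrand contraction, and a Gaussian-rounding step. Von Neumann minimax then yields $\bw\in\mc{W}$ with $\norm{\bU\bU^\top\bSigma_{\bw}\bU\bU^\top}_2\le C$, and the anti-concentration lower bound survives any $\bw\in\mc{W}$ because the weights are capped at $2/p$. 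The SDP is a packing SDP solved in $d^2\polylog(d)$ time by a fast positive-SDP solver whose projection oracle onto the span of $\bU$ is implemented, via gradient descent, using a preconditioner obtained from the preliminary $\polylog(d)$-distortion embedding.
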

By comparison, \cite{ChepurkoCKW22} recently gave a subspace embedding with distortion $\exp(\poly(\log\log n))$ using runtime $\O{\frac{\nnz(\bA)}{\alpha}+d^{2+\alpha+o(1)}}$, for any tradeoff parameter $\alpha>0$. 
Theorem \ref{thm:se} also extends naturally to the case when $\bA$ has rank $k$, in which case it suffices for $\bG \in \R^{p \times n}$ to have $p=\O{k}$ rows and the resulting time to compute $\bG\bA$ is $\O{\frac{\nnz(\bA)}{\alpha}} + k^{2 + \alpha} \polylog(k)$.

Our constant-factor subspace embedding in Theorem \ref{thm:se} uses a sampling procedure that leverages a recently observed property of stacked Subsampled Randomized Hadamard Transforms (SRHTs) to ``spread'' the mass of an input vector among a large number of coordinates~\cite{CherapanamjeriN22}. 
Our constant-factor subspace embedding can be used to improve the efficiency of leverage score sampling, which has applications in a number of important linear algebra problems~\cite{Woodruff14}. 

In particular, we can further boost our constant-factor subspace embedding to a $(1+\eps)$-approximate subspace embedding through leverage score sampling:
\begin{restatable}{theorem}{thmalgepsse}
\label{thm:alg:eps:se}
Given $\bA\in\mathbb{R}^{n\times d}$, an accuracy parameter $\eps>0$, and any tradeoff parameter $\alpha>0$, there exists an algorithm that computes a matrix $\bS\bA$ with $\O{\frac{1}{\eps^2}\,d\log d}$ rows such that with probability at least $\frac{9}{10}$, for all vectors $\bx\in\mathbb{R}^d$,
\[(1-\eps)\|\bA\bx\|_2\le\|\bS\bA\bx\|_2\le(1+\eps)\|\bA\bx\|_2.\]
Moreover, $\bS\bA$ can be computed in time
\[\O{\frac{\nnz(\bA)}{\alpha}+d^\omega}+\frac{1}{\eps^2}\,d^{2+\alpha}\polylog(d).\]
\end{restatable}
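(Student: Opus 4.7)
The plan is to use \thmref{se} as a preconditioner to approximately compute the leverage scores of $\bA$, and then to sample $\O{d\log d/\eps^2}$ rows of $\bA$ according to those approximate scores to boost from constant distortion to $(1+\eps)$ distortion.

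First, I would invoke \thmref{se} with tradeoff parameter $\alpha$ to obtain $\bG\bA \in \R^{\O{d}\times d}$ satisfying $\norm{\bA\bx}_2 \leq \norm{\bG\bA\bx}_2 \leq \xi\norm{\bA\bx}_2$ for a fixed constant $\xi$, in time $\O{\nnz(\bA)/\alpha} + d^{2+\alpha}\polylog(d)$. Then, via a thin QR decomposition of $\bG\bA$ in $\O{d^\omega}$ time, I obtain an upper triangular $\bR \in \R^{d\times d}$ such that $\bG\bA\bR^{-1}$ has orthonormal columns. The subspace-embedding guarantee implies that all singular values of $\bA\bR^{-1}$ lie in $[1/\xi,1]$, so for each row index $i$,
\[
\ell_i(\bA) \;\leq\; \norm{e_i^T\bA\bR^{-1}}_2^2 \;\leq\; \xi^2\, \ell_i(\bA),
\]
where $\ell_i(\bA)$ denotes the $i$-th leverage score of $\bA$.

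Next, I would estimate the row norms $\norm{e_i^T\bA\bR^{-1}}_2^2$ without forming $\bA\bR^{-1}$ explicitly. To this end, I take a Johnson--Lindenstrauss matrix $\bT \in \R^{q\times d}$ with $q = \O{\log d}$, compute $\bR^{-1}\bT^T$ in $\O{d^2 q}$ time, and then form $\bY = \bA\,(\bR^{-1}\bT^T)$. By the JL guarantee combined with a union bound over the $n$ rows, the surrogate scores $\tilde\ell_i := \norm{e_i^T\bY}_2^2$ satisfy $\tilde\ell_i = \Theta(\ell_i(\bA))$ simultaneously with high probability over $\bT$. With approximate leverage scores in hand, I would draw $s = \O{d\log d/\eps^2}$ rows of $\bA$ independently with replacement from the distribution $p_i \propto \tilde\ell_i$, rescale each sampled row by $1/\sqrt{s p_i}$, and output the resulting $s\times d$ matrix as $\bS\bA$. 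A standard matrix Chernoff argument applied to the independent rank-one summands obtained from the rows of $\bA\bR^{-1}$ then shows that $\bS\bA\bR^{-1}$ has all singular values in $[1-\eps,1+\eps]$ with probability at least $9/10$, which is equivalent to $\bS\bA$ being a $(1+\eps)$-subspace embedding for $\bA$.

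The main obstacle is to keep the leverage-score estimation step from inflating the $\nnz(\bA)/\alpha$ term by a $\polylog(d)$ factor. Naively multiplying $\bA$ by the $d\times \O{\log d}$ matrix $\bR^{-1}\bT^T$ costs $\O{\nnz(\bA)\log d}$; this extra logarithmic factor can be absorbed either by (i) using a sparse JL construction so that the amortized cost per nonzero entry of $\bA$ remains $\O{1/\alpha}$, or (ii) rerunning \thmref{se} with a slightly smaller tradeoff parameter so that the $\log d$ overhead is absorbed into $1/\alpha$. All remaining steps contribute at most $\O{d^\omega} + \frac{1}{\eps^2}d^{2+\alpha}\polylog(d)$, yielding the claimed runtime.
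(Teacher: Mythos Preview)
Your overall strategy matches the paper's: apply \thmref{se} to get a constant-factor embedding with $\O{d}$ rows, QR-factor it to obtain $\bR$, and then do leverage-score sampling on $\bA$ via the row norms of $\bA\bR^{-1}$. The difficulty you flag---avoiding a $\nnz(\bA)\cdot\polylog$ blow-up when estimating those row norms---is exactly the crux, but neither of your proposed fixes resolves it.

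For fix (i), once you form $\bR^{-1}\bT^\top\in\R^{d\times q}$ it is dense regardless of how sparse $\bT$ is, so computing $\bA\cdot(\bR^{-1}\bT^\top)$ still costs $\Theta(\nnz(\bA)\cdot q)$; the sparsity of the JL matrix is destroyed by $\bR^{-1}$ and buys you nothing. For fix (ii), the tradeoff parameter $\alpha$ governs only the OSNAP step inside \thmref{se}; shrinking it does not touch the separate $\nnz(\bA)\cdot q$ term from the row-norm estimation, so there is nothing to absorb. Moreover, since you must union-bound over all $n$ rows, you actually need $q=\Theta(\log n)$, not $\Theta(\log d)$, which makes the issue worse when $n\gg d$.

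The paper sidesteps this by invoking the two-stage sampling procedure of \cite{ChepurkoCKW22} (stated here as \thmref{produce:sample}): one first applies a JL matrix with only $\O{1/\gamma}$ columns---a constant---to get crude $\O{n^\gamma\log n}$-approximate scores at cost $\O{\nnz(\bA)/\gamma}$, oversamples by this $n^\gamma$ factor, and then computes accurate scores only on the oversampled rows and uses rejection sampling to thin back down. This yields a term $s\cdot d\cdot n^\alpha\polylog(n)$ in place of $\nnz(\bA)\log n$; a final case split on whether $d>n^{0.1}$ or $d\le n^{0.1}$ then shows this term is dominated either by $\frac{1}{\eps^2}d^{2+\alpha}\polylog(d)$ or by $\nnz(\bA)$, giving the stated runtime.
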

By comparison, recent work by \cite{ChepurkoCKW22} achieved a $(1+\eps)$-subspace embedding with either $\frac{1}{\eps^2}\,(d\log d)\exp(\poly(\log\log d))$ rows or with runtime $\O{\frac{\nnz(\bA)}{\alpha}}+d^\omega\poly(\log\log d)+\frac{1}{\eps^3}\,d^{2+o(1)}+\frac{1}{\eps^2}\,n^{\alpha+o(1)}d^{2+o(1)}$. 
Our result avoids such tradeoffs, which is especially useful in downstream applications, as we soon discuss. 
Moreover, Theorem \ref{thm:alg:eps:se} extends naturally to the case where $\bA$ has rank $k$, similarly as Theorem \ref{thm:se}. 
On the other hand, we remark that unlike our constant-factor subspace embedding, our $(1+\eps)$-subspace embedding \emph{does not} have the optimal number of rows to perform further tasks downstream. 
We believe the existence/construction of such a subspace embedding would be an interesting future question. 
Questions in a similar spirit have also been previously asked for graph theoretic problems, e.g.,~\cite{LeeS17,LeeS18}. 

We then use our constant-factor subspace embedding and our leverage score sampling framework to find a maximal set of linearly independent rows of an input matrix $\bA\in\mathbb{R}^{n\times d}$:
\begin{restatable}{theorem}{thmindrows}
\label{thm:ind:rows}
Given a matrix $\bA\in\mathbb{R}^{n\times d}$ with rank $k$ and any tradeoff parameter $\alpha>0$, there exists an algorithm that outputs a set of $k$ linearly independent rows of $\bA$, using time $\O{\frac{\nnz(\bA)}{\alpha}+k^\omega}+k^{2+\alpha}\polylog(k)$
\end{restatable}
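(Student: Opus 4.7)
The plan is to reduce the problem to finding $k$ linearly independent columns of a short sketched matrix via Theorem \ref{thm:se}, and then extract them via an iterative doubling subroutine whose total cost is bounded by a geometric sum.

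First, I would apply the rank-$k$ version of Theorem \ref{thm:se} to $\bA^T$, producing $\bG' \in \R^{\O{k}\times d}$ together with the product $\bB := \bG'\bA^T \in \R^{\O{k}\times n}$ in time $\O{\nnz(\bA)/\alpha} + k^{2+\alpha}\polylog(k)$. Since $\bG'$ is a constant-factor subspace embedding for the column space of $\bA^T$, the two-sided bound $\|\bA^T\bx\|_2 \le \|\bB\bx\|_2 \le \xi\|\bA^T\bx\|_2$ holds for every $\bx\in\R^n$, so the null spaces of $\bA^T$ and $\bB$ coincide. Consequently, a subset $S\subseteq[n]$ indexes linearly independent columns of $\bA^T$ (equivalently, linearly independent rows of $\bA$) if and only if it indexes linearly independent columns of $\bB$. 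It therefore suffices to find $k$ linearly independent columns of the short matrix $\bB$.

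To extract such a set, I would use an iterative doubling procedure. Starting from $S_0 = \emptyset$, at iteration $i$ I maintain a set $S_i$ of $\min(2^i,k)$ linearly independent columns of $\bB$ together with an orthonormal basis (kept in the $\O{k}$-dimensional ambient space of $\bB$'s columns) of $\text{span}(\bB_{S_i})$. To grow $S_i$ to $S_{i+1}$, the goal is to locate $2^i$ additional columns whose residuals after projection onto the orthogonal complement of $\text{span}(\bB_{S_i})$ are nonzero and in general position. The key observation is that the residuals live in $\R^{\O{k}}$, so both the projection and any subsequent basis extension (e.g.\ via Gaussian elimination on an $O(2^i)$-sized candidate pool) can be carried out with cost $\O{(2^i)^\omega}$. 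Summing the geometric series $\sum_{i=0}^{\log_2 k}(2^i)^\omega = \O{k^\omega}$ gives the total extraction cost, so combined with the setup step we obtain runtime $\O{\nnz(\bA)/\alpha + k^\omega} + k^{2+\alpha}\polylog(k)$ as required.

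The main obstacle is arranging for the per-iteration cost to really be $\O{(2^i)^\omega}$ in the sketched space, without incurring extra $\polylog$ overheads (which would swamp the $\O{k^\omega}$ target after summing) or any dependence on $n$ (which would ruin the geometric sum since $n$ can be arbitrarily larger than $k$). A naive sweep that projects every one of the $n$ columns of $\bB$ at each iteration would cost $\Omega(nk^{\omega-1})$, and exact leverage-score computation via $\bB\bB^T$ would also cost $\Omega(nk^{\omega-1})$. The resolution is to design a sampling oracle that, from the current basis plus short interactions with $\bB$, produces a fresh candidate pool of size $O(2^i)$ containing enough new independent columns in the residual; this is the ``novel subroutine that iteratively grows a set of independent rows'' highlighted in the abstract. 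Designing this oracle so that any polylogarithmic overhead is absorbed into the $k^{2+\alpha}\polylog(k)$ term rather than the $k^\omega$ term, and proving the requisite correctness at every level of the doubling, is the delicate part of the argument.
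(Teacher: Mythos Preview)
Your proposal has two genuine gaps.

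First, you are missing a reduction step. The paper's proof performs two reductions before the iterative subroutine: a sparse rank-preserving sketch (Theorem~\ref{thm:rps}) to cut the column dimension to $O(k)$, and then one pass of leverage score sampling (Theorem~\ref{thm:alg:eps:se}) to cut the row dimension from $n$ to $O(k\log k)$. Your use of Theorem~\ref{thm:se} on $\bA^\top$ is a valid analogue of the first step (columns of $\bB$ do correspond to rows of $\bA$), but you never perform the second step, so your $\bB$ still has $n$ columns. Any iterative procedure run directly on $\bB$ must, in every round, interact with those $n$ columns in order to decide which ones to sample; you do not explain how this is paid for within the $\O{\nnz(\bA)/\alpha}$ budget. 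In the paper the $\nnz$-type cost incurred in each round of the subroutine is only $O(k^2\log k)$ precisely because the matrix has already been shrunk to $O(k\log k)\times O(k)$.

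Second, your doubling scheme is not the paper's subroutine, and you do not actually construct it: you explicitly defer the design of the ``sampling oracle'' as ``the delicate part of the argument''. This is the heart of the proof. Moreover, the cost accounting you propose is off: locating $2^i$ new independent columns requires working in the residual space of rank $k-2^i$ and sampling among all remaining columns, so the natural cost scales with the residual rank and with $n$, not with $2^i$. The paper's subroutine (Algorithm~\ref{alg:ind:row}, Lemma~\ref{lem:smaller:ind}) runs in the opposite direction: it shrinks the residual rank geometrically. In each round it draws $O(r)$ leverage-score samples from the current residual (rank $r$), and uses the approximate-matrix-product bound (Theorem~\ref{thm:amm}) on the preconditioned matrix to show that these samples have rank at least $0.9r$ with probability $\ge 2/3$. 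The per-round cost is $O(r^\omega)$ plus an $\nnz\cdot\log$ term, and the geometric sum over $r$ gives $O(k^\omega)$; the $\nnz\cdot\log$ terms are harmless only because $\nnz$ is already $O(k^2\log k)$ after the two reductions above.
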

By comparison, recent work of \cite{ChepurkoCKW22} gave an algorithm that finds a set of $k$ linearly independent rows of $\bA$ using time $\O{\nnz(\bA)+k^{2+o(1)}}+k^\omega\,\poly(\log\log k)$. 
Like the algorithm of \cite{ChepurkoCKW22}, we first reduce the problem to computing a set of $k$ linearly dependent rows of a matrix $\bB\in\mathbb{R}^{\O{k\log k}\times\O{k}}$, though due to our more efficient subspace embedding algorithm, we can do this in matrix-multiplication runtime while \cite{ChepurkoCKW22} cannot. 
Now to achieve Theorem \ref{thm:ind:rows}, we develop a novel subroutine in Section \ref{sec:basis:grow}, which iteratively grows a set of independent rows of $\bB$ that may be of independent interest. 
Crucially, the algorithm avoids additional $k^\omega\,\poly(\log\log k)$ dependencies that are incurred by the subroutine of \cite{ChepurkoCKW22}. 
We provide a summary of previous work on finding a set of independent rows in Figure \ref{fig:ind:rows:}. 

\begin{figure*}[!htb]
\centering
\begin{tabular}{|c|c|}\hline
Reference & Runtime \\\hline
Gaussian elimination & $\O{nd^{\omega-1}}$ \\\hline
\cite{CheungKL13} & $\O{\nnz(\bA)\log n+k^{\omega}\log n}$ \\\hline
\cite{ChepurkoCKW22} & $\O{\nnz(\bA)+k^{2+o(1)}}+k^\omega\,\poly(\log\log k)$ \\\hline
Theorem \ref{thm:ind:rows}, for any $\alpha>0$ & $\O{\frac{\nnz(\bA)}{\alpha}+k^\omega}+k^{2+\alpha}\polylog(k)$ \\\hline
\end{tabular}
\caption{Summary of previous results for identifying a set of $k$ independent rows from a matrix $\bA\in\mathbb{R}^{n\times d}$ with rank $k$}
\label{fig:ind:rows:}
\end{figure*}

Finally, we use our $(1+\eps)$-approximate subspace embedding to achieve $(1+\eps)$-approximate linear regression:
\begin{restatable}{theorem}{thmlinreg}
\label{thm:lin:reg}
Given $\bA\in\mathbb{R}^{n\times d}$, $\bb\in\mathbb{R}^n$, and any tradeoff parameter $\alpha>0$, there exists an algorithm that with probability at least $0.9$, outputs a vector $\by$ such that
\[\|\bA\by-\bb\|_2\le(1+\eps)\min_{\bx\in\mathbb{R}^d}\|\bA\bx-\bb\|_2,\]
using time $\O{\frac{\nnz(\bA)}{\alpha}+d^\omega}+\frac{1}{\eps}\,d^{2+\alpha}\polylog(d)+\frac{1}{\eps}\,d^2\polylog(d)\log\frac{1}{\eps}$. 
\end{restatable}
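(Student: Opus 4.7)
The plan is to couple our two subspace embeddings, \thmref{se} and \thmref{alg:eps:se}, with preconditioned iterative refinement, so that the input-sparsity cost is paid only a constant number of times and the remaining work is confined to small sketches.

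First, I apply \thmref{se} to the augmented matrix $[\bA\mid\bb]\in\mathbb{R}^{n\times(d+1)}$, producing a constant-factor subspace embedding $\bG$ with $\O{d}$ rows in time $\O{\nnz(\bA)/\alpha}+d^{2+\alpha}\polylog(d)$. I then compute a QR factorization $\bG\bA=\bQ\bR$ in $\O{d^\omega}$ time; the triangular factor $\bR^{-1}$ serves as a preconditioner, since $\bA\bR^{-1}$ has condition number $\O{1}$. Solving the sketched normal equations through $\bQ$ and $\bR$ in $\O{d^2}$ time gives an initial iterate $\bx_0$ which, by the subspace embedding guarantee on $[\bA\mid\bb]$, is an $\O{1}$-approximation to $\bx^\star:=\argmin_\bx\|\bA\bx-\bb\|_2$.

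Next, I invoke \thmref{alg:eps:se} on $[\bA\mid\bb]$ with a suitably chosen accuracy parameter of order $\eps$, producing a sketching matrix $\bS$ together with $\bS\bA$ and $\bS\bb$ in time $\O{\nnz(\bA)/\alpha+d^\omega}+\frac{1}{\eps}d^{2+\alpha}\polylog(d)$. The row count of $\bS$ is calibrated at the level $\O{d/\eps}\polylog(d)$ that the standard leverage-score-sampling regression analysis requires for the sketched minimizer to be $(1+\eps)$-accurate for the original regression, using the approximate leverage scores that are readily read off from $\bR$. After this step, $\bA$ is never touched again: all remaining work operates on the sketched system, which has only $\O{d/\eps}\polylog(d)$ rows.

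Finally, I run $\O{\log(1/\eps)}$ rounds of preconditioned Richardson iteration on $\min_\by\|(\bS\bA)\bR^{-1}\by-\bS\bb\|_2$, initialized at $\bR\bx_0$. Since $\bA\bR^{-1}$ has $\O{1}$ condition number and $\bS$ approximately preserves norms on the column span of $[\bA\mid\bb]$, the operator $(\bS\bA)\bR^{-1}$ also has $\O{1}$ condition number, so each iteration contracts the distance to the sketched optimum by a constant factor. Each iteration consists of matrix-vector products with the $\O{d/\eps}\polylog(d)\times d$ matrix $(\bS\bA)\bR^{-1}$, costing $\frac{1}{\eps}d^2\polylog(d)$; aggregated over $\O{\log(1/\eps)}$ rounds, this contributes the final $\frac{1}{\eps}d^2\polylog(d)\log(1/\eps)$ term. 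Transforming back via $\by\mapsto\bR^{-1}\by$ yields the output $\by$.

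The primary obstacle is verifying that $(1+\eps)$-accuracy on the sketched regression translates cleanly to $(1+\eps)$-accuracy on the original regression, and that the preconditioner computed from $\bG$ is still effective on the sketched system built from $\bS$. The first issue is handled by picking the accuracy parameter of $\bS$ so that the sketching error and the iterate error compose to $(1+\eps)$ on the original objective; the second follows because both $\bA\bR^{-1}$ and $\bS\bA\bR^{-1}$ are near-isometries on $\mathbb{R}^d$, so the same $\bR$ preconditions either system. Careful accounting of these two parameter choices is what drives the final $\frac{1}{\eps}$ rather than $\frac{1}{\eps^2}$ dependence in the stated runtime.
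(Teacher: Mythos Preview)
Your proposal is correct and follows essentially the same approach as the paper: use the constant-factor embedding of \thmref{se} to build a preconditioner $\bR$ and an $\O{1}$-approximate starting point, use the leverage-score-based sketch of \thmref{alg:eps:se} with $\O{d/\eps}\polylog(d)$ rows for the fine regression, and run $\O{\log(1/\eps)}$ preconditioned iterations on the sketched system. The only cosmetic difference is that the paper makes the $1/\eps$ (rather than $1/\eps^2$) row count explicit by invoking \lemref{se:lr:solve} with a $(1+\sqrt{\eps})$-subspace embedding, whereas you appeal to the direct leverage-score regression analysis; both routes give the same bound.
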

By comparison, \cite{ChepurkoCKW22} output a $(1+\eps)$-approximation to linear regression in time $\O{\frac{\nnz(\bA)}{\alpha}}+\frac{1}{\eps^3}\,n^{\alpha+o(1)}d^{2+o(1)}+d^\omega\,\poly(\log\log d)$ for any tradeoff parameter $\alpha>0$. 
We remark that our result achieves both the optimal (current) matrix-multiplication runtime and also a better dependence on $\frac{1}{\eps}$, which is a byproduct of our $(1+\eps)$-subspace embedding avoiding the tradeoffs incurred by \cite{ChepurkoCKW22}. 

\subsection{Technical Overview}
In this section, we give a brief overview of our technical contributions. 
A summary of the interplay between our algorithmic contributions can be seen in Figure \ref{fig:flowchart}. 

To avoid polylogarithmic overhead over matrix-multiplication runtime, we require a dimensionality reduction technique that uses $o(d\log d)$ rows. 
A folklore result states that a dense matrix $\bG$ of $\O{d}$ rows with entries that are independent Sub-Gaussian random variables suffices to achieve a constant-factor subspace embedding~\cite{Woodruff14}. 
However, computing $\bG\bA$ for an input matrix $\bA\in\mathbb{R}^{n\times d}$ requires time $\O{d\,\nnz(\bA)}$ due to the multiplication with the dense matrix $\bG$. 
Although the runtime is unsatisfactory, a subspace embedding with a small number of rows corresponds to an improvement in runtime for downstream tasks. 
Unfortunately, faster dimensionality reduction techniques such as the sparse Johnson-Lindenstrauss transform use $\Omega(d\log d)$ rows and it is an open question whether these constructions can be improved to only using $\O{d}$ rows~\cite{NelsonN13}.

\paragraph{Intuition from previous work.}
\cite{ChepurkoCKW22} recently sidestepped these issues by first showing that a rescaling of an embedding of \cite{Indyk07} roughly maintains the $\ell_2$ norm of a $d$-dimensional unit $\ell_2$ vector while the $\ell_1$ norm becomes $\tilde{\Omega}(\sqrt{d})$. 
In particular, a constant fraction of the resulting $o(d\log d)$ coordinates have magnitude $\tilde{\Omega}\left(\frac{1}{\sqrt{d}}\right)$. 
\cite{ChepurkoCKW22} then used the intuition that under such a ``flattening'' $\by$ of a unit $\ell_2$ vector, a sparse matrix $\bS$ of random signs will sample some of the coordinates with ``large'' magnitude, so that the dot product $\langle\bS_i,\by\rangle$ of each row of $\bS$ with the flattened vector $\by$ will be at least $\tilde{\Omega}\left(\frac{1}{\sqrt{d}}\right)$, which implies a lower bound on $\|\bS\by\|_2^2$. 
An upper bound on the operator norm $\|\bS\|_2$ can also be shown, ultimately giving a distortion of $\exp(\poly(\log\log d))$. 
Unfortunately, Indyk's embedding~\cite{Indyk07} only governs the sum of the $\ell_2$ norms of blocks of coordinates of the resulting embedding and thus must be applied recursively across $\O{\log\log d}$ levels, resulting in a sketching matrix with $d\,\exp(\log\log(d))$ rows. 
Therefore, the resulting matrix does not lose polylogarithmic factors over matrix-multiplication runtime, but still cannot quite achieve true matrix-multiplication runtime. 
Hence, using Indyk's embedding~\cite{Indyk07} seems to be a major bottleneck to achieving matrix multiplication runtime in the previous works and thus it seems we need to use significantly new techniques altogether. 

\paragraph{Crude subspace embeddings through SRHT.} 
To avoid the extraneous factors over matrix-multiplication runtime, we require a sketching matrix with a smaller number of rows which can also be applied quickly. 
To that end, we recall that the flattening property of the Subsampled Randomized Hadamard Transform (SRHT) is frequently used to show that it forms a subspace embedding, in the sense that the largest coordinate of the image of a fixed vector is upper bounded by $\tO{\frac{1}{\sqrt{d}}}$. 
More recently, \cite{CherapanamjeriN22} showed that by stacking some number of SRHTs on top of each other to {\it increase} dimension, not only is the $\ell_2$-norm of any fixed vector preserved exactly, but also for any vector, a constant fraction of the coordinates of the image of the SRHT is lower bounded by $\tilde{\Omega}\left(\frac{1}{\sqrt{d}}\right)$ in absolute value.  
This property can be seen as a fast embedding of $\ell_2$ into $\ell_1$ with a small target dimension. 

We first left-multiply our input matrix $\bA\in\mathbb{R}^{n\times d}$ with OSNAP matrices \cite{NelsonN13} $\bS_1$ and $\bS_2$ to obtain a constant-factor subspace embedding.
The application of the composition of OSNAP matrices is a standard technique that allows us to first decrease the number of rows to $\ell:=\O{d\log d}$ at the cost of a constant-factor distortion, in $\O{\nnz(\bA) + d^C}$ time, where $C$ can be made an arbitrarily small constant larger than $2$. 

If $\bM\in\mathbb{R}^{m\ell\times\ell}$ is the matrix consisting of stacked randomized Hadamard matrices, we define $\bS\in\mathbb{R}^{\O{d}\times m\ell}$ to be a matrix that independently and uniformly samples each row of $\bM\bS_1\bS_2\bA$. 
It then suffices to bound both the contraction and the dilation of $\bS\bM\bB:=\bS\bM\bS_1\bS_2\bA$. 
That is, if we could show $\|\bS\bM\|_2\le\O{1}$ and $\|\bS\bM\bB\bx\|_2\ge\O{1}$ for all unit vectors $\bB\bx\in\mathbb{R}^\ell$, then it follows that $\bS\bM$ is a constant factor subspace embedding. 
To handle contraction, we show that $\|\bS\bM\bB\bx\|^2_2\ge\O{1}$ by first showing concentration for a single vector $\bx\in\mathbb{R}^\ell$ due to the abundance of ``large'' coordinates in $\bM\bB\bx$, followed by taking a union bound over a sufficiently fine net. 
Unfortunately, it does not seem evident how to bound the dilation by a constant (or whether it is even true). 
For instance, using either a crude concentration inequality or more sophisticated results bounding the norms of random submatrices, e.g., \cite{tropp2008norms}, seems to give an extra logarithmic factor. 
Thus this approach yields a subspace embedding with a logarithmic distortion, which is not enough for our optimal algorithms for downstream tasks.  

\paragraph{Constant-factor subspace embedding.} 
To achieve a constant-factor subspace embedding, we note that the slack in the above analysis is that there are large entries in $\bM$ that can be sampled by $\bS$, which prevents constant upper bounds on $\|\bS\bM\|_2$. 
On the other hand, due to the abundance of large coordinates in $\bM\bB\bx$, an accurate estimation can still be acquired without these large entries in $\bM$. 
Indeed, we show that with high probability, there exists a slight reweighting $\bW$ of the sampled rows (possibly with weight $0$ to remove the large entries) such that the resulting reweighted subsampled matrix has operator bounded by a constant, i.e., $\|\bW\bS\bM\|_2\le\O{1}$. 

Moreover, we show that the reweighting can be efficiently computed by solving a standard packing semidefinite program (SDP). 
Crucially, the SDP requires a fast projection oracle, which we can implement due to our subspace embedding with a logarithmic distortion discussed above. 
Finally, we show that under this reweighting, the contraction does not drastically change, so that $\|\bW\bS\bM\bB\bx\|_2\ge\O{1}$ for all unit vectors $\bB\bx\in\mathbb{R}^n$. 
Hence, it follows that $\bW\bS\bM\bB$ is a constant-factor subspace embedding. 

We remark that we require sharper bounds in downstream applications, e.g., basis selection, when $\bA\in\mathbb{R}^{n\times d}$ is not full rank. 
In this case, our algorithms naturally generalize to dimension and runtime dependent on the rank of $\bA$ rather than the dimension $d$ of $\bA$. 
We summarize our constant-factor subspace embedding at a high level in Figure \ref{fig:constant:se}. 

\begin{figure*}
\centering
\begin{mdframed}
Given input matrix $\bA\in\mathbb{R}^{n\times d}$:
\begin{enumerate}
\item 
Apply OSNAP matrix $\bS_2\in\mathbb{R}^{n'\times n}$ with constant factor distortion and tradeoff-parameter $\alpha$ to acquire $\bS_2\bA\in\mathbb{R}^{n'\times d}$, where $n'=\O{d^{1+\alpha}\log d}$
\item
Apply OSNAP matrix $\bS_1\in\mathbb{R}^{\ell\times n'}$ with constant factor distortion and tradeoff-parameter $\alpha'=\frac{1}{\log d}$ to acquire $\bS_1\bS_2\bA\in\mathbb{R}^{\ell\times d}$, where $\ell=\O{d\log d}$
\item
Apply SRHT matrix $\bM\in\mathbb{R}^{m\ell\times\ell}$ for $m=\polylog(d)$ to acquire $\bM\bS_1\bS_2\bA\in\mathbb{R}^{m\ell\times d}$
\item
Apply a sampling matrix $\bS\in\mathbb{R}^{p\times m\ell}$ that uniformly samples rows with $p=\O{d}$, to acquire $\bS\bM\bS_1\bS_2\bA\in\mathbb{R}^{d'\times d}$, where $d'=\O{d}$ with high probability
\item
Solve an SDP to find a set of weights $\bW\in\mathbb{R}^{d'\times d'}$ so that the operator norm of $\bS\bM$ is appropriately bounded and output $\bW\bS\bM\bS_1\bS_2\bA\in\mathbb{R}^{d'\times d}$
\end{enumerate}
\end{mdframed}
\caption{High-level summary of our constant-factor subspace embedding}
\label{fig:constant:se}
\end{figure*}

\paragraph{Leverage score sampling.}
To achieve a $(1+\eps)$-subspace embedding for a matrix $\bA$, a standard approach is to perform leverage score sampling, i.e., to sample $\O{\frac{1}{\eps^2}\,d\log d}$ rows of $\bA$ with probabilities proportional to their leverage score.  
However existing techniques could not be run in matrix-multiplication runtime, and instead ran in at least $\O{d^{\omega}\log d}$ time. 

We instead use our fast constant-factor subspace embedding $\bS\bA$ into an optimal target dimension. 
We can also efficiently compute its QR decomposition so that $\bS\bA=\bQ\bR^{-1}$ for a matrix $\bQ$ with orthonormal columns. 
In particular, since $\bQ$ has orthonormal columns, then the leverage scores of $\bS\bA$ are precisely the squared row norms of $\bQ$. 
It follows that the squared row norm of $\ba_i\bR$ is a constant-factor approximation to the leverage score of row $\ba_i$ for each $i\in[n]$. 
We can then apply the standard leverage score sampling approach by sampling $\O{\frac{1}{\eps^2}\,d\log d}$ rows of $\bA$ with probabilities proportional to their leverage score to achieve a $(1+\eps)$-subspace embedding in matrix-multiplication runtime. 
In particular, we first compute $\bA\bR\bG$ for a Johnson-Lindenstrauss matrix $\bG$ and then perform rejection sampling to achieve leverage score sampling (see Theorem~\ref{thm:produce:sample} and surrounding discussion for more details). 
We also remark that we only require the orthogonality of $\bQ$ in the QR decomposition, so other methods such as the SVD decomposition that yield a matrix with orthonormal columns would also suffice. 

We again emphasize that unlike our constant-factor subspace embedding, our $(1+\eps)$-subspace embedding \emph{does not} have the optimal number of rows to perform further tasks downstream. 
That is, our $(1+\eps)$-subspace embedding uses $\O{\frac{1}{\eps^2}\,d\log d}$ rows. 
By comparison, our constant-factor subspace embedding uses $\O{d}$ rows, which is better for $\eps=\O{1}$. 
Thus our result should be interpreted as the ability to perform leverage score sampling in matrix-multiplication runtime, with one such application being a $(1+\eps)$-subspace embedding and another such application being the selection of an independent basis (see below). 
An interesting open question is whether our techniques can be further refined to achieve a $(1+\eps)$-subspace embedding with $\O{\frac{d}{\eps^2}}$ rows in matrix-multiplication runtime. 

\paragraph{Basis selection.}
To find a set of $k$ independent rows for an input matrix $\bA\in\mathbb{R}^{n\times d}$ with rank $k$, we first use our efficient leverage score sampling framework in conjunction with existing techniques to reduce the effective input to size $\mathbb{R}^{\O{k\log(k)}\times\O{k}}$. 
Namely, we first note that there exists a distribution of matrices that form a rank-preserving sketch, so that $\rank(\bA)=\rank(\bA\bS)$, where $\bS\in\mathbb{R}^{d\times ck}$ for some constant $c>0$. 
Moreover the rank-preserving sketch has the property that any set of independent rows of $\bA$ is also a set of independent rows of $\bA\bS$ and vice versa. 
Thus it would suffice to select a basis of rows from $\bA\bS$ and take the corresponding rows in $\bA$. 

However, we cannot explicitly compute $\bA\bS$. 
We also cannot use our constant-factor subspace embedding, because multiplication by a Hadamard matrix distorts the mapping of the indices of independent rows in the original matrix and the indices of independent rows in the smaller matrix. 
Instead, we use our constant-factor subspace embedding, which selects $\O{k\log k}$ reweighted rows from $\bA\bS$. 
Hence, it remains to select a basis of rows from a matrix $\bB\in\mathbb{R}^{\O{k\log(k)}\times\O{k}}$ with rank $k$.

For this sub-problem, there exist a number of previous techniques, such as an approach by \cite{ChepurkoCKW22} that iteratively removes redundant rows from $\bB$. 
We remark that these techniques are generally not optimized to run in time $\O{k^\omega}$ since other components are usually a larger bottleneck, and thus it seems we require a new set of techniques. 

\paragraph{Iteratively growing a basis.}
To select a basis of rows from a matrix $\bB\in\mathbb{R}^{\O{k\log(k)}\times\O{k}}$ with rank $k$, we develop a new algorithm that iteratively grows a set $S$ of independent rows of $\bB$. 
Namely, we use leverage score sampling to sample $\O{k}$ rows of $\bB$. 
Observe that this is not enough to cover the entire row span of $\bB$, but for $c=\frac{1}{10}$, we show using approximate matrix product on a well-conditioned verion of our input, that we can get rank $(1-c)k=\frac{9}{10}k$ with probability at least $\frac{2}{3}$. 
We can add an independent subset of these rows to our growing set $S$ and then compute a basis $\bZ_1$ for the orthogonal complement of $S$. 
So far, these procedures, i.e., leverage score sampling, independent subset selection, and orthogonal complement basis computation, all use at most $\gamma k^\omega$ runtime for an absolute constant $\gamma>0$.  

We now repeat these procedures on $\bB\bZ_1^\top$, first using leverage score sampling to sample $\O{k}$ rows of $\bB\bZ_1^\top$. 
An observation is that the rows of $\bB$ that are spanned by $S$ will all be zero in $\bB\bZ_1^\top$, since $\bZ_1$ is the orthogonal complement of $S$. 
Thus again with probability $\frac{2}{3}$, we can sample a set of rows with rank at least a $\frac{9}{10}$ fraction of the rank of $\bB\bZ_1^\top$. 
We can again add an independent subset of these rows to our growing set $S$ and then compute a basis $\bZ_2$ for the orthogonal complement of $S$. 
However since conditioned on the success of the previous iteration, the rank of $\bB\bZ_1^\top$ is at most a $c=\frac{1}{10}$-fraction of the rank of $\bB$, these procedures will now take at most $\gamma(ck)^\omega$ runtime, which is a constant fraction smaller.

We can thus proceed by iteratively adding rows of $\bB$ to $S$ until $S$ has rank $k$. 
We can then output the corresponding rows of $\bA$. 
The runtime in each iteration, conditioned on successful samplings in each previous iteration, follows a geometric series and thus the overall runtime is $\O{k^\omega}$. 
The runtime analysis is also robust to failures in each iteration because a failure in an iteration means that at worst, no additional rows are added to $S$. 
Therefore, the algorithm will always terminate with a set of independent rows and we can simply compute the expected runtime of this procedure, which still follows a geometric series. 

\begin{figure*}[!htb]
\centering
\begin{tikzpicture}[scale=1.25]

\node at (4,4){$\polylog(d)$-Subspace};
\node at (4,3.6){Embedding};
\draw (2.6,3.3) rectangle+(2.8,1); 

\draw[->] (4,3.2) -- (4,2.4);
\node at (0,2){Leverage Score};
\node at (0,1.6){Sampling};
\draw (-1.4,1.3) rectangle+(2.8,1); 

\draw[->] (3.5,3.2) -- (0,2.4);
\node at (4,2){$\O{1}$-Subspace};
\node at (4,1.6){Embedding};
\draw (2.6,1.3) rectangle+(2.8,1); 

\draw[->] (4.5,3.2) -- (8,2.4);
\node at (8,2){$(1+\eps)$-Subspace};
\node at (8,1.6){Embedding};
\draw (6.6,1.3) rectangle+(2.8,1); 

\draw[->] (3.5,1.2) -- (2.1,0.4);
\draw[->] (0,1.2) -- (1.7,0.4);
\node at (2,0){Basis};
\node at (2,-0.4){Selection};
\draw (0.6,-0.7) rectangle+(2.8,1); 

\draw[->] (4.5,1.2) -- (5.7,0.4);
\draw[->] (8,1.2) -- (6.1,0.4);
\node at (6,0){Linear};
\node at (6,-0.4){Regression};
\draw (4.6,-0.7) rectangle+(2.8,1); 

\end{tikzpicture}
\caption{Flowchart of dependencies for our algorithmic contributions.}
\label{fig:flowchart}
\end{figure*}

\paragraph{Linear regression.}
For linear regression, we would like to find a vector $\by$ such that
\[\|\bA\by-\bb\|_2\le(1+\eps)\min_{\bx\in\mathbb{R}^d}\|\bA\bx-\bb\|_2.\]
We can first compute a $(1+\eps)$-approximate subspace embedding\footnote{It is known that even a $(1+\O{\sqrt{\eps}})$-approximate subspace embedding suffices, see Lemma \ref{lem:se:lr:solve}. To facilitate the intuition for our algorithm, we defer this discussion to Section \ref{sec:lin:reg}.} $[\bS\bA;\bS\bb]$ of the matrix $[\bA;\bb]$. 
However, the subspace embedding $[\bS\bA;\bS\bb]$ has $\frac{1}{\eps^2}\,d\polylog(d)$ rows and so we cannot directly solve for the optimal solution on the smaller space, since computing the closed-form solution would not be true matrix-multiplication runtime. 
On the other hand, we only require finding an approximately optimal solution on the smaller space, i.e., we only want a vector $\bw\in\mathbb{R}^d$ such that $\|\bS\bA\bw-\bS\bb\|_2\le(1+\O{\eps})\min_{\bx\in\mathbb{R}^d}\|\bS\bA\bx-\bS\bb\|_2$. 
Thus we instead use gradient descent to find such a vector $\bw$. 

For efficient runtime, gradient descent requires a small condition number and a ``good'' initial solution. 
While $\bS\bA$ is a $(1+\eps)$-approximate subspace embedding, it may not necessarily have small condition number. 
To decrease the condition number to $\O{1}$, we instead consider $\min_{\bx\in\mathbb{R}^d}\|\bS\bA\bR\bx-\bS\bb\|_2$, where $\bG\bA=\bQ\bR^{-1}$ is a QR decomposition for a constant-factor subspace embedding $\bG\bA$. 
Thus in this setting, $\bR$ can be considered as a preconditioner. 
To find a good initial solution, we first find the closed-form solution to $\bw^{(0)}=\argmin_{\bx\in\mathbb{R}^d}\|\bG\bA\bx-\bG\bb\|_2$, since $\bG\bA$ is a constant-factor subspace embedding. 
It then remains to account for the preconditioning by computing $\bw^{(1)}=\bR^{-1}\bw^{(0)}$, which is a good initial solution for gradient descent, since it provides a constant-factor approximation to the optimal solution due to the properties of $\bG\bA$. 

Instead, we use the SRHT to compute a matrix $\bG$ that is a constant-factor subspace embedding, with $\O{d}$ rows. 
By computing $\bG\bA=\bQ\bR^{-1}$, we can compute a matrix $\bR$ such that $\kappa(\bA\bR)=\O{1}$, since $\bG$ is a constant-factor subspace embedding and $\kappa(\bG\bA\bR)=1$ due to the orthogonality of $\bQ$. 
This implies that an approximate minimizer to $\|\bG\bA\bx-\bG\bb\|_2$ is also a constant-factor approximation to the minimizer of $\|\bS\bA\bR\by-\bS\bb\|_2$ and due to the preconditioner $\bR$, $\kappa(\bS\bA\bR)=\O{1}$ since $\bS$ is also a $(1+\eps)$-approximate subspace embedding of $\bA$. 
Thus, due to the bounded condition number of $\bS\bA\bR$, it suffices to run a small number of steps of gradient descent (GD) to obtain a $(1+\O{\eps})$-approximation to the minimizer of $\|\bS\bA\by-\bS\bb\|_2$ and thus a $(1+\eps)$-approximation to the regression problem $\min_{\bx\in\mathbb{R}^d}\|\bA\bx-\bb\|_2$.

\subsection{Preliminaries}
\label{sec:prelims}
In this paper, we use $[n]$ to denote the set $\{1,\ldots,n\}$ for a positive integer $n$. 
We use $\poly(n)$ to denote a fixed degree polynomial in $n$ that can depend on fixed constants in instantiations of variables throughout the algorithm. 
When a random event occurs with probability at least $1-\frac{1}{\poly(n)}$, we say the event
occurs with high probability. 
Similarly, we use $\polylog(n)$ to denote $\poly(\log n)$. 
We use $\tilde{\Omega}(f)$ to denote $\Omega(f\polylog(f))$ or $\Omega\left(\frac{f}{\polylog(f)}\right)$. 
We use $\mc{N}(\mu,\sigma^2)$ to denote the normal distribution with mean $\mu$ and variance $\sigma^2$ and $\mc{N}(\mathbf{\mu},\bSigma)$ to denote the multivariate normal distribution with mean $\mathbf{\mu}$ and variance $\bSigma$. 

We use the following formulation of the bounded differences inequality:
\begin{definition}[Bounded differences]
For a domain $X$, let $f:X^n\to\mathbb{R}$. 
Then $f$ satisfies the bounded difference assumption if there exist $c_1,\ldots,c_n\ge0$ such that for all $i\in[n]$,
\[\sup_{x_1,\ldots,x_n,x'_i\in X}|f(x_1,\ldots,x_i,\ldots,x_n)-f(x_1,\ldots,x'_i,\ldots,x_n)|\le c_i.\]
\end{definition}

\begin{theorem}[Bounded differences inequality, McDiarmid's inequality]
\cite{mcdiarmid1989method}
\label{thm:mcdiarmid}
Let $X_1,\ldots,X_n\in X$ be independent random variables and suppose $f:x^n\to\mathbb{R}$ satisfies the bounded difference assumption with respect to constants $c_1,\ldots,c_n$. 
Then for all $t>0$,
\[\PPr{f(X_1,\ldots,x_n)-\Ex{f(X_1,\ldots,X_n)}}\le2\exp\left(-\frac{2t^2}{\sum_{i=1}^n c_i^2}\right).\]
\end{theorem}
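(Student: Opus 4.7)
The plan is to prove McDiarmid's inequality via the classical Doob martingale argument combined with Azuma--Hoeffding. Since the paper cites this as a known result from \cite{mcdiarmid1989method}, the proof need not be original; I would reconstruct the standard one.

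First, I would construct the Doob martingale associated with $f$ and the filtration generated by $X_1, \ldots, X_n$. Specifically, define
\[Z_i = \Ex{f(X_1,\ldots,X_n) \mid X_1,\ldots,X_i}\]
for $i = 0, 1, \ldots, n$, so that $Z_0 = \Ex{f(X_1,\ldots,X_n)}$ and $Z_n = f(X_1,\ldots,X_n)$. The sequence $(Z_i)$ is a martingale with respect to the natural filtration, and the quantity we want to control, $f(X_1,\ldots,X_n) - \Ex{f(X_1,\ldots,X_n)}$, telescopes as $\sum_{i=1}^n (Z_i - Z_{i-1})$.

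The key step, and the one that requires the bounded-difference assumption, is to show that each martingale increment $D_i := Z_i - Z_{i-1}$ is supported in an interval of length at most $c_i$, conditional on $X_1,\ldots,X_{i-1}$. To see this, I would write $D_i$ as
\[D_i = \Ex{f(X_1,\ldots,X_n) \mid X_1,\ldots,X_i} - \EEx{X'_i}{\Ex{f(X_1,\ldots,X'_i,\ldots,X_n) \mid X_1,\ldots,X_{i-1}}},\]
using an independent copy $X'_i$ of $X_i$. By Fubini and the independence of the $X_j$'s, one can bound the conditional range of $D_i$ by $\sup_{x_i, x'_i} |f(\ldots,x_i,\ldots) - f(\ldots,x'_i,\ldots)| \le c_i$. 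This is the most delicate bookkeeping step but is entirely routine once the coupling with $X'_i$ is set up correctly.

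Finally, I would apply the Azuma--Hoeffding inequality to the martingale $(Z_i)$ with increment ranges $c_i$: since each $D_i$ is conditionally bounded in an interval of length $c_i$, we obtain
\[\PPr{Z_n - Z_0 \ge t} \le \exp\!\left(-\frac{2t^2}{\sum_{i=1}^n c_i^2}\right),\]
and the same bound holds for $\PPr{Z_0 - Z_n \ge t}$ by applying the argument to $-f$. Combining the two tail bounds with a union bound yields the stated two-sided inequality with the factor of $2$. The main obstacle, as noted, is the martingale-increment bound; the rest is a direct invocation of a standard concentration tool.
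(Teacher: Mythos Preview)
The paper does not prove this theorem; it is stated with a citation to \cite{mcdiarmid1989method} and used as a black box. Your reconstruction via the Doob martingale and Azuma--Hoeffding is the standard argument and is correct.
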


Given a function $\phi:\mathbb{R}\to\mathbb{R}$, we say the function is $L$-Lipschitz for a parameter $L>0$ if $|f(x)-f(y)|\le L\cdot|x-y|$ for all $x,y\in\mathbb{R}$. 
We use the following formulation of Talagrand's contraction principle:

\begin{theorem}[Ledoux-Talagrand contraction]
\cite{ledtal}
\label{thm:led_tal_cont}
Let $X_1, \dots, X_n \in \R^d$ be i.i.d. random vectors, $\mc{F}$ be a class of real valued functions on $\R^d$ and $\sigma_1, \dots, \sigma_n$ be independent Rademacher random variables. If $\phi: \R \to \R$ is an  $L$-Lipschitz function with $\phi(0) = 0$, then:
\begin{equation*}
\E \sup_{f \in \mc{F}} \sum_{i = 1}^n \sigma_i \phi (f(X_i)) \leq L \cdot \E \sup_{f \in \mc{F}} \sum_{i = 1}^n \sigma_i f(X_i).
\end{equation*}
\end{theorem}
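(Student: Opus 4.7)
I would prove this by induction on $n$, using the standard ``peel off one Rademacher variable at a time'' strategy. The base case $n=0$ is trivial. For the inductive step, the idea is to replace $\phi(f(X_n))$ by $L \cdot f(X_n)$ inside the supremum at the cost of a single factor of $L$, after which the resulting sum has only $n-1$ nonlinear summands and the induction hypothesis (applied to a modified function class that incorporates the now-linear $n$-th term) finishes the argument.

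\textbf{Key reduction step.} Condition on $X_1,\dots,X_n$ and $\sigma_1,\dots,\sigma_{n-1}$, and write $A(f) = \sum_{i<n} \sigma_i \phi(f(X_i))$. Since $\sigma_n$ is a uniform $\pm 1$ variable, the identity
\begin{equation*}
2\, \E_{\sigma_n} \sup_{f \in \mc{F}} \bigl[ A(f) + \sigma_n \phi(f(X_n)) \bigr] = \sup_{f,g \in \mc{F}} \bigl[ A(f) + A(g) + \phi(f(X_n)) - \phi(g(X_n)) \bigr]
\end{equation*}
holds by splitting the supremum between the $+1$ and $-1$ realizations. The Lipschitz hypothesis gives $\phi(f(X_n)) - \phi(g(X_n)) \leq L\,|f(X_n) - g(X_n)|$, and a symmetry argument (swapping $f$ and $g$ whenever $g(X_n) > f(X_n)$) shows that the absolute value can be dropped inside the supremum over pairs, yielding
\begin{equation*}
\sup_{f,g} \bigl[ A(f) + A(g) + L\bigl(f(X_n) - g(X_n)\bigr) \bigr] \;=\; 2\, \E_{\sigma_n} \sup_{f \in \mc{F}} \bigl[ A(f) + L\sigma_n f(X_n) \bigr].
\end{equation*}
Combining these gives a pointwise (in $\sigma_1,\dots,\sigma_{n-1}$ and $X_1,\dots,X_n$) contraction that replaces the $n$-th nonlinear term by a linear one.

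\textbf{Iterating and finishing.} Applying the identical peeling argument successively to $\sigma_{n-1}, \sigma_{n-2}, \dots, \sigma_1$ (at each stage treating the already-linearized terms $L\sigma_j f(X_j)$ for $j>i$ as part of a new additive $A(f)$, so they do not interact with the Lipschitz step for the current $\phi(f(X_i))$) removes every $\phi$ and produces the desired bound after taking expectation over all $\sigma_i$ and all $X_i$.

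\textbf{Main obstacle.} The nontrivial point is the symmetrization that strips the absolute value: one must verify that $\sup_{f,g}[A(f)+A(g)+L|f(X_n)-g(X_n)|]$ equals $\sup_{f,g}[A(f)+A(g)+L(f(X_n)-g(X_n))]$, which follows from the exchange symmetry of the first expression in $(f,g)$. A secondary subtlety is ensuring the inductive hypothesis applies cleanly after peeling: the ``new'' function class at each stage consists of pairs $(f, L f(X_n))$, and since the Lipschitz constant of $\phi$ is unchanged between stages, no extra factor accumulates. The hypothesis $\phi(0)=0$ is not actually invoked in this one-sided form of the inequality; it becomes relevant only if one instead wishes to bound the absolute supremum $\E\sup_f |\sum_i \sigma_i \phi(f(X_i))|$, where the trivial function $f\equiv 0$ must be comparable on both sides.
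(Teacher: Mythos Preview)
The paper does not prove this theorem: it is stated in the preliminaries with a citation to \cite{ledtal} and used as a black box, so there is no ``paper's own proof'' to compare against. Your sketch is the standard Ledoux--Talagrand peeling argument and is correct as written; the symmetry trick that removes the absolute value and the observation that $\phi(0)=0$ is unnecessary for the one-sided bound are both accurate.
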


We also use the following bound on the sum of independent mean-zero random variables:
\begin{theorem}[Symmetrization, e.g., Lemma 6.4.2 in~\cite{vershynin2018high}]
\label{thm:symmetrization}
Let $x_1,\ldots,x_n\in\mathbb{R}$ be independent mean-zero random variables. 
Then 
\[\Ex{\norm{\sum_{i=1}^n x_i}_2}\le2\Ex{\norm{\sum_{i=1}^n\sigma_i x_i}_2},\]
where the second expectation is taken over realizations of the random variables $x_i$ and independent Rademacher variables $\sigma_i\in\{-1,+1\}$ for all $i\in[n]$. 
\end{theorem}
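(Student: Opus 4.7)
The plan is to establish this through the standard symmetrization trick, which introduces independent copies of the $x_i$'s to produce a symmetric random variable that has the same distribution as a Rademacher-signed version. This is a well-known technique and the proof is short, so I expect the main work to be executing the steps carefully rather than overcoming a substantive obstacle.

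First, I would introduce an independent ``ghost'' sequence $x_1', \ldots, x_n'$ such that each $x_i'$ is an independent copy of $x_i$, and the two sequences are mutually independent. Since each $x_i'$ is mean-zero, we have $\Ex{\sum_{i=1}^n x_i'} = 0$, so
\[\Ex{\norm{\sum_{i=1}^n x_i}_2} = \Ex{\norm{\sum_{i=1}^n x_i - \EEx{x'}{\sum_{i=1}^n x_i'}}_2}.\]
Pulling the inner expectation out through the norm via Jensen's inequality (which applies because $\|\cdot\|_2$ is convex), I would obtain
\[\Ex{\norm{\sum_{i=1}^n x_i}_2} \le \Ex{\norm{\sum_{i=1}^n (x_i - x_i')}_2},\]
where the expectation on the right is now over both sequences jointly.

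Next, I would exploit the symmetry of the difference variables $y_i := x_i - x_i'$. Because $x_i$ and $x_i'$ are i.i.d., $y_i$ has the same distribution as $-y_i$, and more importantly, the joint distribution of $(y_1, \ldots, y_n)$ is invariant under sign flips applied coordinatewise. Thus, introducing independent Rademacher variables $\sigma_1, \ldots, \sigma_n$ that are independent of everything else, the vector $(\sigma_1 y_1, \ldots, \sigma_n y_n)$ has exactly the same distribution as $(y_1, \ldots, y_n)$, which yields
\[\Ex{\norm{\sum_{i=1}^n (x_i - x_i')}_2} = \Ex{\norm{\sum_{i=1}^n \sigma_i (x_i - x_i')}_2}.\]
Finally, I would split the sum and apply the triangle inequality:
\[\Ex{\norm{\sum_{i=1}^n \sigma_i (x_i - x_i')}_2} \le \Ex{\norm{\sum_{i=1}^n \sigma_i x_i}_2} + \Ex{\norm{\sum_{i=1}^n \sigma_i x_i'}_2} = 2\,\Ex{\norm{\sum_{i=1}^n \sigma_i x_i}_2},\]
where the last equality uses that $x_i$ and $x_i'$ are identically distributed (and Rademacher-signed versions thereof are as well). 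Chaining the three displayed bounds gives the claim.

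The only ``subtle'' step is the appeal to Jensen's inequality for a vector-valued norm, but this is immediate since $\bz \mapsto \|\bz\|_2$ is convex, so $\norm{\Ex{\bZ}}_2 \le \Ex{\norm{\bZ}_2}$ for any integrable random vector $\bZ$. Everything else is bookkeeping with independence and the triangle inequality, so I do not anticipate any genuine obstacle in executing this plan.
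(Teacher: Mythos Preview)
Your proof is correct and is the standard symmetrization argument. Note, however, that the paper does not supply its own proof of this statement: it is quoted in the preliminaries as a known fact (Lemma~6.4.2 in \cite{vershynin2018high}) and used as a black box in the proof of Lemma~\ref{lem:const_fact_se_ub}. Your ghost-sample/Jensen/symmetry/triangle-inequality derivation is exactly the textbook proof one would find in the cited reference, so there is nothing to compare.
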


We use bold font variables to represent vectors and matrices.  
For a vector $\bv\in\mathbb{R}^n$, we use $\|\bv\|_2$ to denote its Euclidean norm, so that $\|\bv\|_2^2=\sum_{i=0}^n v_i^2$. 
We use $\nnz(\bA)$ to denote the number of nonzero entries in a matrix $\bA\in\mathbb{R}^{n\times d}$ and we use $\bA^{-1}$ to denote the pseudo-inverse of $\bA$. 
For a matrix $\bA\in\mathbb{R}^{n\times d}$, we use \[\|\bA\|_2=\max_{\bx\in\mathbb{R}^d,\|\bx\|_2=1}\|\bA\bx\|_2\]
to denote its operator norm and we use $\kappa(\bA)$ to denote its condition number, so that 
\[\kappa(\bA)=\|\bA\|_2\|\bA^{-1}\|_2.\]
We use $\|\bA\|_F$ to denote the Frobenius norm of a matrix $\bA\in\mathbb{R}^{n\times d}$, so that 
\[\|\bA\|_F^2:=\sum_{i=1}^n\sum_{j=1}^d A_{i,j}^2.\]
For a square matrix $\bM\in\mathbb{R}^{n\times n}$, we use $\mrm{Tr}(\bM)$ to denote its trace, so that $\mrm{Tr}(\bM)=\sum_{i=1}^n M_{i,i}$. 
For a matrix $\bA\in\mathbb{R}^{n\times d}$ and a matrix $\bB\in\mathbb{R}^{n\times p}$, we use $[\bA;\bB]$ to denote the $n\times(d+p)$ dimensional matrix $\begin{bmatrix}\bA & \bB\end{bmatrix}$. 
We use $\bA\succeq 0$ to denote that a matrix $\bA$ is positive semidefinite (PSD). 

We use the following formulation of the Matrix Bernstein inequality:
\begin{theorem}[Matrix Bernstein inequality, e.g., Theorem 1.6 in \cite{Tropp12}]
\label{thm:matrix:bern:ineq}
Let $\bZ_1,\ldots,\bZ_n$ be a sequence of matrices with dimension $n\times d$ such that $\Ex{\bZ_i}=0^{n\times d}$ and $\|\bZ_i\|\le R$ with high probability, for each $i\in[n]$. 
Let $\sigma^2=\max\left(\norm{\sum_{i\in[n]}\Ex{\bZ_i\bZ_i^\top}}_2,\norm{\sum_{i\in[n]}\Ex{\bZ_i^\top\bZ_i}}_2\right)$. 
Then for all $t\ge 0$,
\[\PPr{\norm{\sum_{i\in[n]}\bZ_i}_2\ge t}\le(n+d)\exp\left(-\frac{t^2/2}{\sigma^2+Rt/3}\right).\]
\end{theorem}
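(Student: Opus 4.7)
The plan is to prove the Matrix Bernstein inequality via the matrix Laplace transform (MGF) method, essentially reproducing Tropp's argument. Since the $\bZ_i$ are rectangular rather than Hermitian, the first step is the Hermitian dilation: define
\[
\tilde{\bZ}_i = \begin{pmatrix} 0 & \bZ_i \\ \bZ_i^\top & 0 \end{pmatrix} \in \R^{(n+d)\times(n+d)}.
\]
Standard properties of the dilation give $\|\bZ_i\|_2 = \lambda_{\max}(\tilde{\bZ}_i) \le R$, and $\tilde{\bZ}_i^2$ is block-diagonal with blocks $\bZ_i\bZ_i^\top$ and $\bZ_i^\top\bZ_i$, so that $\bigl\|\sum_i \Ex{\tilde{\bZ}_i^2}\bigr\|_2 = \sigma^2$ and $\bigl\|\sum_i \bZ_i\bigr\|_2 = \lambda_{\max}\bigl(\sum_i \tilde{\bZ}_i\bigr)$. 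This reduces the problem to a one-sided tail bound on the maximum eigenvalue of a sum of independent mean-zero Hermitian matrices, bounded in norm by $R$.

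Next I would apply the master tail inequality: for any $\theta > 0$,
\[
\PPr{\lambda_{\max}\Bigl(\sum_i \tilde{\bZ}_i\Bigr) \ge t} \le e^{-\theta t}\,\Ex{\tr \exp\Bigl(\theta \sum_i \tilde{\bZ}_i\Bigr)}.
\]
The key analytic ingredient is Lieb's concavity theorem (equivalently, subadditivity of matrix cumulant generating functions), which yields
\[
\Ex{\tr \exp\Bigl(\theta \sum_i \tilde{\bZ}_i\Bigr)} \le \tr \exp\Bigl(\sum_i \log \Ex{e^{\theta \tilde{\bZ}_i}}\Bigr).
\]
This step is the main obstacle: Lieb's theorem is a deep result and without it one cannot exchange the expectation past the matrix exponential for non-commuting summands.

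For the individual matrix MGFs, I would use the power-series bound. Since $\Ex{\tilde{\bZ}_i}=0$ and $\tilde{\bZ}_i^k \preceq R^{k-2}\,\tilde{\bZ}_i^2$ for $k \ge 2$ (as $\|\tilde{\bZ}_i\| \le R$),
\[
\Ex{e^{\theta \tilde{\bZ}_i}} \preceq I + \Bigl(\sum_{k\ge 2} \tfrac{\theta^k R^{k-2}}{k!}\Bigr)\Ex{\tilde{\bZ}_i^2} \preceq \exp\!\left( \tfrac{\theta^2/2}{1-R\theta/3}\,\Ex{\tilde{\bZ}_i^2}\right)
\]
for $0 < \theta < 3/R$, and hence $\log \Ex{e^{\theta\tilde{\bZ}_i}} \preceq \tfrac{\theta^2/2}{1-R\theta/3}\,\Ex{\tilde{\bZ}_i^2}$. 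Summing over $i$, invoking monotonicity of the trace exponential, and using $\tr \exp(\bM) \le (n+d)\exp(\lambda_{\max}(\bM))$ yields
\[
\Ex{\tr\exp\Bigl(\theta\sum_i \tilde{\bZ}_i\Bigr)} \le (n+d)\,\exp\!\left(\tfrac{\theta^2/2}{1-R\theta/3}\,\sigma^2\right).
\]

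Finally, I would optimize over $\theta$ by choosing $\theta = t/(\sigma^2 + Rt/3)$, which lies in $(0, 3/R)$, and substitute back into the master bound to obtain
\[
\PPr{\Bigl\|\sum_i \bZ_i\Bigr\|_2 \ge t} \le (n+d)\exp\!\left(-\tfrac{t^2/2}{\sigma^2 + Rt/3}\right),
\]
as required. Aside from Lieb's theorem, the remaining steps are careful but routine: verifying the operator inequality $\tilde{\bZ}_i^k \preceq R^{k-2}\tilde{\bZ}_i^2$, justifying the series-to-exponential passage via $1+x \le e^x$ applied in the semidefinite order, and carrying out the scalar optimization.
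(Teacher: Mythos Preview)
The paper does not prove this statement; it is stated in the preliminaries with a citation to \cite{Tropp12} and used as a black box. Your sketch follows the standard matrix Laplace transform / Lieb's theorem argument from Tropp's paper and is correct in outline, so there is nothing to compare against here.
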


\paragraph{Subspace embeddings.}
For an input matrix $\bA\in\mathbb{R}^{n\times d}$, a $(1+\eps)$-subspace embedding for $\bA$ is a matrix $\bM\in\mathbb{R}^{m\times n}$ such that for all $\bx\in\mathbb{R}^{d}$,
\[(1-\eps)\|\bM\bA\bx\|_2\le\|\bA\bx\|_2\le(1+\eps)\|\bM\bA\bx\|_2,\]
for some accuracy parameter $\eps\in(0,1)$. 
The subspace embedding is oblivious if the matrix $\bM$ is generated from a random distribution that is independent of $\bA$; otherwise, the subspace embedding is non-oblivious.

A construction of an oblivious subspace embedding that can be computed in input-sparsity time has the following guarantees:
\begin{theorem}[OSNAP Matrix]
\label{thm:osnap}
\cite{NelsonN13,ClarksonW13,Cohen16}
Given an accuracy parameter $\eps>0$, for any matrix $\bA\in\mathbb{R}^{n\times d}$ with rank $k$, there exists a matrix $\bS\in\mathbb{R}^{m\times n}$ with $m=\O{\frac{1}{\eps^2}k^{1+\alpha}\log k}$ such that for all $\bx\in\mathbb{R}^d$,
\[\|\bA\bx\|_2\le\|\bS\bA\bx\|_2\le(1+\eps)\|\bA\bx\|_2\]
with probability at least $0.99$. 
Moreover, $\bS\bA$ can be computed in time $\O{\frac{\nnz(\bA)}{\alpha\eps}}$.
\end{theorem}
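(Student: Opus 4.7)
The plan is to construct $\bS$ explicitly as an OSNAP (sparse oblivious subspace embedding) matrix and verify both the runtime and the embedding guarantee. Concretely, I would instantiate $\bS \in \R^{m \times n}$ by choosing, independently for each column $j \in [n]$, a set of $s = \Theta(1/(\alpha\eps))$ row indices (either uniformly with replacement, or using $s$ independent hash functions mapping into disjoint blocks of $m/s$ rows) together with an independent Rademacher sign for each of those $s$ nonzeros, and setting each nonzero entry to $\pm 1/\sqrt{s}$. Because each column has $s$ nonzeros, $\bS\bA$ can be computed in time $O(s \cdot \nnz(\bA)) = O(\nnz(\bA)/(\alpha \eps))$, matching the runtime claim; the main task is therefore to justify the embedding property with $m = \O{k^{1+\alpha} \log k / \eps^2}$.

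The standard reduction is to let $\bU \in \R^{n \times k}$ be an orthonormal basis of the column space of $\bA$, so that $\bA = \bU \bR$ for some invertible $\bR$ on the column span; it then suffices to show that
\[
\bigl\| \bU^\top \bS^\top \bS \bU - \bI_k \bigr\|_2 \le \eps
\]
with probability at least $0.99$, since this is equivalent to every singular value of $\bS\bU$ lying in $[1-\eps, 1+\eps]$, which in turn gives the desired $(1+\eps)$ distortion for all vectors in the column span of $\bA$. Thus the entire argument collapses to proving spectral concentration of the $k \times k$ random matrix $\bU^\top\bS^\top\bS\bU$ around the identity.

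For this concentration I would use the \emph{moment method}: expand
\[
\Ex{\mathrm{Tr}\bigl((\bU^\top\bS^\top\bS\bU - \bI_k)^{2\ell}\bigr)}
\]
for an even integer $2\ell = \Theta(\log k)$, write each factor as a sum over triples (column of $\bS$, sign pattern, pair of rows of $\bU$), and then group the resulting sum by the combinatorial pattern of which column/row indices coincide across the $2\ell$ factors. Odd numbers of Rademachers of any fixed index vanish in expectation, so only ``matched'' patterns contribute, and the sparsity parameter $s$ controls how many such matchings the geometry of $\bS$ permits. Optimizing the parameters, the resulting bound takes the form
\[
\Ex{\|\bU^\top\bS^\top\bS\bU - \bI_k\|_2^{2\ell}} \le k \cdot \left( \frac{C \ell \cdot \|\bU\|_F^2}{m} + \frac{C \ell^2}{s \cdot m} \right)^{\ell},
\]
and using $\|\bU\|_F^2 = k$, $\ell = \Theta(\log k)$, $s = \Theta(1/(\alpha\eps))$, and $m = \Theta(k^{1+\alpha} \log k / \eps^2)$, this is at most $(\eps/2)^{2\ell}$; Markov's inequality with the $2\ell$-th moment then yields the $0.99$ probability bound.

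The main obstacle is the combinatorial bookkeeping in the moment computation: the sum over matching patterns of indices is delicate because one must simultaneously exploit (i) the orthonormality $\bU^\top\bU = \bI$ to control off-diagonal cancellations, (ii) the sparsity $s$ to restrict which row collisions in $\bS$ can occur, and (iii) the number of rows $m$ to damp diagonal mass. A naive approach via matrix Bernstein (Theorem~\ref{thm:matrix:bern:ineq}) applied to the sum $\bU^\top\bS^\top\bS\bU - \bI_k = \sum_{j=1}^n \bZ_j$ over column contributions $\bZ_j$ cleanly gives a spectral bound but loses an unavoidable $\log k$ factor that prevents the tight $m = \O{k^{1+\alpha}\log k / \eps^2}$ scaling at the sparsity $s = \O{1/(\alpha\eps)}$; the trace/moment approach of Nelson--Nguyen (or Cohen's cleaner reformulation tracking ``heavy'' versus ``light'' rows of $\bS\bU$) is what lets one absorb this logarithm into the $k^\alpha$ slack and close the argument.
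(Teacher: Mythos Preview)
The paper does not prove this theorem at all: it is stated in the preliminaries as a cited result from \cite{NelsonN13,ClarksonW13,Cohen16} and is used as a black box throughout. So there is no ``paper's own proof'' to compare against.

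That said, your sketch is a faithful outline of how those references establish the result. The construction, the reduction to $\|\bU^\top\bS^\top\bS\bU - \bI_k\|_2 \le \eps$ via an orthonormal basis $\bU$ for the column span, and the trace--moment expansion at order $2\ell = \Theta(\log k)$ are exactly the Nelson--Nguyen strategy, and your remark that matrix Bernstein alone loses a $\log k$ factor (forcing the moment method or Cohen's refinement) is correct. The parameter regime you invoke --- trading the extra $k^\alpha$ rows for sparsity $s = \Theta(1/(\alpha\eps))$ rather than $s = \Theta(\log(k)/\eps)$ --- is also the right way to read the stated bound. The only thing to flag is that the displayed moment bound you wrote is schematic rather than the precise form appearing in \cite{Cohen16}; the actual combinatorial argument partitions contributions by the multigraph structure of index coincidences and is more involved than a single two-term expression, but your description of what drives each term (orthonormality, sparsity, row count) is accurate.
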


Another construction of oblivious subspace embeddings uses Randomized Hadamard Transforms, which are a family of structured randomized transformations defined as follows:
\begin{definition}[Randomized Hadamard Transform]
\label{def:rht}
\begin{gather*}
    \bH_1 = [1] \qquad \bH_d = 
    \begin{bmatrix}
        \bH_{d / 2} & \bH_{d / 2} \\
        \bH_{d / 2} & -\bH_{d / 2}
    \end{bmatrix} \\
    \forall i \in [m]: \bD^{(i)} \in \mb{R}^{d \times d} \text{ and } D^{(i)}_{j, k} \thicksim 
    \begin{cases}
        \mc{N} (0, 1) & \text{if } j = k \\
        0 & \text{otherwise}
    \end{cases},
    \qquad 
    h(\bz) = 
    \begin{bmatrix}
    \bH\bD^{(1)} \\
    \bH\bD^{(2)} \\
    \vdots \\
    \bH\bD^{(m)} 
    \end{bmatrix} \cdot \bz
\end{gather*}
\end{definition}

We require the following properties of Randomized Hadamard Transforms:
\begin{theorem}{\cite[Theorem 1.1]{CherapanamjeriN22}}
\label{thm:cn_22}
Let $d \in \N, \delta, \varepsilon \in (0, 1/2)$ and $f: \R \to \R$ be a $1$-Lipschitz function. Then for the function $h$ defined in Definition~\ref{def:rht}, we have with probability at least $1 - \delta$:
\begin{equation*}
\forall z \in \R^d \text{ s.t } \norm{z} \leq 1: \abs{\frac{1}{md} \cdot \sum_{i = 1}^{md} f(h (z)_i) - \Exp_{Z \thicksim \mc{N} (0, \norm{z}^2)} [f(z)]} \leq \varepsilon
\end{equation*}
as long as $m \geq C\varepsilon^{-2} \log^5 (d / \varepsilon) \log (1 / \delta)$ for some absolute constant $C > 0$.
\end{theorem}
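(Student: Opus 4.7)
The plan is a standard two-step argument: pointwise concentration, then uniformization over the unit ball. Writing $F(z) := \frac{1}{md}\sum_{i=1}^{md}f(h(z)_i)$ for the empirical average of interest, I first fix $z$ with $\norm{z} \leq 1$. Each coordinate $h(z)_{(i-1)d + j} = \sum_k H_{j,k} g_k^{(i)} z_k$ is a linear combination of independent standard Gaussians $g_k^{(i)}$ with coefficients $H_{j,k} z_k$, and since $H_{j,k}^2 = 1$, it is marginally $\mc{N}(0, \norm{z}^2)$. Therefore $\Exp[F(z)] = \Exp_{Z \thicksim \mc{N}(0, \norm{z}^2)}[f(Z)]$, matching the target. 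Next, viewing $F(z)$ as a function of the $md$ independent standard Gaussians making up $\bD^{(1)}, \ldots, \bD^{(m)}$, a direct calculation using $|H_{j,k}| = 1$ and $|f'| \leq 1$ shows $\norm{\nabla F}_2 \leq 1/\sqrt{m}$ (the partial with respect to $g_k^{(i)}$ is $(z_k/(md))\sum_j f'(h(z)_{(i-1)d+j}) H_{j,k}$, bounded by $|z_k|/m$, and $\sum_k z_k^2 \leq 1$). Gaussian concentration for Lipschitz functions then yields $\PPr{|F(z) - \Exp[F(z)]| > \varepsilon} \leq 2\exp(-\Omega(\varepsilon^2 m))$, so $m = \Omega(\varepsilon^{-2}\log(1/\delta))$ suffices for a single $z$.

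For the uniformization step, a naive $\varepsilon$-net of the unit ball has size $(O(1/\varepsilon))^d$, so a direct union bound costs a factor of $d$ in the sample complexity and is too weak for the target polylogarithmic dependence on $d$. Instead, I would symmetrize the centered process $F(z) - \Exp[F(z)]$, then apply Ledoux-Talagrand contraction (Theorem~\ref{thm:led_tal_cont}) to peel off the $1$-Lipschitz function $f$, reducing the problem to bounding a Rademacher process indexed by the linear functionals $z \mapsto h(z)_i$. The linear class lives in a well-controlled ellipsoid: each $\bH \bD^{(i)}/\sqrt{d}$ has $O(1)$ operator norm with high probability, since after the Hadamard rotation it is essentially an isotropic Gaussian matrix. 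Dudley's entropy integral over this ellipsoid yields a $\log^{O(1)}(d/\varepsilon)$ bound on the expected supremum, and a final Gaussian (or bounded-differences, Theorem~\ref{thm:mcdiarmid}) concentration step upgrades this to a high-probability uniform bound.

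The main obstacle is carefully accounting for the five logarithmic factors in $\log^5(d/\varepsilon)$. A natural decomposition is: one $\log$ from Dudley's integral itself, additional $\log$s from estimating the covering numbers of the Hadamard image of the unit ball (which requires controlling $\norm{h(z)}_\infty$ uniformly via a Hanson--Wright-type tail bound on quadratic forms of Gaussians), one from a truncation step converting polynomial moment bounds into uniform Lipschitz constants across the net, and one from the concentration step upgrading expectation to high probability. The delicate technical piece is showing that the expected supremum stays polylogarithmic in $d$ rather than polynomial in $d$: stacking sufficiently many Hadamard blocks must not only preserve $\ell_2$-norms but also uniformly ``flatten'' coordinates of $h(z)$ across all unit $z$, in the spirit of the $\ell_2 \to \ell_1$ embedding highlighted in the technical overview.
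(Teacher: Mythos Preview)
First, note that the paper does not prove this theorem: it appears in the preliminaries as a result imported verbatim from \cite{CherapanamjeriN22} and is used only as a black box (for instance in Lemma~\ref{lem:srht_anti_conc}). There is therefore no proof in the present paper to compare your proposal against.

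On the proposal itself: the pointwise step is correct. Your computation that $F(z)$, viewed as a function of the $md$ Gaussians, satisfies $\|\nabla F\|_2 \le 1/\sqrt{m}$ is right, and Gaussian Lipschitz concentration then gives the claimed single-$z$ bound with $m = \Omega(\varepsilon^{-2}\log(1/\delta))$.

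The gap is in the uniformization step. You propose to symmetrize the sum $\frac{1}{md}\sum_{i=1}^{md} f(h(z)_i)$ term by term and then apply Ledoux--Talagrand contraction to strip off $f$. But the $md$ summands are \emph{not} independent: all $d$ coordinates in the $j$-th block of $h(z)$ are functions of the \emph{same} Gaussians $g^{(j)}_1,\dots,g^{(j)}_d$, and for $k\neq k'$ one has $\mathrm{Cov}\big((\bH\bD^{(j)}z)_k,(\bH\bD^{(j)}z)_{k'}\big)=\sum_l H_{k,l}H_{k',l}\,z_l^2$, which is nonzero for generic $z$. Symmetrization is therefore only legitimate at the block level, yielding $m$ independent block averages $\frac{1}{d}\sum_{k} f\big((\bH\bD^{(j)}z)_k\big)$; but such a block average is not of the form $\phi(\text{linear functional of }z)$, so Theorem~\ref{thm:led_tal_cont} no longer peels off $f$ and the reduction to a linear Rademacher process breaks down. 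This is not cosmetic: obtaining the uniform bound over the unit ball with only polylogarithmic dependence on $d$, despite these intra-block correlations, is precisely the nontrivial content of \cite{CherapanamjeriN22}, and the $\log^5(d/\varepsilon)$ factor reflects the additional machinery required there. Your outline correctly locates the hard step, but the specific mechanism you invoke (termwise symmetrization followed by contraction) does not go through as stated.
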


\begin{lemma}{\cite[Lemma B.6]{CherapanamjeriN22}}
\label{lem:srht_spec_bnd}
For any $d \in \N, \varepsilon, \delta \in (0, 1/2)$, we have that for $m \geq 4 \cdot \frac{\log d + \log (2 / \delta)}{\varepsilon^2}$ and the function $h$ defined in Definition~\ref{def:rht}
\begin{equation*}
\forall \bx \in \R^d: (1 - \varepsilon) \cdot \norm{\bx}_2 \leq \frac{1}{\sqrt{md}} \cdot \norm{h(\bx)}_2 \leq (1 + \varepsilon) \cdot \norm{\bx}_2
\end{equation*}
with probability at least $1-\delta$.
\end{lemma}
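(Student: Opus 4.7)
The plan is to reduce the uniform statement to a simple diagonal concentration problem, exploiting the fact that the Hadamard matrix $\bH$ has exactly orthogonal rows. Writing $\bM = \begin{bmatrix}\bH\bD^{(1)};\ldots;\bH\bD^{(m)}\end{bmatrix}\in\mathbb{R}^{md\times d}$, so that $h(\bx)=\bM\bx$, the conclusion is equivalent to showing that all singular values of $\frac{1}{\sqrt{md}}\bM$ lie in $[1-\varepsilon,1+\varepsilon]$ with probability at least $1-\delta$. So I would first compute $\bM^\top\bM$ explicitly.

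First I would use $\bH^\top\bH = d\,\bI_d$ (which follows inductively from the recursive definition, since each row of $\bH_d$ is a $\pm 1$ vector and the rows are pairwise orthogonal) to get
\[
\bM^\top\bM \;=\; \sum_{i=1}^m \bD^{(i)}\bH^\top\bH\bD^{(i)} \;=\; d\sum_{i=1}^m \bigl(\bD^{(i)}\bigr)^2,
\]
which is a diagonal matrix because each $\bD^{(i)}$ is diagonal. Consequently $\frac{1}{md}\bM^\top\bM = \mathrm{diag}(Y_1,\ldots,Y_d)$ with $Y_j := \frac{1}{m}\sum_{i=1}^m\bigl(D^{(i)}_{jj}\bigr)^2$, i.e.\ $Y_j$ is $1/m$ times a $\chi^2_m$ random variable, and the $Y_j$'s are mutually independent since the $\bD^{(i)}$'s are independent and their diagonals are drawn independently.

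Second, since $\frac{1}{md}\norm{h(\bx)}_2^2 = \sum_{j=1}^d Y_j x_j^2$ is a nonnegative weighted average of the $Y_j$'s (with weights $x_j^2$ summing to $\norm{\bx}_2^2$), it suffices to show that, with probability at least $1-\delta$, every $Y_j$ lies in $[(1-\varepsilon)^2,(1+\varepsilon)^2]$; this is implied by $|Y_j-1|\le\varepsilon$ for $\varepsilon\in(0,1/2)$. Standard chi-squared tail bounds (e.g.\ Laurent--Massart, or equivalently a Bernstein bound on sums of subexponential squared Gaussians) give
\[
\Pr\bigl[|Y_j-1|>t\bigr] \;\le\; 2\exp\!\bigl(-c\,m\,t^2\bigr)
\]
for an absolute constant $c>0$ and all $t\in(0,1)$. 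Taking $t=\varepsilon$ and applying a union bound over $j\in[d]$, the failure probability is at most $2d\exp(-cm\varepsilon^2)$, which is at most $\delta$ as soon as $m\ge 4(\log d+\log(2/\delta))/\varepsilon^2$, matching the stated bound.

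The proof is essentially immediate once one notices the diagonalization, so there is no real obstacle: the only thing to be careful about is matching the constant in the prefactor of $m$, which forces one to use a slightly sharper chi-squared tail than the generic subexponential Bernstein bound. If needed, one can write $Y_j-1=\frac{1}{m}\sum_i(g_{ij}^2-1)$ and invoke the two-sided Laurent--Massart inequalities directly to obtain the constant $4$ that appears in the statement.
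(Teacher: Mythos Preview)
Your proposal is correct. The key observation that $\bM^\top\bM$ is exactly diagonal (because $\bH^\top\bH=d\,\bI_d$ and the $\bD^{(i)}$ are diagonal) reduces the uniform statement over all $\bx$ to concentration of $d$ independent scaled $\chi^2_m$ random variables, after which a union bound finishes the argument. One small remark: to land on the stated constant $4$ you should exploit that the target interval is $[(1-\varepsilon)^2,(1+\varepsilon)^2]$ rather than $[1-\varepsilon,1+\varepsilon]$, i.e.\ you only need $|Y_j-1|\lesssim 2\varepsilon$, which sharpens the Laurent--Massart computation by the needed factor.

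Note that the paper does not give its own proof of this lemma; it is quoted verbatim from \cite[Lemma B.6]{CherapanamjeriN22}. Your diagonalization-plus-$\chi^2$-tail argument is the natural proof and is presumably what appears there (or is at least equivalent to it), so there is no meaningful methodological difference to discuss.
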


\begin{corollary}[Subspace embedding via Randomized Hadamard Transform]
\label{cor:rht}
Given a matrix $\bA\in\mathbb{R}^{n\times d}$ with rank $k$, there exist absolute constants $c,C>0$ and an explicit matrix $\bM\in\mathbb{R}^{m\times n}$ with $m=k\polylog(n)$ rows, such that with probability at least $0.99$, for any vector $\bx\in\mathbb{R}^d$,
\[\frac{1}{2}\|\bA\bx\|_2\le\|\bM\bA\bx\|_2\le\frac{3}{2}\|\bA\bx\|_2\]
and at least $Cm$ of the coordinates of the vector $\bM\bA\bx$ have magnitude at least $\frac{c}{\sqrt{m}}$. 
Moreover, $\bM\bA$ can be computed in time $nd\polylog(n)$. 
\end{corollary}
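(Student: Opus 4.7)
The plan is to build $\bM$ as the composition of a dimensionality-reducing embedding followed by a stacked Randomized Hadamard Transform that supplies the ``flat'' coordinate structure. Specifically, I would first apply Theorem \ref{thm:osnap} with $\varepsilon=\tfrac{1}{10}$ and tradeoff parameter $\tfrac{1}{\log k}$ to obtain $\bS_1\in\mathbb{R}^{\ell\times n}$ with $\ell=O(k\log k)$ that forms a $(1\pm\tfrac{1}{10})$-subspace embedding for $\bA$ with probability at least $0.995$, computable in time $\O{\nnz(\bA)\log k}\le O(nd\log n)$. Next, take $h:\mathbb{R}^\ell\to\mathbb{R}^{m'\ell}$ from Definition \ref{def:rht} with $m'=\Theta(\log^{5}\ell)$ blocks; each block $\bH\bD^{(i)}$ can be applied to every column of $\bS_1\bA$ in $O(\ell\log\ell)$ time using the fast Hadamard transform, so assembling $h(\bS_1\bA)$ costs $O(m'\ell d\log\ell)=O(kd\polylog(n))$. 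Setting $\bM=\frac{1}{\sqrt{m'\ell}}\,h\circ\bS_1$ then gives $m=m'\ell=k\polylog(n)$ rows and overall runtime $O(nd\polylog(n))$.

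For the subspace embedding bound, I would invoke Lemma \ref{lem:srht_spec_bnd} with $\varepsilon=\tfrac{1}{10}$ and $\delta=\tfrac{1}{200}$: the requirement $m'\ge 4(\log\ell+\log(2/\delta))/\varepsilon^{2}$ is $O(\log\ell)$ and is comfortably met by our choice. Critically, the lemma places the universal quantifier over $\by$ inside the high-probability event, so no net argument is needed: $\frac{1}{\sqrt{m'\ell}}\|h(\by)\|_2\in\bigl[\tfrac{9}{10}\|\by\|_2,\tfrac{11}{10}\|\by\|_2\bigr]$ for \emph{every} $\by\in\mathbb{R}^\ell$ simultaneously. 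Composing with the $(1\pm\tfrac{1}{10})$-embedding of $\bS_1$ yields $\|\bM\bA\bx\|_2\in\bigl[(\tfrac{9}{10})^2,(\tfrac{11}{10})^2\bigr]\cdot\|\bA\bx\|_2\subset\bigl[\tfrac{1}{2},\tfrac{3}{2}\bigr]\|\bA\bx\|_2$ uniformly for all $\bx\in\mathbb{R}^d$.

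For the large-coordinate property, choose the $1$-Lipschitz truncated ramp $f(t)=\max(0,\min(1,|t|-c'))$ for a small absolute constant $c'>0$, and let $\alpha:=\Exp_{Z\sim\mathcal{N}(0,1)}[f(Z)]>0$. Apply Theorem \ref{thm:cn_22} with $\varepsilon=\alpha/2$ and $\delta=\tfrac{1}{200}$; the required $m'=O(\varepsilon^{-2}\log^{5}(\ell/\varepsilon)\log(1/\delta))=O(\log^{5}\ell)$ again matches our setting. On the resulting $0.995$-probability event, for \emph{every} $\bz\in\mathbb{R}^\ell$ with $\|\bz\|_2=1$ we have $\frac{1}{m'\ell}\sum_{i}f(h(\bz)_i)\ge\alpha/2$, and since $f\le\mathbf{1}\{|\cdot|>c'\}$ pointwise, at least $(\alpha/2)\,m'\ell$ of the entries of $h(\bz)$ exceed $c'$ in absolute value. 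Instantiating this with $\bz=\bS_1\bA\bx/\|\bS_1\bA\bx\|_2$ for any unit-norm $\bA\bx$ (valid because $\|\bS_1\bA\bx\|_2\in[\tfrac{9}{10},\tfrac{11}{10}]$ makes $\bz$ a valid unit vector) and then rescaling by the $1/\sqrt{m'\ell}$ normalization in $\bM$ shows that at least $(\alpha/2)\,m$ coordinates of $\bM\bA\bx$ exceed $\tfrac{9c'/10}{\sqrt{m}}$ in magnitude, giving the statement with $C=\alpha/2$ and $c=9c'/10$. A union bound over the OSNAP event and the two events from Lemma \ref{lem:srht_spec_bnd} and Theorem \ref{thm:cn_22} yields total failure probability at most $3/200<1/100$.

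The step that requires the most care is arranging that $m'$ stays truly $\polylog(n)$ while handling both the subspace embedding and the flattening. The key observation is that the requirements from Lemma \ref{lem:srht_spec_bnd} and Theorem \ref{thm:cn_22} depend only on $\log(1/\delta)$ for a fixed constant $\delta$, so we avoid any net argument over the $k$-dimensional column space of $\bA$ that would otherwise force $m'$ to depend polynomially on $k$; the surviving $\log^{5}\ell$ factor is polylogarithmic in $n$ because $\ell=O(k\log k)\le\poly(n)$. Finally, the scale-invariance of both claims is immediate by homogeneity, extending the guarantees from $\|\bA\bx\|_2=1$ to arbitrary $\bx\in\mathbb{R}^d$.
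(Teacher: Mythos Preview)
Your proof is correct and follows essentially the same approach the paper relies on. The paper states Corollary~\ref{cor:rht} without proof, as an immediate consequence of Lemma~\ref{lem:srht_spec_bnd} (for the norm-preservation half) and Theorem~\ref{thm:cn_22} applied with a Lipschitz approximation of an indicator (for the large-coordinate half); your truncated-ramp argument is exactly the device the paper later spells out in Lemma~\ref{lem:srht_anti_conc}.

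The one structural difference is that you explicitly precompose with an OSNAP matrix to reduce to $\ell=O(k\log k)$ before stacking the Hadamard blocks, whereas the paper leaves this implicit in the corollary and performs the OSNAP reduction separately in the proof of Theorem~\ref{thm:polylog:se} before invoking Corollary~\ref{cor:rht}. Your composition is necessary to make the row count $m=k\,\polylog(n)$ hold for arbitrary $n\gg k$ (the bare stacked RHT of Definition~\ref{def:rht} \emph{increases} dimension to $n\,\polylog(n)$), so your construction matches the stated corollary more literally. The paper's usage sidesteps this because it only ever applies the corollary with $n=O(k\log k)$ already. Either way, the two arguments are the same in substance.
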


\paragraph{Leverage scores.}
A non-oblivious construction of a subspace embedding uses the notion of leverage score sampling. 
For a matrix $\bA\in\mathbb{R}^{n\times d}$, the leverage score of row $\ba_i$ with $i\in[n]$ is defined as $\ba_i(\bA^\top\bA)^{-1}\ba_i^\top$. 
Equivalently, for the singular value decomposition $\bA=\bU\bSigma\bV$, the leverage score of row $\ba_i$ is also the squared row norm of $\bu_i$. 
Thus, it is apparent that the sum of the leverage scores of $\bA$ is at most the rank of $\bA$, since the columns of $\bU$ are orthogonal.
\begin{theorem}[Generalization of Foster's Theorem, \cite{foster1953stochastic}]
\label{thm:sum:lev:scores}
For a matrix $\bA\in\mathbb{R}^{n\times d}$, the sum of its leverage scores is $\rank(\bA)$. 
\end{theorem}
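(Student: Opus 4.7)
The plan is to reduce the claim to the basic fact that an orthonormal matrix with $k$ columns has Frobenius norm squared exactly $k$. To do this, I will work with the (thin) singular value decomposition $\bA = \bU\bSigma\bV^\top$, where $k = \rank(\bA)$, $\bU \in \mathbb{R}^{n \times k}$ and $\bV \in \mathbb{R}^{d \times k}$ have orthonormal columns, and $\bSigma \in \mathbb{R}^{k \times k}$ is diagonal with strictly positive entries. Since the statement does not assume $\bA$ has full column rank, it is important that $\bA^\top \bA$ is interpreted via the Moore--Penrose pseudoinverse, consistent with the paper's convention that $\bA^{-1}$ denotes the pseudo-inverse.

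With that convention, I would first compute
\begin{equation*}
(\bA^\top \bA)^{+} = (\bV\bSigma^2\bV^\top)^{+} = \bV\bSigma^{-2}\bV^\top,
\end{equation*}
which is justified because $\bV$ has orthonormal columns and $\bSigma$ is invertible on its range. Next, writing row $i$ of $\bA$ as $\ba_i = \bu_i^\top \bSigma \bV^\top$ (where $\bu_i \in \mathbb{R}^k$ denotes the $i$-th row of $\bU$ viewed as a column vector), a direct cancellation gives
\begin{equation*}
\ba_i (\bA^\top \bA)^{+} \ba_i^\top = \bu_i^\top \bSigma \bV^\top \bV \bSigma^{-2} \bV^\top \bV \bSigma \bu_i = \bu_i^\top \bu_i = \norm{\bu_i}_2^2,
\end{equation*}
identifying the leverage score of $\ba_i$ with the squared row norm of $\bU$.

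Summing over $i \in [n]$ then yields
\begin{equation*}
\sum_{i=1}^n \ba_i(\bA^\top\bA)^{+}\ba_i^\top = \sum_{i=1}^n \norm{\bu_i}_2^2 = \norm{\bU}_F^2 = \mathrm{Tr}(\bU^\top \bU) = \mathrm{Tr}(I_k) = k = \rank(\bA),
\end{equation*}
which is the desired identity. The only subtlety is handling the rank-deficient case cleanly; once we commit to the pseudoinverse and the thin SVD, the argument is a one-line cancellation, and I do not anticipate any technical obstacle.
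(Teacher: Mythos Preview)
Your proof is correct and follows exactly the approach the paper itself sketches just before stating the theorem: identify the leverage scores with the squared row norms of $\bU$ in the thin SVD and sum them to get $\mrm{Tr}(\bU^\top\bU)=\rank(\bA)$. The paper does not give a formal proof (it cites the result), but its one-line remark that ``the leverage score of row $\ba_i$ is also the squared row norm of $\bu_i$'' is precisely the computation you carry out, so there is no meaningful methodological difference.
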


It is well-known that leverage score sampling can generate a non-oblivious subspace embedding:
\begin{theorem}[Leverage score sampling]
\label{thm:lev:score:sample}
\cite{DrineasMM06a,DrineasMM06b,MagdonIsmail10,Woodruff14}
Given a matrix $\bA\in\mathbb{R}^{n\times d}$, let $\tau_i$ be the leverage score of the $i$-th row of $\bA$. 
Let $C>1$ be a universal constant and $\alpha>1$ be a parameter and suppose that $p_i\in\left[\min\left(1,\frac{C\tau_i\log k}{\eps^2}\right),\min\left(1,\frac{C\alpha\tau_i\log k}{\eps^2}\right)\right]$ for each $i\in[n]$. 
Let $\bS$ be a random diagonal matrix so that the $i$-th diagonal entry of $\bS$ is $\frac{1}{\sqrt{p_i}}$ with probability $p_i$ and $0$ with probability $1-p_i$. 
Then with probability at least $0.99$, for all vectors $\bx\in\mathbb{R}^d$,
\[(1-\eps)\|\bA\bx\|_2\le\|\bS\bA\bx\|_2\le(1+\eps)\|\bA\bx\|_2.\]
Moreover, $\bS$ has at most $\O{\frac{\alpha}{\eps^2}\,d\log d}$ nonzero entries with probability at least $1-e^{-\Theta(d)}$. 
\end{theorem}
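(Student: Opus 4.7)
The plan is to reduce the subspace-embedding guarantee to a spectral concentration bound for a sum of independent random matrices, then apply the Matrix Bernstein inequality (Theorem~\ref{thm:matrix:bern:ineq}). Let $\bA = \bU\bSigma\bV^\top$ be the thin SVD, where $\bU \in \mathbb{R}^{n \times k}$ has orthonormal columns and $k = \rank(\bA)$. Since every vector of the form $\bA\bx$ lies in the column space of $\bU$ and $\|\bU\by\|_2 = \|\by\|_2$ for all $\by \in \mathbb{R}^k$, the stated $(1\pm\eps)$-embedding is equivalent to $(1-\eps)\|\by\|_2 \le \|\bS\bU\by\|_2 \le (1+\eps)\|\by\|_2$ for every $\by \in \mathbb{R}^k$, which in turn is implied by the spectral bound $\|\bU^\top\bS^\top\bS\bU - \bI_k\|_2 \le \eps$.

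To set up Matrix Bernstein, I would write
\[\bU^\top\bS^\top\bS\bU - \bI_k \;=\; \sum_{i=1}^n \bZ_i, \qquad \bZ_i := \left(\frac{\chi_i}{p_i} - 1\right)\bu_i\bu_i^\top,\]
where $\bu_i \in \mathbb{R}^k$ denotes the $i$-th row of $\bU$ and $\chi_i$ is the indicator that row $i$ is sampled. Each $\bZ_i$ has zero expectation, and by definition $\|\bu_i\|_2^2 = \tau_i$ is exactly the leverage score. Using the lower bound $p_i \ge \min(1,\,C\tau_i\log k/\eps^2)$, one obtains the uniform bound $\|\bZ_i\|_2 \le \tau_i/p_i \le \eps^2/(C\log k)$ (with $\bZ_i = 0$ when $p_i = 1$). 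For the variance, a direct computation using $(\bu_i\bu_i^\top)^2 = \tau_i\,\bu_i\bu_i^\top$ gives
\[\sum_{i=1}^n \Ex{\bZ_i^2} \;=\; \sum_{i=1}^n \frac{1-p_i}{p_i}\,\tau_i\,\bu_i\bu_i^\top \;\preceq\; \frac{\eps^2}{C\log k}\sum_{i=1}^n \bu_i\bu_i^\top \;=\; \frac{\eps^2}{C\log k}\,\bI_k,\]
since $\bU^\top\bU = \bI_k$. Matrix Bernstein with $t = \eps$ therefore yields
\[\PPr{\norm{\sum_i \bZ_i}_2 \ge \eps} \;\le\; 2k\exp\!\left(-\Omega(C\log k)\right),\]
which is at most $0.01$ for a sufficiently large absolute constant $C$.

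For the bound on $\nnz(\bS) = \sum_i \chi_i$, I would observe that this is a sum of independent Bernoulli indicators with
\[\Ex{\nnz(\bS)} \;=\; \sum_{i=1}^n p_i \;\le\; \sum_{i=1}^n \frac{C\alpha\tau_i\log k}{\eps^2} \;\le\; \frac{C\alpha\,d\log d}{\eps^2},\]
where the last step uses $\sum_i \tau_i = \rank(\bA) \le d$ from Theorem~\ref{thm:sum:lev:scores}. A standard multiplicative Chernoff bound then gives $\nnz(\bS) = \O{\frac{\alpha}{\eps^2}\,d\log d}$ with probability at least $1 - e^{-\Theta(d)}$, as claimed.

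The main point requiring care is the calibration of the constant $C$ in the sampling probabilities against the Matrix Bernstein tail: because both the per-term operator norm $R$ and the variance $\sigma^2$ scale like $\eps^2/(C\log k)$, the Bernstein exponent at $t = \eps$ is $\Omega(C\log k)$ once $\eps \le 1$, so enlarging $C$ simultaneously kills the $2k$ prefactor and allows one to absorb the $Rt/3$ subexponential term. No step is deep; the work is in carrying constants cleanly through the variance computation and ensuring the zero-mean decomposition is set up against $\bI_k$ rather than $\bI_n$.
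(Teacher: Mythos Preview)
Your proof is correct and follows the standard Matrix Bernstein argument for leverage-score subspace embeddings. However, note that the paper does not actually prove this statement: Theorem~\ref{thm:lev:score:sample} is stated in the preliminaries with citations to \cite{DrineasMM06a,DrineasMM06b,MagdonIsmail10,Woodruff14} and no proof is given. What you have written is essentially the argument one finds in those references (in particular the treatment in \cite{Woodruff14}): reduce to the orthonormal basis $\bU$, write $\bU^\top\bS^\top\bS\bU - \bI_k$ as a sum of independent rank-one mean-zero matrices, bound $R$ and $\sigma^2$ via $\tau_i/p_i \le \eps^2/(C\log k)$, and invoke Matrix Bernstein. So there is nothing to compare against in this paper; your proposal is the standard proof the citations point to.

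One small remark on the tail-bound for $\nnz(\bS)$: to get failure probability $e^{-\Theta(d)}$ you should apply Chernoff at a fixed threshold $t = \Theta(\alpha d\log d/\eps^2)$ rather than at $2\Ex{\nnz(\bS)}$, since $\Ex{\nnz(\bS)}$ itself could be much smaller than $d$ (e.g., if $\rank(\bA) \ll d$). The upper-tail Chernoff bound $\Pr[X \ge t] \le e^{-\Omega(t)}$ for $t \ge 2\mu$ then gives the claim directly. This is a cosmetic point and does not affect the correctness of your outline.
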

Leverage scores are particularly useful because Theorem \ref{thm:sum:lev:scores} upper bounds the sum of the leverage scores by the rank of the matrix, which is at most $d$ for an input matrix $\bA\in\mathbb{R}^{n\times d}$ with $n\ge d$. 
Thus Theorem \ref{thm:lev:score:sample} implies that only $\O{d\log d}$ rows of $\bA$ need to be sampled for a constant factor subspace embedding of $\bA$, given constant-factor approximations to the leverage scores of $\bA$. 

\paragraph{Gradient descent for linear regression.}
Gradient descent is a well-known iterative method for finding a local minimum of a differentiable function $f$. 
Given a learning rate $\eta>0$ and a point $\bx_n\in\mathbb{R}^n$ for iteration $n$, the point $\bx_{n+1}$ for iteration $n+1$ is defined by
\[\bx_{n+1}=\bx_n+\eta\nabla f(\bx_n),\]
where $\nabla f(\bx_n)$ is the gradient of $f$ at $\bx_n$. 
It is known that the learning rate $\eta$ can be explicitly chosen so that gradient descent achieves the following convergence rate guarantees:
\begin{theorem}[Convergence of gradient descent, e.g., Theorem 3 in \cite{Singer16}]
\label{thm:gd:iterations}
For a convex set $S\subseteq\mathbb{R}^d$, let $f:S\to\mathbb{R}$ be strongly convex on $S$, so that there exist $M>m>0$ such that $mI_d\preceq\nabla^2 f\preceq MI_d$. 
Then for any $\zeta>0$, we have $f(x^{(k)})-\min_{x\in S}f(x)\le\eps$, i.e., $k$ iterations of gradient descent suffice to obtain an additive $\zeta$-approximation, for 
\[k\ge\frac{\log\frac{f(x^{(0)})-y^*}{\zeta}}{\log\frac{M}{M-m}},\]
where $y^*=\min_{x\in S}f(x)$.
\end{theorem}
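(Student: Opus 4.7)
The plan is to give the classical descent-lemma argument, using $M$-smoothness to show per-iteration progress and $m$-strong convexity to relate the squared gradient norm to the suboptimality gap. Throughout, I will take the learning rate $\eta = 1/M$ (the sign in the update rule in the statement is for a minimization step, i.e.\ $\bx_{n+1}=\bx_n-\eta\nabla f(\bx_n)$), and write $y^*=\min_{\bx\in S}f(\bx)$, $\Delta_n := f(\bx_n)-y^*$.

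First I would establish the one-step progress bound. Because $\nabla^2 f \preceq M I_d$, the standard second-order Taylor expansion (integrating along the segment from $\bx_n$ to $\bx_{n+1}$) yields the descent inequality
\[
f(\bx_{n+1}) \;\le\; f(\bx_n) + \nabla f(\bx_n)^\top (\bx_{n+1}-\bx_n) + \frac{M}{2}\|\bx_{n+1}-\bx_n\|_2^2.
\]
Substituting $\bx_{n+1}-\bx_n = -\frac{1}{M}\nabla f(\bx_n)$ gives $f(\bx_{n+1}) \le f(\bx_n) - \frac{1}{2M}\|\nabla f(\bx_n)\|_2^2$.

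Next I would convert this absolute gradient-norm progress into multiplicative progress on $\Delta_n$. The $m$-strong convexity inequality $f(\by) \ge f(\bx) + \nabla f(\bx)^\top(\by-\bx) + \frac{m}{2}\|\by-\bx\|_2^2$, minimized over $\by\in S$, gives the Polyak--Lojasiewicz-type bound $y^* \ge f(\bx_n) - \frac{1}{2m}\|\nabla f(\bx_n)\|_2^2$, i.e.\ $\|\nabla f(\bx_n)\|_2^2 \ge 2m\,\Delta_n$. Plugging this into the descent inequality yields
\[
\Delta_{n+1} \;\le\; \Delta_n - \frac{m}{M}\Delta_n \;=\; \Bigl(1-\tfrac{m}{M}\Bigr)\Delta_n.
\]
Iterating this contraction from $n=0$ to $n=k-1$ gives $\Delta_k \le (1-m/M)^k \Delta_0 = (M-m)^k M^{-k}\,\Delta_0$.

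Finally I would solve for the required $k$. To ensure $\Delta_k \le \zeta$ it suffices to require $(M/(M-m))^k \ge \Delta_0/\zeta$, which rearranges to the stated bound
\[
k \;\ge\; \frac{\log(\Delta_0/\zeta)}{\log(M/(M-m))} \;=\; \frac{\log\frac{f(\bx^{(0)})-y^*}{\zeta}}{\log\frac{M}{M-m}}.
\]
There is no real obstacle here since both ingredients (the quadratic upper bound from smoothness and the PL-type lower bound from strong convexity) are one-line consequences of the Hessian sandwich $mI_d \preceq \nabla^2 f \preceq MI_d$; the mild technicality is that when the domain $S$ is a strict subset of $\mathbb{R}^d$ one must either argue that the unconstrained step $\bx_n - \frac{1}{M}\nabla f(\bx_n)$ remains feasible (e.g.\ by taking $S = \mathbb{R}^d$ in our regression application) or replace the step by its projection onto $S$, in which case the same contraction holds by the nonexpansiveness of the projection.
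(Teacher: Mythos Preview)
The paper does not actually prove this theorem: it is stated in the preliminaries with a citation to \cite{Singer16} and used as a black box. Your argument is the standard and correct proof---descent lemma from $M$-smoothness plus the Polyak--\L{}ojasiewicz inequality from $m$-strong convexity yields the contraction $\Delta_{n+1}\le(1-m/M)\Delta_n$, and iterating gives exactly the stated iteration bound. Your caveat about the constrained case is appropriate; in the paper's application (least-squares regression over $\mathbb{R}^d$) the domain is all of $\mathbb{R}^d$, so feasibility of the unprojected step is not an issue.
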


In our context, we use gradient descent to approximately solve linear regression, i.e., $\min_{\bx\in\mathbb{R}^d}f(\bx):=\min_{\bx\in\mathbb{R}^d}\|\bA\bx-\bb\|_2$ for an input matrix $\bA\in\mathbb{R}^{n\times d}$ and a vector $\bb\in\mathbb{R}^n$. 
This is equivalent to minimizing the squared objective value $\min_{\bx\in\mathbb{R}^d}f(\bx):=\min_{\bx\in\mathbb{R}^d}\|\bA\bx-\bb\|_2^2$. 
In this case, $\nabla f(\bx)=2\bA^\top\bA\bx-2\bA^\top\bb$, which can be explicitly computed from $\bA$ and $\bb$. 
However, if the dimensions of $\bA$ are prohibitively large, it is often desirable to first apply dimensionality reduction techniques to decrease the input size. 

\section{Constant-Factor Subspace Embedding}
\label{sec:constant:se}
In this section, we describe our constant-factor subspace embedding. 
We first require a crude polylogarithmic approximate subspace embedding, which we describe in Section \ref{sec:polylog:se}. 
Our polylogarithmic subspace embedding employs a recently observed property of the SRHT to ``spread'' the mass of an input vector among a large number of coordinates. 
Since a large number of coordinates have a ``large'' amount of mass, it suffices by standard concentration inequalities to uniformly sample rows after the SRHT is applied. 

We then show how to utilize the polylogarithmic subspace embedding to boost the approximation guarantees into a constant-factor subspace embedding in Section \ref{sec:sdp:se}.
Namely, some of the sampled rows in the SRHT could be too large. 
Thus we use fast semidefinite programming to reweight sampled rows of the SRHT to achieve a constant factor approximation. 
We provide more details of this high-level approach in the individual sections.

\subsection{Polylogarithmic Subspace Embedding}
\label{sec:polylog:se}
In this section, we describe our polylogarithmic distortion subspace embedding. 
We first show that after applying the Hadamard Transform, a constant fraction of the resulting vector has ``large'' coordinates.
\begin{lemma}
\label{lem:srht_anti_conc}
There exist constants $c, C > 0$ such that for the function $h$ defined in Definition~\ref{def:rht}:
\begin{equation*}
\forall\bx \in \R^d \text{ s.t } \norm{\bx}_2 = 1: \frac{1}{md} \cdot \sum_{i = 1}^{md} \textbf{1} \lbrb{\abs{h(\bx)_i} \geq c} \geq 0.999 \cdot md
\end{equation*}
with probability at least $1 - \frac{1}{d^{10}}$ as long as $m \geq C \log^6 (d)$.
\end{lemma}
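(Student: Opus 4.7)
The plan is to apply \thmref{cn_22} to a carefully chosen Lipschitz surrogate for the indicator $\mathbf{1}\{|t|\geq c\}$. Concretely, I would fix a small constant $c>0$ (to be pinned down at the end) and define the tent/ramp
\[
f(t) \;=\; c\cdot \min\!\bigl(1,\; \tfrac{1}{c}\max(0,\, |t|-c)\bigr),
\]
which is $1$-Lipschitz, non-negative, pointwise bounded above by $c\cdot\mathbf{1}\{|t|\geq c\}$, and equal to $c$ whenever $|t|\geq 2c$. The role of the prefactor $c$ is exactly to absorb the $1/c$-Lipschitz constant of the raw ramp, so that \thmref{cn_22} (which requires Lipschitz constant $1$) applies directly.

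Next I would instantiate \thmref{cn_22} with this $f$, a small accuracy parameter $\varepsilon>0$ and failure probability $\delta=d^{-10}$. Since each coordinate of $h(\bx)$ is marginally $\mc{N}(0,\|\bx\|^2)=\mc{N}(0,1)$ for unit $\bx$, the theorem gives, uniformly over unit $\bx$ and with probability at least $1-d^{-10}$,
\[
\frac{1}{md}\sum_{i=1}^{md} f(h(\bx)_i) \;\geq\; \Exp_{Z\sim\mc{N}(0,1)}[f(Z)] - \varepsilon,
\]
subject to $m \geq C\varepsilon^{-2}\log^5(d/\varepsilon)\log(1/\delta)$. For $\varepsilon$ a small constant and $\delta=d^{-10}$ this is $m = \Theta(\log^6 d)$, matching the stated hypothesis $m\geq C\log^6 d$.

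Finally I would evaluate the Gaussian expectation. Because $f(t)\geq c\cdot\mathbf{1}\{|t|\geq 2c\}$, a one-line bound on the standard normal density near the origin gives
\[
\Exp_{Z\sim\mc{N}(0,1)}[f(Z)] \;\geq\; c\cdot \Pr[|Z|\geq 2c] \;\geq\; c\bigl(1 - 2c\sqrt{2/\pi}\bigr).
\]
Choosing $c$ small enough (any $c\leq 10^{-4}$ amply suffices) and $\varepsilon$ small enough that $\Exp[f(Z)]-\varepsilon \geq 0.999\cdot c$, and then using $\mathbf{1}\{|t|\geq c\}\geq \frac{1}{c}f(t)$, yields
\[
\frac{1}{md}\sum_{i=1}^{md} \mathbf{1}\{|h(\bx)_i|\geq c\} \;\geq\; \frac{1}{c}\cdot\frac{1}{md}\sum_{i=1}^{md} f(h(\bx)_i) \;\geq\; 0.999,
\]
which is the claim (interpreting the stated right-hand side $0.999\cdot md$ as the equivalent bound $\sum_i \mathbf{1}\{|h(\bx)_i|\geq c\}\geq 0.999\cdot md$).

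The only subtle bookkeeping is the Lipschitz-constant budget: the ``natural'' approximation to $\mathbf{1}\{|t|\geq c\}$ is $1/c$-Lipschitz, so one must either rescale $f$ by $c$ (as I did) or equivalently shrink $\varepsilon$ in \thmref{cn_22} by a factor of $c$. Since $c$ is an absolute constant, this shrinkage only inflates the required $m$ by an absolute constant, which is swept into $C$; this is the one place where a careless argument could lose a $\log d$ factor and break the bound. The rest (showing the linear surrogate is bona fide $1$-Lipschitz, and the Gaussian density estimate) is elementary.
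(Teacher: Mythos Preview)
Your proposal is correct and follows essentially the same approach as the paper: both build a Lipschitz ramp sandwiched between $\mathbf{1}\{|t|\geq 2c\}$ and $\mathbf{1}\{|t|\geq c\}$, invoke \thmref{cn_22} with $\delta=d^{-10}$ and a constant accuracy to get the $m=\Theta(\log^6 d)$ requirement, and then read off the indicator bound. The only cosmetic difference is that you premultiply the ramp by $c$ to make it honestly $1$-Lipschitz before applying the theorem, whereas the paper leaves the ramp $(1/c)$-Lipschitz and absorbs the constant $1/c$ into the accuracy parameter; your version is slightly cleaner bookkeeping but the content is identical.
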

\begin{proof}
Consider the following approximation to an indicator function:
\begin{equation*}
f_c(x) = 
\begin{cases}
0 & \text{if } \abs{x} \leq c \\
\frac{\abs{x} - c}{c} & \text{if } c \leq \abs{x} \leq 2c \\
1 & \text{when } \abs{x} \geq 2c
\end{cases}
\end{equation*}
and let $c$ be small enough such that $\Phi (2c) - \Phi (-2c) \leq 10^{-5}$, where $\Phi$ is the CDF of a standard normal distribution.
Note that such a $c$ exists as the pdf of a standard normal random variable is upper bounded by $1$. 
Now, for large enough $C$ and noting that $f_c$ is a $(1 / c)$-Lipschitz function that $\forall\bx \in \R^d \text{ s.t } \norm{\bx}_2 = 1$, we have by Theorem \ref{thm:cn_22}:
\begin{equation*}
\frac{1}{md} \cdot \sum_{i = 1}^{md} f(h(\bx)_i) \geq \Exp_{Z \thicksim \mc{N} (0, 1)} [f (Z)] - 10^{-5} \geq (1 - (\Phi(c) - \Phi(-c))) - 10^{-5} \geq 1 - 2 \cdot 10^{-5}
\end{equation*}
with probability at least $1 - d^{-10}$. The lemma now follows from the fact that $f(x) \leq \textbf{1} \lbrb{\abs{x} \geq c}$. 
\end{proof}

We now describe our polylogarithmic distortion subspace embedding. 
Given an input matrix $\bA\in\mathbb{R}^{n\times d}$, we first apply OSNAP matrices $\bS_1$ and $\bS_2$ to obtain a constant-factor subspace embedding. 
The OSNAP matrix $\bS_2$ will have sparsity $\alpha=\O{1}$ and thus dimension $\O{d^{1.1}\log d}\times n$ for $\alpha=0.1$, for example. 
This OSNAP matrix will allow us to achieve $\polylog(d)$ dependencies rather than $\polylog(n)$ dependencies. 
Here sparsity $\alpha\in(0,1)$ means that a column of the OSNAP matrix will have $\frac{1}{\alpha\eps}$ nonzero entries, so that an application of the OSNAP matrix incurs runtime proportional to $\frac{1}{\alpha\eps}$.  
Next, the OSNAP matrix $\bS_1$ will have sparsity $\alpha'=\frac{1}{\log d}$ and thus dimension $\O{d\log d}\times\O{d^{1.1}\log d}$. 
The purpose of this OSNAP matrix is to slightly decrease the matrix multiplication time in our analysis, though we remark that without $\bS_1$, our analysis can still be performed but simply requiring a smaller value of $\alpha$ for $\bS_2$. 

Our polylogarithmic subspace embedding is simple. 
We apply an SRHT matrix $\bM$ with $d\,\polylog(d)$ rows to $\bS_1\bS_2\bA$. 
From standard results on Hadamard Transforms, $\bM\bS_1\bS_2\bA$ is actually a constant-factor subspace embedding for $\bA$. 
In light of Lemma \ref{lem:srht_anti_conc}, $\bM\bS_1\bS_2\bA\bx$ has a ``large'' number of coordinates that are ``large'', for any unit vector $\bA\bx\in\mathbb{R}^d$. 
Thus we can uniformly sample rows of $\bM\bS_1\bS_2\bA$ and achieve a ``good'' approximation to $\|\bA\bx\|_2$. 
Hence, our polylogarithmic subspace embedding is simply the matrix $\bS\bM\bS_1\bS_2\bA$, where $\bS$ is a matrix that uniformly samples $\O{d}$ rows independently. 

\begin{theorem}
\label{thm:polylog:se}
For any $\bA \in \R^{n \times d}$ and a tradeoff parameter $\alpha > 0$, we may compute matrix $\bG \in \R^{p \times n}$ such that:
\[\forall \bx \in \R^d: \norm{\bA\bx}_2 \leq \norm{\bG\bA\bx}_2 \leq \polylog(d)\norm{\bA\bx}_2,\]
with probability at least $0.9$ for some constant $\xi > 0$. Furthermore, we have $p = \O{d}$ and $\bG\bA$ may be computed in time $\O{\frac{\nnz(\bA)}{\alpha}} + d^{2 + \alpha} \polylog (d)$.
\end{theorem}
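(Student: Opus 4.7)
The plan is to analyze the four-layer construction previewed just before the statement: set $\bG = \bS\bM\bS_1\bS_2$, where $\bS_2$ is an OSNAP with sparsity $\alpha$ taking $n$ to $n' = \O{d^{1+\alpha}\log d}$ rows, $\bS_1$ is an OSNAP with sparsity $1/\log d$ taking $n'$ to $\ell = \O{d\log d}$ rows, $\bM$ is a stacked SRHT with $m\ell$ rows and $m = \polylog(d)$ blocks, and $\bS$ uniformly subsamples $p = \O{d}$ rows with rescaling $\sqrt{m\ell/p}$. By Theorem \ref{thm:osnap} applied twice, $\bS_1\bS_2\bA$ is already a constant-factor subspace embedding of $\bA$; by Corollary \ref{cor:rht}, so is $\bM\bS_1\bS_2\bA$, and the new structural feature provided by $\bM$ (Lemma \ref{lem:srht_anti_conc}) is that for every unit vector $\bA\bx$, a constant fraction of the $m\ell$ coordinates of $\by := \bM\bS_1\bS_2\bA\bx$ exceed $c/\sqrt{m\ell}$ in magnitude. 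What remains is to show that uniformly subsampling to $p = \O{d}$ rows preserves norms up to a $\polylog(d)$ factor, both above and below.

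The dilation bound is immediate: Lemma \ref{lem:srht_spec_bnd} gives $\|\bM\|_2 \leq 1+o(1)$ on all of $\R^\ell$, while $\bS$ has a single nonzero of magnitude $\sqrt{m\ell/p}$ per row, so $\|\bS\|_2 \leq \sqrt{m\ell/p} = \O{\sqrt{\polylog(d)}}$. Composing with the OSNAP subspace embedding yields $\|\bG\bA\bx\|_2 \leq \polylog(d)\|\bA\bx\|_2$. For the contraction, fix a unit $\bA\bx$: since a constant fraction of the coordinates of $\by$ exceed $c/\sqrt{m\ell}$ in magnitude, a Chernoff bound on the $p$ independent uniform samples shows that $\Omega(p)$ of them land on such good coordinates except with probability $e^{-\Omega(p)} = e^{-\Omega(d)}$, so $\|\bS\by\|_2^2 \geq (m\ell/p)\cdot\Omega(p)\cdot\Omega(1/(m\ell)) = \Omega(1)$. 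Taking a $(1/4)$-net $\calN$ of the unit sphere of $\mathrm{col}(\bA)$, of size $e^{\mathcal{O}(d)}$, and union-bounding gives the contraction at every point of $\calN$; the standard net-to-sphere extension via the already-proven dilation bound then promotes this to all unit vectors in $\mathrm{col}(\bA)$.

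The per-layer runtimes sum cleanly: $\O{\nnz(\bA)/\alpha}$ for $\bS_2\bA$; $\O{d^{2+\alpha}\polylog(d)}$ for $\bS_1(\bS_2\bA)$, since $\bS_1$ has $\O{\log d}$ nonzeros per column and $\bS_2\bA$ is $n' \times d$; $\O{d^2\polylog(d)}$ for $\bM(\bS_1\bS_2\bA)$ via the fast Walsh--Hadamard transform; and $\O{d^2}$ to extract the $p$ sampled rows, giving the claimed total. The main obstacle is arranging the net argument so that the Chernoff exponent exceeds $\log|\calN|$ by a constant; this forces the hidden constant in $p = \O{d}$ to be sufficiently large but does not affect its order, and is the only place where the analysis is not completely mechanical.
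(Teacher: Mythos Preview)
Your dilation argument is essentially fine (the bound $\|\bS\|_2 \leq \sqrt{m\ell/p}$ is not literally correct because the $p$ independently sampled rows can collide on the same coordinate, but the maximum multiplicity is $O(\log d)$ with high probability, which is absorbed into the $\polylog(d)$ distortion). The paper instead bounds $\|\bS\bM\|_2$ directly via the matrix Bernstein inequality, but either route yields a $\polylog(d)$ upper bound.

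The genuine gap is in your contraction argument, specifically the ``standard net-to-sphere extension.'' You establish $\|\bG\bA\bu\|_2 \geq c$ for every net point $\bu$ and then write, for arbitrary unit $\bA\bx$ and a $1/4$-close net point $\bA\bu$,
\[
\|\bG\bA\bx\|_2 \ \geq\ \|\bG\bA\bu\|_2 - \|\bG\|_2\,\|\bA\bx - \bA\bu\|_2 \ \geq\ c \ -\ \polylog(d)\cdot \tfrac14,
\]
which is negative for large $d$. To make the extension nontrivial you would need a net of resolution $\delta = O(1/\polylog(d))$, hence $|\calN| = (\polylog(d))^{O(d)} = \exp(O(d\log\log d))$, and then the Chernoff union bound forces $p = \Omega(d\log\log d)$, not $p = O(d)$. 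So the obstacle you flag in your last paragraph is not merely a matter of enlarging the constant in $p$; it changes the order of $p$. This is exactly the tension the introduction alludes to when it says the dilation cannot obviously be bounded by a constant.

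The paper sidesteps the net extension entirely. It observes that the statistic you are trying to control pointwise,
\[
\bu \ \longmapsto\ \frac{1}{p}\sum_{i\in T}\mathbf{1}\{|\langle \bbm_i,\bu\rangle|\geq c\},
\]
is an empirical average of halfspace (slab) indicators indexed by a $d$-dimensional parameter, a class of VC dimension $O(d)$. Standard VC uniform-convergence bounds give $\Exp\bigl[\sup_{\bu}|\text{deviation}|\bigr] = O(\sqrt{d/p})$, and McDiarmid's inequality (each sample changes the supremum by at most $1/p$) then yields the high-probability bound. This controls the count of ``large'' sampled coordinates \emph{simultaneously} for every unit $\bu$ in the span, so no net-to-sphere step is needed and $p = O(d)$ suffices.
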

\begin{proof}
Let $\bS_1$ and $\bS_2$ be OSNAP matrices (Theorem \ref{thm:osnap}) that induce a $2$-approximate subspace embedding, such that $\alpha=0.1$ for $\bS_2$ and $\alpha=\frac{1}{\log d}$ for $\bS_1$. Consider two successive applications of OSNAP matrices $\bS_1$ and $\bS_2$ to $\bA$ to obtain $\bB = \bS_1 \bS_2 \bA \in \R^{\ell\times d}$, where $\ell=\O{d\log d}$ by Theorem \ref{thm:osnap}. Note that by Theorem \ref{thm:osnap}, $\bS_2\bA$ can be computed in time $\O{\nnz(\bA)}$ and thus subsequently, $\bS_1\bS_2\bA$ can be computed in time $\O{d^{2.1}\log d}$. Moreover, the exponent $2.1$ is due to the choice $\alpha=0.1$ for $\bS_2$ and can be made any arbitrary constant greater than $2$. 

Next, consider an SRHT, characterized by matrix $\bM$ with $m = \polylog(\ell)=\polylog(d)$ rows. 
For the corresponding linear mapping $h$, we have:
\begin{gather*}
\forall \bx \in \R^\ell: (1 - \varepsilon) \norm{\bx}_2 \leq \frac{1}{\sqrt{m\ell}} \cdot \norm{h(\bx)}_2 \leq (1 + \varepsilon) \cdot \norm{\bx}_2 \\
\forall \bx \in \R^\ell \text{ s.t } \norm{\bx}_2 = 1: \frac{1}{m\ell} \cdot \sum_{i \in [m\ell]} \textbf{1} \lbrb{\abs{h(\bx)_i} \geq c} \geq 0.999\\
\forall i \in [m\ell]: \norm{\bbm_i}_2 \leq 2 \sqrt{\ell} \tag{SRHT-COND} \label{eq:srht_cond}
\end{gather*}
with probability at least $1 - 1 /\ell^{10}$ by Lemma \ref{lem:srht_spec_bnd}, Lemma \ref{lem:srht_anti_conc}, and the fact that the lengths of the rows of $\bM$ correspond to the length of one of $m$ independently distributed standard normal random vectors. 
Here, we use $\textbf{1}$ to denote an indicator variable and $\bbm_i$ to denote the $i$-th row of $\bM$. 

Next, consider a subsampling matrix, $\bS \in \R^{p \times m\ell}$ where each row of $\bS$ is uniformly sampled from the set of elementary vectors $\{\be_i\}_{i \in [m\ell]}$ for $p = Cd$ for some suitably large constant $C$. 
We now have by an application of the matrix Bernstein inequality, i.e., Theorem \ref{thm:matrix:bern:ineq}:
\begin{equation*}
\frac{1}{\sqrt{Cd}} \cdot \norm{\bS\bM}_2 \leq \polylog (d)
\end{equation*}
with probability at least $1 - 1 /\ell^{10}$. 
Now letting $T$ be the multiset of indices selected in the construction of $\bS$, consider the random variable:
\begin{equation*}
Z = \sup_{\bu \in \mathrm{Span} (\bB) \text{ s.t } \norm{\bu}_2 = 1} \abs{\frac{1}{p} \cdot \sum_{i \in T} \textbf{1} \lbrb{\inp{\bbm_i}{\bu} \geq c} - \Exp_{i \in [m\ell]} [\textbf{1} \lbrb{\inp{\bbm_i}{\bu} \geq c}]}.
\end{equation*}
Since $Z$ satisfies the bounded differences assumption with respect to the elements of $T$ and the fact that $Z$ corresponds to the empirical concentration of indicator functions of halfspaces of dimension $d$, we have by standard VC Theory and McDiarmid's inequality, i.e., Theorem \ref{thm:mcdiarmid}, that $Z \leq 0.001$ with probability at least $1 - 1 /\ell^{10}$. 
From this, we get that for a suitably large $C > 0$:
\begin{equation*}
\forall \bu \in \mrm{Span} (\bB): \frac{1}{\sqrt{p}} \cdot \norm{\bS\bM\bu}_2 \geq c \cdot \norm{\bu}_2 \implies \norm{\bu}_2 \leq \frac{C}{\sqrt{p}} \norm{\bS\bM\bu}_2 \leq \polylog (d) \cdot \norm{\bu}_2.
\end{equation*}
Hence, for the matrix $\bG\bA=\bS\bM\bS_1\bS_2\bA$, we have that $\bG\bA$ is a subspace embedding with $\polylog(d)$-distortion that can be computed in time $\O{\nnz(\bA)+d^{2.1}\log d}$. 
However, as we previously remarked, the exponent $2.1$ is due to the choice $\alpha=0.1$ for $\bS_2$ and can be made any arbitrary constant greater than $2$ with the tradeoff that the $\nnz(\bA)$ term becomes $\frac{\nnz(\bA)}{\alpha}$ in the application of the OSNAP matrix $\bS_2$ in Theorem \ref{thm:osnap}. 
Therefore, the overall runtime is $\O{\frac{\nnz(\bA)}{\alpha}} + d^{2 + \alpha} \polylog (d)$.
\end{proof}

\subsection{Constant-Factor Subspace Embedding}
\label{sec:sdp:se}
In this section, we describe how to improve our polylogarithmic factor subspace embedding into a constant-factor subspace embedding. 
We first require the following guarantees for (approximately) solving semidefinite programs (SDPs).

\begin{theorem}[Theorem 1.1 in \cite{PengTZ12}]
\label{thm:ptz:oneone}
For a primal positive SDP with $m\times m$ matrices and $n$ constraints and an accuracy parameter $\eps>0$, there exists an algorithm that produces a $(1+\eps)$-approximation in $\O{\frac{1}{\eps^3}\log^3 n}$ iterations, where each iteration consists of computing matrix sums and a special primitive that computes $\exp(\Phi)\bullet\bA$ for positive semidefinite matrices (PSD) $\Phi$ and $\bA$. 
\end{theorem}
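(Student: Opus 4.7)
The statement is quoted verbatim from Theorem~1.1 of \cite{PengTZ12}, so the cleanest route is to invoke it directly; if I had to reprove it from scratch, my plan would be to apply the matrix multiplicative weights update (MMWU) framework specialized to the positive/packing setting. First I would reduce the $(1+\eps)$-optimization question to an $(1+\eps)$-feasibility question by binary searching over a guess $\tau$ of the optimum: ``find $\bX\succeq 0$ with $\bC\bullet\bX\ge\tau$ and $\bA_i\bullet\bX\le(1+\eps)b_i$ for all $i\in[n]$.'' This reduction costs at most an $O(\log(1/\eps))$ overhead, which is absorbed into the final iteration bound.

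The iterative step then maintains a PSD potential $\Phi_t$ that aggregates the weighted constraint violations seen so far; the normalized exponential $\exp(\Phi_t)/\mathrm{Tr}(\exp(\Phi_t))$ plays the role of a density matrix over the $n$ ``experts,'' so that the per-iteration primitive is exactly the inner product $\exp(\Phi_t)\bullet\bA_i$ named in the theorem. Each iteration solves a single-constraint relaxation weighted by this density to obtain an incremental update $\bX_t$, then updates $\Phi_{t+1}\gets\Phi_t+\eta\bG_t$, where $\bG_t$ is a PSD gradient matrix built from the $\bA_i$'s and the current candidate. Positivity of the $\bA_i$ and of $\bC$ is used both to preserve PSDness of $\Phi_t$ and to bound the growth of the log-potential via the Golden--Thompson inequality; this is what gives the MMWU analysis a handle analogous to the scalar MWU case.

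The main technical obstacle is bringing the iteration count down to $O(\eps^{-3}\log^3 n)$, rather than the classical $O(\eps^{-2}\log n)$ bound coming from textbook MMWU. This requires a \emph{width-reduction} argument in which directions that are ``too aligned'' with the current density are truncated or rescaled, and the resulting truncation error must be charged against actual progress in the dual potential. Balancing this trade-off produces the extra $1/\eps$ factor, while the $\log^3 n$ factor reflects the combined effect of binary search, Chebyshev/Johnson--Lindenstrauss-style approximate evaluation of the matrix exponential primitive, and the parallel-friendly convergence analysis. Making all three logarithmic factors quantitatively tight, rather than merely $\mathrm{polylog}(n)$, is the delicate part that I would expect to consume most of the proof's effort.
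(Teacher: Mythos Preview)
Your first sentence is exactly right, and it is also exactly what the paper does: this theorem is stated as a black-box citation of \cite{PengTZ12} and is never proved in the paper. There is no ``paper's own proof'' to compare against, so invoking the cited result directly is the complete and correct argument here.

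Regarding your sketch of how one would reprove it from scratch: the high-level architecture (reduce to feasibility, run a matrix-multiplicative-weights-style update whose per-iteration primitive is $\exp(\Phi)\bullet\bA_i$, and use a width-reduction mechanism to get width-independence) is in the right spirit. Two small corrections, though. First, in the SDP setting the density matrix $\exp(\Phi)/\mrm{Tr}(\exp(\Phi))$ lives in the $m\times m$ matrix space, not over the $n$ constraints; the constraints enter through which $\bA_i$'s contribute to the gradient update, not as ``experts.'' Second, your framing of ``bringing the iteration count down to $O(\eps^{-3}\log^3 n)$ from the classical $O(\eps^{-2}\log n)$'' is inverted: the classical MMWU bound is $O(\rho\,\eps^{-2}\log m)$ with an explicit width parameter $\rho$, and the contribution of \cite{PengTZ12} is precisely to \emph{remove} the width dependence at the cost of the extra $1/\eps$ and the additional logarithmic factors. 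None of this affects the present paper, which only needs the statement.
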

Here, $\exp(\Phi)\bullet\bA$ denotes the pointwise dot product between matrices $\exp(\Phi)$ and $\bA$, i.e., the Hadamard product. 
\begin{theorem}[Theorem 4.1 in \cite{PengTZ12}]
\label{thm:ptz:fourone}
There exists an algorithm that takes input an $m\times m$ matrix $\Phi$ with $p$ nonzero entries, $\kappa\ge\max(1,\|\Phi\|_2)$, and PSD $m\times m$ matrices $\bA_i$ in factorized form $\bA_i=\bQ_i\bQ_i^\top$, where the total number of nonzero entries across all matrices $\bQ_i$ is $q$, and outputs a  $(1+\eps)$-approximation to all $\exp(\Phi)\bullet\bA_i$. 
The algorithm uses $\O{\frac{1}{\eps^2}\left(p\kappa\log\frac{1}{\eps}+q\right)\log m}$ total work. 
\end{theorem}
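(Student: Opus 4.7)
The plan is to reduce the computation of $\exp(\Phi)\bullet\bA_i$ to a collection of Frobenius-norm estimations and then handle those by a Johnson--Lindenstrauss sketch composed with a truncated polynomial approximation of the matrix exponential. Assuming $\Phi$ is symmetric (the relevant case for the mirror-descent SDP updates that invoke this primitive), $\exp(\Phi)=\exp(\Phi/2)^\top\exp(\Phi/2)$, so writing $\bA_i=\bQ_i\bQ_i^\top$ and using the cyclic property of the trace yields
\[\exp(\Phi)\bullet\bA_i \;=\; \mrm{Tr}\lprp{\exp(\Phi)\bQ_i\bQ_i^\top} \;=\; \mrm{Tr}\lprp{\bQ_i^\top\exp(\Phi)\bQ_i} \;=\; \norm{\exp(\Phi/2)\bQ_i}_F^2.\]
Writing $\bM := \exp(\Phi/2)$, it thus suffices to approximate $\norm{\bM\bQ_i}_F^2$ for every $i$ to a multiplicative $(1\pm\eps)$ factor.

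First I would draw a random Gaussian matrix $\bG\in\R^{k\times m}$ with $k=\O{\eps^{-2}\log m}$, so that by the standard Johnson--Lindenstrauss lemma and a union bound over the at most $\poly(m)$ columns appearing across all $\bQ_i$, the sketch $\norm{\bG\bM\bQ_i}_F^2$ is a $(1\pm\eps)$-approximation to $\norm{\bM\bQ_i}_F^2$ simultaneously for every $i$ with high probability. To evaluate $\bG\bM = \bG\exp(\Phi/2)$ efficiently, I approximate $\exp(\Phi/2)\bg$ for each row $\bg$ of $\bG$ by its degree-$d$ Taylor truncation $\sum_{j=0}^{d}(\Phi/2)^j\bg/j!$, with $d=\O{\kappa\log(1/\eps)}$ chosen so that the operator-norm error is at most an $\eps$-fraction of $\norm{\exp(\Phi/2)}_2$. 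Each matrix-vector product against $\Phi$ costs $\O{p}$ time since $\Phi$ has $p$ nonzero entries, so across all $k$ rows of $\bG$ the computation uses $\O{kdp}=\O{\eps^{-2}\,p\kappa\log(1/\eps)\log m}$ total work. Finally, applying the precomputed $k\times m$ matrix $\widehat{\bG\bM}$ to each $\bQ_i$ from the right takes $\O{k\cdot\nnz(\bQ_i)}$ time, summing to $\O{kq}=\O{\eps^{-2}\,q\log m}$ across all $i$; squaring the entries of each $\widehat{\bG\bM}\bQ_i$ and summing yields the estimate. Adding up, the total runtime matches the claimed $\O{\eps^{-2}\log m\,(p\kappa\log(1/\eps)+q)}$.

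The hard part will be composing both approximation layers into a \emph{multiplicative} $(1+\eps)$ guarantee on $\exp(\Phi)\bullet\bA_i$, which can itself be very small and thus sensitive to additive slack in either step. The standard remedy is to shift $\Phi \to \Phi - \kappa\bI$ so that $\norm{\exp(\Phi/2)}_2 \le 1$, ensuring that the Taylor truncation error translates directly into a relative error on $\norm{\bM\bQ_i}_F$; the resulting common prefactor $e^{\kappa/2}$ cancels in the multiplicative comparison. Taking $\eps$ to be a constant factor smaller at the outset then absorbs the composition of the JL error with the Taylor-truncation error into a final $(1+\eps)$ bound, and the $\log m$ JL budget suffices because the random projection $\bG$ is fixed independently of the polynomial expansion.
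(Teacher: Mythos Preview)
The paper does not prove this statement: Theorem~\ref{thm:ptz:fourone} is quoted verbatim as Theorem~4.1 of \cite{PengTZ12} and used as a black box in the runtime analysis of the constant-factor subspace embedding, so there is no ``paper's own proof'' to compare against.

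That said, your sketch is the standard reconstruction of the Peng--Tangwongsan--Zhang argument and is essentially correct. The reduction $\exp(\Phi)\bullet\bA_i=\norm{\exp(\Phi/2)\bQ_i}_F^2$, the Johnson--Lindenstrauss sketch to reduce to $\O{\eps^{-2}\log m}$ matrix-vector products, and the degree-$\O{\kappa\log(1/\eps)}$ Taylor truncation applied row-by-row to $\bG$ are exactly the ingredients that yield the stated work bound. Your observation about the shift $\Phi\mapsto\Phi-\kappa\bI$ is the right fix for the multiplicative-versus-additive issue: after the shift the eigenvalues of $\exp((\Phi-\kappa\bI)/2)$ lie in $[e^{-\kappa},1]$, so $\norm{\bM\bQ_i}_F\ge e^{-\kappa}\norm{\bQ_i}_F$ and a Taylor operator-norm error of $\eps e^{-\kappa}$ (still attainable with degree $\O{\kappa+\log(1/\eps)}$) becomes a genuine relative error on the Frobenius norm. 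One small point worth tightening: the JL dimension $k=\O{\eps^{-2}\log m}$ is justified not by a union bound over individual columns of the $\bQ_i$ (there could be far more than $\poly(m)$ of those) but by Hanson--Wright-type concentration of $\norm{\bG\bX}_F^2$ around $\norm{\bX}_F^2$ for each fixed matrix $\bX=\hat\bM\bQ_i$, followed by a union bound over the (assumed $\poly(m)$ many) indices $i$.
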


We now describe our constant-factor subspace embedding. 
Recall that our polylogarithmic distortion subspace embedding for an input matrix $\bA\in\mathbb{R}^{n\times d}$ is a matrix $\bS\bM\bS_1\bS_2\bA$, where $\bS$ is a matrix that randomly and independently samples $\O{d}$ rows, $\bM$ is an SRHT matrix with $d\,\polylog(d)$ rows and $\bS_1$ and $\bS_2$ are OSNAP matrices. 
Recall furthermore that from standard results for randomized Hadamard Transforms, $\bM\bS_1\bS_2\bA$ is already a constant-factor subspace embedding. 
The reason $\bS\bM\bS_1\bS_2\bA$ is not a constant-factor subspace embedding is because there are certain rows of $\bM\bS_1\bS_2\bA$ that are too large. 

We first show that with high probability, there exists a reweighting $\bW$ of the sampled rows so that $\bW\bS\bM\bS_1\bS_2\bA$ is a constant-factor subspace embedding of $\bA$. 
We can thus use semidefinite programming to efficiently compute such a set of weights and quickly output $\bW\bS\bM\bS_1\bS_2\bA$. 
A high-level description of our constant-factor subspace embedding is summarized in Figure \ref{fig:constant:se}. 

\thmse*
\begin{proof}
Let $\bS_1\in\mathbb{R}^{\O{\ell}\times\tO {d^{1 + \alpha}}}$ and $\bS_2\in\mathbb{R}^{\tO{d^{1 + \alpha}}\times n}$ be the OSNAP matrices defined in Theorem \ref{thm:polylog:se}, so that each matrix is a constant-factor subspace embedding and $\alpha = 0.1$ for $\bS_2$ and $\alpha = \frac{1}{\log d}$ for $\bS_1$. 
In particular, we have $\ell=\O{d\log d}$. 
Let $\bM$ be an SRHT with $m = \polylog(\ell)=\polylog(d)$ rows, as in Theorem \ref{thm:polylog:se}. 
Finally, we let $\bS \in \R^{p \times m\ell}$ be a subsampling matrix, where each row of $\bS$ is uniformly sampled from the set of elementary vectors $\{\be_i\}_{i \in [m\ell]}$ for $p = Cd$ for some suitably large constant $C$, as in Theorem \ref{thm:polylog:se}. Note that as in the proof of Theorem~\ref{thm:polylog:se}, we may set $\alpha$ for $\bS_2$ to any constant $\alpha > 0$ to obtain the final result.
    
Let $\bx_1, \dots, \bx_p$ denote the rows of $\bS\bM$, $\bB = \bS_1 \bS_2 \bA$ and $\bU$ be an orthonormal basis for $\mathrm{Span} (\bB)$. 
We will now find a set of weights $w \in \mc{W} \coloneqq \lbrb{w: \sum_{i \in [p]} w_i = 1 \text{ and } \forall i \in [p],\, 0 \leq w_i \leq \frac{2}{p}} \subset \R^p$ such that $\norm{\bU\bU^\top \cdot \bSigma_w \cdot \bU\bU^\top}_2$ is minimized where $\bSigma_w = \sum_{i \in [p]} w_i \bx_i \bx_i^\top$. We start by showing that this quantity is small with high probability. 
    
\begin{lemma}
\label{lem:const_fact_se_ub}
We have for some suitably large constant $C > 0$:
\begin{equation*}
\min_{w \in \mc{W}} \norm{\bU \bU^\top \cdot \bSigma_w \cdot \bU \bU^\top}_2 \leq C
\end{equation*}
With probability at least $1 - 1 /p^{10}$.
\end{lemma}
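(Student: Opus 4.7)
The plan is to exhibit explicit weights $w \in \mc{W}$ achieving the operator norm bound, rather than argue about the minimum over $\mc{W}$ abstractly. The obstacle to uniform weights $w_i = 1/p$ already giving such a bound is that a few sampled rows of $\bM$ may have atypically large projection $\|\bU^\top \bx_i\|_2$ onto $\mathrm{Span}(\bB)$; my strategy is to classify rows of $\bM$ as ``heavy'' or ``light'' based on this projection and zero out the weight on heavy samples.

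Specifically, I would fix a threshold $\tau = \Theta(d)$ and call $j \in [m\ell]$ a heavy row if $\|\bU^\top \bbm_j\|_2^2 > \tau$, and a light row otherwise; let $L^* \subseteq [m\ell]$ denote the set of light rows. The conditions established in the proof of Theorem~\ref{thm:polylog:se} imply that $\bM$ is a constant-factor subspace embedding on $\mathrm{Span}(\bB)$, so $\mrm{Tr}(\bU^\top \bM^\top \bM \bU) \le 2m\ell\cdot\rank(\bU) \le 2m\ell d$. Markov's inequality then gives that the heavy fraction is at most $2d/\tau$, and choosing $\tau$ a sufficiently large absolute constant multiple of $d$ makes this fraction at most $1/10$, so $|L^*| \ge 9m\ell/10$. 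A standard Chernoff bound on the i.i.d.\ uniform sampling by $\bS$ next ensures, with probability at least $1 - p^{-10}$, that at most $p/5$ of the $p$ samples land on heavy rows. Thus the subset $L \subseteq [p]$ of light samples satisfies $|L| \ge 4p/5$, and I would take $w_i = 1/|L|$ for $i \in L$ and $w_i = 0$ otherwise, yielding $w \in \mc{W}$ since $1/|L| \le 5/(4p) \le 2/p$.

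It then remains to bound $\|\frac{1}{|L|}\sum_{i \in L}(\bU^\top \bx_i)(\bU^\top \bx_i)^\top\|_2$ via matrix Bernstein (Theorem~\ref{thm:matrix:bern:ineq}) applied to the i.i.d.\ uniform samples from the light rows. The mean $\bm{\mu} := (1/|L^*|)\sum_{j \in L^*}\bU^\top\bbm_j\bbm_j^\top\bU$ has operator norm $O(1)$ by the subspace embedding property of $\bM$ (upper-bounding by $\bU^\top\bM^\top\bM\bU / |L^*|$); each rank-one term has operator norm at most $\tau/|L|$; and the matrix variance proxy has norm $O(\tau\|\bm{\mu}\|/|L|)$ by pulling out the threshold on $\|\bU^\top \bx_i\|^2$. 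Setting $p = Cd$ for a sufficiently large absolute constant $C$ makes both quantities $O(1)$ and yields the desired operator-norm bound on the deviation from the mean.

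The main obstacle is the delicate calibration of $\tau$ relative to $p$: $\tau$ must be large enough that the heavy fraction is small (so $|L| \ge p/2$ and the cap $w_i \le 2/p$ is respected), yet small enough that the per-sample operator norm and variance after normalization are absolute constants. A secondary subtlety is the $\log\rank(\bU)$ ambient-dimension factor arising in the standard matrix Bernstein bound; absorbing this into a constant either requires $C$ to be chosen correspondingly larger, or an appeal to the intrinsic-dimension refinement of Bernstein's inequality, exploiting that the variance proxy is proportional to $\bm{\mu}$ with $\mrm{Tr}(\bm{\mu})/\|\bm{\mu}\| \le \rank(\bU)$.
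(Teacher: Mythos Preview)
Your approach has a genuine gap: the matrix Bernstein step cannot close with $p = Cd$ for an absolute constant $C$. With your parameters, each centered summand has operator norm $R = O(\tau/|L|) = O(d/p)$ and the matrix variance has norm $\sigma^2 = O(\tau\|\bm\mu\|/|L|) = O(d/p)$; plugging into Theorem~\ref{thm:matrix:bern:ineq} gives a tail bound of the form $2d\cdot\exp(-\Omega(pt^2/d))$, so achieving deviation $t = O(1)$ with probability $1 - p^{-10}$ forces $p = \Omega(d\log d)$. Your proposed fixes do not work: making $C$ ``correspondingly larger'' means $C = \Theta(\log d)$, which violates $p = O(d)$; and the intrinsic-dimension refinement replaces the ambient $d$ by $\mrm{Tr}(V)/\|V\|$, but since $\bm\mu$ is essentially isotropic (it is a truncation of $\frac{1}{|L^*|}\bU^\top\bM^\top\bM\bU \approx I_d$), its intrinsic dimension is itself $\Theta(d)$, yielding no improvement. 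In fact the obstruction is real rather than an artifact of Bernstein: for isotropic vectors with $\|\bx\|^2 = \Theta(d)$, the uniform empirical covariance over $Cd$ samples can have operator norm $\Theta(\log d/\log\log d)$ (think of the coupon-collector example $\bx \in \{\sqrt{d}\,\be_j\}$), so your fixed uniform-on-light weights need not achieve an $O(1)$ bound.

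The paper's proof sidesteps this by \emph{not} exhibiting a single weight vector. It instead uses von Neumann's minimax identity $\min_{w\in\mc W}\max_{\bY\succeq 0,\,\mrm{Tr}\bY=1}\langle\bSigma_w,\bU\bU^\top\bY\bU\bU^\top\rangle = \max_{\bY}\min_{w}\langle\cdots\rangle$ and shows that for \emph{every} trace-one PSD $\bY$, at most $p/2$ of the samples satisfy $\bx_i^\top\bU\bU^\top\bY\bU\bU^\top\bx_i \ge r^2$ for a constant $r$, so the inner minimum (placing weight $2/p$ on the $p/2$ smallest) is $O(1)$. The crucial ingredient is a uniform bound on $Z_r = \sup_{\bu}\frac{1}{p}\sum_i\mathbf{1}\{|\langle \bu,\bx_i\rangle|\ge r\}$, established via symmetrization and Ledoux--Talagrand contraction (reducing to a \emph{linear} supremum whose expectation is $O(\sqrt{d/p})$), which avoids the $\log d$ loss inherent in matrix concentration. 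The weights that ultimately witness the bound are allowed to depend on the direction $\bY$, and their existence as a single vector follows only from duality --- this flexibility is precisely what your construction lacks.
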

\begin{proof}
We start by analyzing the quantity via the approach from \cite{lm19}:
\begin{equation*}
Z_r = \sup_{\bu \in \bU \text{ s.t } \norm{\bu}_2 = 1} \frac{1}{p} \cdot \sum_{i = 1}^p \textbf{1} \lbrb{\abs{\inp{\bu}{\bx_i}} \geq r}.
\end{equation*}
Note that $Z_r$ satisfies the bounded differences inequality with respect to the rows $x_i$ drawn i.i.d. from the rows of $\bM$. Furthermore, we have:
\begin{align*}
\Exp [Z_r] &\leq \frac{1}{r} \cdot \Exp \lsrs{\sup_{\bu \in \bU \text{ s.t } \norm{\bu}_2 = 1} \frac{1}{p} \cdot \sum_{i \in [p]} \abs{\inp{\bu}{\bx_i}}} \\
&\leq \frac{1}{r} \lprp{\Exp \lsrs{\sup_{\bu \in \bU \text{ s.t } \norm{\bu}_2 = 1} \frac{1}{p} \cdot \sum_{i \in [p]} \lprp{\abs{\inp{\bu}{\bx_i}} - \Exp_{\bx \in \bM} \abs{\inp{\bu}{\bx}}}} + \sup_{\bu \in \bU \text{ s.t } \norm{\bu}_2 = 1} \Exp_{\bx \in \bU} [\abs{\inp{\bu}{\bx}}]}
\end{align*}
Since $\bu$ and $\bx$ are both unit vectors, then $\Exp_{\bx \in \bU} [\abs{\inp{\bu}{\bx}}]\le 1$. 
Thus we have
\begin{align*}
\Exp [Z_r] &\leq \frac{1}{r} \lprp{\Exp_{\bx_i, \bx_i'} \lsrs{\sup_{\bu \in \bU \text{ s.t } \norm{\bu}_2 = 1} \frac{1}{p} \cdot \sum_{i \in [p]} \lprp{\abs{\inp{\bu}{\bx_i}} - \abs{\inp{\bu}{\bx_i'}}}} + 1},
\end{align*}
Note that $\abs{\inp{\bu}{\bx_i}} - \abs{\inp{\bu}{\bx_i'}}$ is a zero-mean random variable. 
Thus by using standard symmetrization arguments, i.e., Theorem \ref{thm:symmetrization}, we can insert Radamacher variables $\sigma_i$ as follows where $x_i'$ represent independent copies of $x_i$. 
Therefore,
\begin{align*}
\Exp [Z_r]&\leq \frac{2}{r} \lprp{\Exp_{\bx_i, \bx_i', \sigma_i} \lsrs{\sup_{\bu \in \bU \text{ s.t } \norm{\bu}_2 = 1} \frac{1}{p} \cdot \sum_{i \in [p]} \sigma_i \lprp{\abs{\inp{\bu}{\bx_i}} - \abs{\inp{\bu}{\bx_i'}}}} + 1} \\
&\leq \frac{2}{r} \lprp{\Exp_{\bx_i, \bx_i', \sigma_i} \lsrs{\sup_{\bu \in \bU \text{ s.t } \norm{\bu}_2 = 1} \frac{1}{p} \cdot \sum_{i \in [p]} \sigma_i \abs{\inp{\bu}{\bx_i}} + \sup_{\bu \in \bU \text{ s.t } \norm{\bu}_2 = 1} \frac{1}{p} \cdot \sum_{i \in [p]} - \sigma_i \abs{\inp{\bu}{\bx_i'}}} + 1} \\
&= \frac{4}{r} \lprp{\Exp_{\bx_i, \sigma_i} \lsrs{\sup_{\bu \in \bU \text{ s.t } \norm{\bu}_2 = 1} \frac{1}{p} \cdot \sum_{i \in [p]} \sigma_i \abs{\inp{\bu}{\bx_i}}} + 1}.
\end{align*}
Since the Rademacher random variables $\sigma_i \in \{\pm 1\}$ are chosen uniformly at random and independent of the $x_i$, we can remove the absolute values around $\langle u, x_i \rangle$ using Talagrand's contraction principle, i.e., Theorem \ref{thm:led_tal_cont}. 
Hence, by the definition of the operator norm,
\begin{align*}
\Exp [Z_r]&\leq \frac{4}{r} \lprp{\Exp_{\bx_i, \sigma_i} \lsrs{\sup_{\bu \in \bU \text{ s.t } \norm{\bu}_2 = 1} \frac{1}{p} \cdot \sum_{i \in [p]} \sigma_i \inp{\bu}{\bx_i}} + 1} \\
&= \frac{4}{r} \lprp{\Exp_{\bx_i, \sigma_i} \lsrs{\norm{\bU^\top \lprp{\frac{1}{p} \cdot \sum_{i \in [p]} \sigma_i \bx_i}}_2} + 1}.
\end{align*}
By convexity and Jensen's inequality, we have
\begin{align*}
\Exp [Z_r]&\leq \frac{4}{r} \lprp{\lprp{\Exp_{\bx_i, \sigma_i} \lsrs{\norm{\bU^\top \lprp{\frac{1}{p} \cdot \sum_{i \in [p]} \sigma_i \bx_i}}_2^2}}^{1/2} + 1} \\
&\leq \frac{4}{r} \lprp{\lprp{\Exp_{\bx_i, \sigma_i} \lsrs{
\mrm{Tr}\lprp{\bU^\top\left(\lprp{\frac{1}{p} \cdot \sum_{i \in [p]} \sigma_i \bx_i}\right)\left(\lprp{\frac{1}{p} \cdot \sum_{i \in [p]} \sigma_i \bx_i}^\top\right)\bU}}}^{1/2}+1}\\
&\leq \frac{4}{r} \cdot \lprp{\lprp{\mrm{Tr} \lprp{\bU^\top \lprp{\frac{1}{p} \cdot \Exp_{\bx \in \bM} [\bx\bx^\top]} \bU}}^{1/2} + 1},
\end{align*}
where the last inequality follows from the linearity of the trace operator. 
Since $\bU$ is an orthonormal basis for $\mathrm{Span}(\bB)$, a subspace of dimension $d$ and $\bM$ satisfies $\Exp_{x \in \bM} [xx^\top] \preccurlyeq 2 \cdot I$ (\ref{eq:srht_cond}):
\begin{align*}
\Exp [Z_r]\leq \frac{4}{r} \cdot \lprp{\lprp{\frac{2d}{p}}^{1/2} + 1}.
\end{align*}
Hence, we get for large enough $r$ that $\Exp [Z_r] \leq 0.0001$. 
The bounded differences inequality, i.e., Theorem \ref{thm:mcdiarmid},  now yields that $Z_r \leq 0.0002$ with probability at least $1 - 1 / \ell^{10}$. 
Now, we analyze the random variable:
\begin{equation*}
\min_{\bw \in \mc{W}} \norm{\bU\bU^\top \bSigma_w \bU\bU^\top}_2 = \min_{\bw \in \mc{W}} \max_{\bY \succcurlyeq 0, \mrm{Tr}(\bY) = 1} \inp{\bU\bU^\top \bSigma_w \bU\bU^\top}{\bY} = \max_{\bY \succcurlyeq 0, \mrm{Tr}(\bY) = 1} \min_{\bw \in \mc{W}} \inp{\bU\bU^\top \bSigma_w \bU\bU^\top}{\bY}
\end{equation*}
where the exchange of the min and max follows from von Neumann's equality. We will now show that for all $\bY \succcurlyeq 0, \mrm{Tr}(\bY) = 1$, we have $T_{\bY} \coloneqq \abs{\lbrb{i: \bx_i^\top \bU\bU^\top \bY \bU\bU^\top \bx_i \geq 32768 r^2}} < 0.5 p$ using an analysis similar to \cite{dl22}.
Suppose for the sake of contradiction, there exists such an $\bY$ satisfying this. Then, consider a Gaussian random variable $g \thicksim \mc{N} (0, \bY)$. We now have for all $i \in T_{\bY}$ by noting that $\bg^\top \bU\bU^\top \bx_i$ is a zero mean gaussian random variable with variance $\Exp [(\bg^\top \bU\bU^\top \bx_i)^2] = \Exp [\bx_i^\top \bU\bU^\top \bg\bg^\top \bU\bU^\top \bx_i] = \inp{\bY}{\bU\bU^\top \bx_i \bx_i^\top \bU\bU^\top} \geq 32768r^2$ and standard upper bounds on the pdf of a gaussian random variable:
\begin{gather*}
\P \lbrb{\abs{\bg^\top \bU\bU^\top \bx_i} \geq 8r} \geq \frac{9}{10} \\
\P \lbrb{\norm{\bg}_2 \leq 4} \geq \frac{9}{10}.
\end{gather*}
Hence, we get by a union bound on the above two events:
\begin{equation*}
\P \lbrb{\frac{1}{\norm{\bg}_2} \cdot \abs{\bg^\top \bU\bU^\top \bx_i} \geq 2r} \geq \frac{8}{10}.
\end{equation*}
And we get:
\begin{equation*}
p \cdot Z_r \geq \Exp_{\bg} \lsrs{\sum_{i \in T_{\bY}} \textbf{1} \lbrb{\frac{1}{\norm{\bg}_2} \cdot \abs{\bg^\top \bU\bU^\top \bx_i} \geq 2r}} \geq \frac{8}{10} \cdot \abs{T_{\bY}} \geq 0.4p,
\end{equation*}
which yields a contradiction and establishes the lemma. 
\end{proof}
    
The task of finding a suitable set of weights can be formulated as the following packing semi-definite program:
\begin{align*}
\max\ &1^\top \bw \\
\text{s.t } &\sum_{i \in [p]} w_i \bA_i \preccurlyeq C \cdot \bI \\
&w_i \geq 0 \\
&\text{where } \bA_i = \boC_i \boC_i^\top \text{ with }
\boC_i = 
\begin{bmatrix}
\bU\bU^\top \cdot \bx_i & 0 \\
0 & \sqrt{\frac{2}{p}} \cdot \be_i 
\end{bmatrix}.
\end{align*}
These families of SDPs may be solved to constant accuracy in time $d^2 \polylog (d)$ from Theorem \ref{thm:ptz:oneone} by noting that the matrix exponentials computed in Theorem \ref{thm:ptz:fourone} may be implemented with the fast projection oracle onto the span of $\bU$ via gradient descent to inverse polynomial accuracy. 
    
Towards concluding the proof, let $\bW$ denote the diagonal matrix with $W_{i,i} = \sqrt{w_i}$. We now have $\norm{\bW\bS \bU\bU^\top}_2 \leq C$ for some $C > 0$ from the constraints of the program. 

\begin{lemma}
    \label{lem:const_fact_se_lb}
    We have for some absolute $c > 0$:
    \begin{equation*}
        \forall \bw \in \mc{W}, \bu \in \bU \text{ s.t } \norm{\bu}_2 = 1: \bu^\top \lprp{\sum_{i \in [p]} w_i \cdot \bx_i \bx_i^\top} \bu \geq c
    \end{equation*}
    with probability at least $1 - 1 / p^{10}$.
\end{lemma}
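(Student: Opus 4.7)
The plan is to leverage the anti-concentration property of $\bM$ established in Lemma~\ref{lem:srht_anti_conc} together with the fact that the weights in $\mc{W}$ cannot be too concentrated on any small subset of indices. Specifically, I want to first show a uniform lower bound stating that for every unit vector $\bu$ in the span of $\bU$, all but a tiny constant fraction of the sampled rows $\bx_i$ satisfy $|\inp{\bx_i}{\bu}| \geq c$ for some absolute constant $c > 0$; then, since any $\bw \in \mc{W}$ satisfies $w_i \leq 2/p$ and $\sum_i w_i = 1$, the total mass placed on indices with small inner product is automatically $\O{1/p \cdot p} = \O{1}$ times a small constant, and the remaining mass gives the desired quadratic form lower bound.

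Concretely, first I would define
\[
Y = \sup_{\bu \in \bU,\, \norm{\bu}_2 = 1} \abs{\frac{1}{p} \sum_{i \in [p]} \textbf{1}\lbrb{\abs{\inp{\bu}{\bx_i}} < c} - \Exp_{j \in [m\ell]} \lsrs{\textbf{1}\lbrb{\abs{\inp{\bu}{\bbm_j}} < c}}}
\]
and, exactly as in the proof of Theorem~\ref{thm:polylog:se}, apply McDiarmid's inequality together with the VC-dimension bound for the class of halfspaces in $\mathrm{Span}(\bU)$ (which has dimension $d$) to show $Y \leq 0.001$ with probability at least $1 - 1/p^{10}$. Combined with Lemma~\ref{lem:srht_anti_conc}, which bounds the expectation by $0.001$ for suitably small $c$, this yields that with high probability, for every unit $\bu \in \bU$,
\[
\frac{1}{p} \cdot \abs{\lbrb{i \in [p] : \abs{\inp{\bu}{\bx_i}} < c}} \leq 0.002.
\]

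Conditioned on this event, fix any $\bw \in \mc{W}$ and any unit $\bu \in \bU$. Let $B_{\bu} = \lbrb{i \in [p] : \abs{\inp{\bu}{\bx_i}} < c}$, so $\abs{B_{\bu}} \leq 0.002 p$. Since each $w_i \leq 2/p$, the total weight assigned to indices in $B_{\bu}$ is at most $\abs{B_{\bu}} \cdot 2/p \leq 0.004$, and hence the weight on the complement is at least $0.996$. Therefore,
\[
\bu^\top \lprp{\sum_{i \in [p]} w_i \bx_i \bx_i^\top} \bu = \sum_{i \in [p]} w_i \inp{\bu}{\bx_i}^2 \geq \sum_{i \notin B_{\bu}} w_i \cdot c^2 \geq 0.996 \cdot c^2,
\]
which gives the lemma with constant $c' = 0.996 c^2$.

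The only non-routine step is the uniform VC-based concentration bound for $Y$, which proceeds identically to the argument already carried out in Theorem~\ref{thm:polylog:se}; everything else is an immediate consequence of the box constraint $w_i \leq 2/p$ defining $\mc{W}$. I expect no additional obstacle, since the weights are simultaneously bounded above in a way that precludes any adversarial reweighting from collapsing mass onto the small set of ``bad'' rows.
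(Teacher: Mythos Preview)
Your proposal is correct and follows essentially the same approach as the paper's own proof: both arguments use the SRHT anti-concentration property (Lemma~\ref{lem:srht_anti_conc}) together with a VC/McDiarmid uniform concentration bound to show that at least a $0.99$ fraction of sampled rows satisfy $|\inp{\bx_i}{\bu}| \geq c$ for every unit $\bu \in \bU$, and then exploit the box constraint $w_i \leq 2/p$ to conclude. The only cosmetic difference is that you phrase the concentration in terms of the ``bad'' set $\{i:|\inp{\bu}{\bx_i}|<c\}$ whereas the paper works directly with its complement.
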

\begin{proof}
    We have from \ref{eq:srht_cond}, that for any $\bu \in \bU$ with $\norm{\bu}_2 = 1$ for some $c > 0$:
    \begin{equation*}
        Q(\bu) \coloneqq \frac{1}{ml} \sum_{i \in [ml]} \textbf{1} \lbrb{\abs{h(\bu)_i} \geq c} \geq 0.999
    \end{equation*}
    Noting that the VC-dimension of halfspaces of dimension $d$ is $d + 1$, we have by \cite[Theorem 8.3.23]{vershynin2018high}:
    \begin{equation*}
        \Exp \lsrs{\underbrace{\sup_{\bu \in \bU, \norm{\bu}_2 = 1} \abs{\frac{1}{p} \cdot \sum_{i \in [p]} \textbf{1} \lbrb{\abs{\inp{\bx_i}{\bu}} \geq c} - Q(\bu)}}_{Q}} \leq C \cdot \sqrt{\frac{d}{p}}
    \end{equation*}
    for some absolute constant $C > 0$. Furthermore, noting that the random variable $Q$ satisfies the bounded differences inequality as the rows $\bx_i$ are drawn i.i.d. from the rows of $\bM$. Hence, we have by an application of the bounded differences inequality and the previous display that with probability at least $1 - 1 / p^{10}$:
    \begin{equation*}
        \forall \bu \in \bU \text{ s.t } \norm{\bu}_2 = 1: \frac{1}{p} \cdot \sum_{i \in [p]} \textbf{1} \lbrb{\abs{\inp{\bx_i}{\bu}} \geq c} \geq 0.99.
    \end{equation*}
    The lemma now follows from the fact that for any $\bw \in \mc{W}$, from the fact that $w_i \leq 2 / p$:
    \begin{equation*}
        \norm{\bW \bS \bM \bu}_2 = \sqrt{\sum_{i \in [p]} w_i \inp{\bx_i}{\bu}^2} \geq c \cdot \sqrt{\sum_{i \in [p]} w_i \textbf{1} \lbrb{\abs{\inp{\bx_i}{\bu}} \geq c}} \geq \frac{c}{2}.
    \end{equation*}
\end{proof}

To conclude the proof, note that $\bS_1 \bS_2$ is a valid constant-factor subspace embedding for $\bA$; that is, there exist constants $C^\prime, \xi^\prime > 0$ such that:
\begin{equation*}
    \forall \bx \in \R^d: \norm{\bA \bx}_2 \leq C^\prime \norm{\bS_1 \bS_2 \bA \bx}_2 \leq \xi^\prime \norm{\bA \bx}_2.
\end{equation*}
Furthermore, we have as a consequence of Lemmas~\ref{lem:const_fact_se_ub} and \ref{lem:const_fact_se_lb} that there exist constants $C'', \xi''$:
\begin{align*}
    \forall \bu \in \bU: \norm{\bu}_2 \leq C'' \norm{\bW \bS \bM \bu}_2 &= \sqrt{\bu^\top \cdot \lprp{\sum_{i \in [p]} w_i \bx_i\bx_i^\top} \cdot \bu} \\
    &= \sqrt{\bu^\top \bU\bU^\top \Sigma_{\bw} \bU\bU^\top \bu} \leq \sqrt{\norm{\bU\bU^\top\Sigma_\bw \bU\bU^\top}_2} \cdot \norm{\bu}_2 \leq \xi'' \norm{\bu}_2
\end{align*}
Recalling that $\bU$ is the span of $\bS_1 \bS_2 \bA$, the previous two displays yield:
\begin{equation*}
\forall \norm{\bx}_2 = 1: \norm{\bA\bx}_2 \leq C \norm{\bW\bS\bM\bS_1\bS_2\bA\bx}_2 \leq \xi \norm{\bA\bx}_2
\end{equation*}
for some constants $C,\xi > 0$ with probability at least $1 - 1 / d^{10}$.
    
\paragraph{Runtime analysis.} We start by showing that we may approximately project onto $\bU$ by computing a pre-conditioner, $\bR$, of $\bB$ such that $\bB \bR$ has condition number $\polylog (d)$:
\begin{lemma}
    We may compute in time $\O{d^{\omega}}$ a matrix $\bR$ such that $\bB \bR$ has condition number $\polylog(d)$.
\end{lemma}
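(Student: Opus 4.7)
The plan is to construct $\bR$ by computing a thin QR factorization of the polylogarithmic-distortion subspace embedding of $\bB$ and using its triangular factor as a preconditioner. First, I would observe that the matrix $\bS\bM\bB = \bS\bM\bS_1\bS_2\bA$ is precisely the polylog-distortion embedding produced by Theorem~\ref{thm:polylog:se}, so it has $O(d)$ rows. Moreover, since $\bS_1\bS_2$ is a constant-factor subspace embedding of $\bA$, the norms $\|\bB\bx\|_2$ and $\|\bA\bx\|_2$ agree up to constants, so composing with Theorem~\ref{thm:polylog:se} yields constants $c_1,c_2 > 0$ such that
\[
\|\bB\bx\|_2 \;\le\; c_1\|\bS\bM\bB\bx\|_2 \;\le\; c_2\polylog(d)\,\|\bB\bx\|_2 \qquad \forall\,\bx\in\R^d.
\]

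Next, I would compute a thin QR decomposition $\bS\bM\bB = \bQ\bR_0$ with $\bQ$ having orthonormal columns and $\bR_0\in\R^{d\times d}$ upper triangular. Since $\bS\bM\bB$ is an $O(d)\times d$ matrix, a fast block QR runs in $\O{d^\omega}$ time. I then set $\bR = \bR_0^{-1}$, which can be obtained from $\bR_0$ in $\O{d^\omega}$ time by recursive triangular inversion (reducible to matrix multiplication). By construction, $\bS\bM\bB\bR = \bQ$, so for every unit vector $\by\in\R^d$ we have $\|\bS\bM\bB\bR\by\|_2 = \|\bQ\by\|_2 = 1$.

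Combining this identity with the embedding inequality applied to the vector $\bR\by$ gives
\[
\frac{1}{c_2\polylog(d)} \;\le\; \|\bB\bR\by\|_2 \;\le\; c_1,
\]
for every unit $\by$, whence $\kappa(\bB\bR) \le \polylog(d)$ as claimed. The total runtime is dominated by the QR step and the triangular inversion, both $\O{d^\omega}$; applying the already-constructed embedding $\bS\bM$ to $\bB$ (if not cached) costs $\O{d^2 \polylog(d)}$, which is absorbed.

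I do not anticipate any genuine obstacle here; the construction is the standard sketch-and-precondition template (cf.\ the analogous preconditioners used in prior randomized linear algebra work). The only points requiring care are ensuring that the column spans of $\bA$ and $\bB$ agree so that Theorem~\ref{thm:polylog:se} transfers to an embedding for $\bB$, and that $\bR_0$ is invertible, which holds whenever $\bS\bM\bB$ has full column rank, and in particular on the $0.9$-probability event that it is a subspace embedding for $\bB$.
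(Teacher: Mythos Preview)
Your proposal is correct and follows essentially the same approach as the paper: compute a QR-based preconditioner of the $\O{d}\times d$ matrix $\bS\bM\bB$ in $\O{d^\omega}$ time, then use that $\bS\bM$ acts as a $\polylog(d)$-distortion subspace embedding on the span of $\bB$ to transfer the condition-number bound to $\bB\bR$. The only cosmetic difference is that the paper re-derives the $\polylog(d)$-embedding property of $\bS\bM$ on $\mathrm{Span}(\bB)$ in place (lower bound via Lemma~\ref{lem:const_fact_se_lb}, upper bound via matrix Bernstein on $\sum_i \bx_i\bx_i^\top$), whereas you invoke Theorem~\ref{thm:polylog:se} as a black box and pass through the constant-factor embedding $\bS_1\bS_2$; since the same random matrices $\bS,\bM,\bS_1,\bS_2$ are used in both theorems, this is legitimate.
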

\begin{proof}
    We start by computing a pre-conditioner of $\bS \bM \bS_1 \bS_2 \bA$, $\bR$, in time $\O{d^\omega}$ as $\bS\bM\bS_1\bS_2\bA$ is of dimension $\O{d} \times d$. Consequently, $\bS \bM \bS_1 \bS_2 \bA \bR$ has condition number $\polylog (d)$. The lemma will now follow by showing that $\bS \bM$ is a $\polylog(d)$-subspace embedding for $\bU$. The lower bound follows from Lemma~\ref{lem:const_fact_se_lb}. The upper bound now follows from applying matrix Bernstein (i.e.,  Theorem~\ref{thm:matrix:bern:ineq}) to the random matrix $\sum_{i \in [p]} \bx_i \bx_i^\top$ by noting that $\norm{\bx_i}_2 \leq \sqrt{d} \polylog (d)$ and $\Exp [(\bx_i\bx_i^\top)^2] \preccurlyeq 4 \ell \cdot \bI \preccurlyeq 4 d \polylog (d) \cdot \bI$ by \ref{eq:srht_cond}. 
\end{proof}

As a consequence of the above lemma, we may compute an approximate projection onto $\bU$ in time $d^2 \polylog (d) \log (1 / \gamma)$ with accuracy $\gamma$; i.e., for any $\norm{\bu}_2 = 1$, we can compute a vector $\hat{\bu} \in \bU$ such that $\norm{\hat{\bu} - \bu^*} \leq \gamma$ where $\bu^* = \argmin_{\bz \in \bU} \norm{\bu - \bz}$ via gradient descent. Now, to determine the runtime, we first note that by Theorem \ref{thm:osnap}, $\bS_1\bA$ can be computed in time $\O{\nnz(\bA)}$ and has dimension $d^{1.1}\log d\times d$. 
Similarly by Theorem \ref{thm:osnap}, $\bS_2\bS_1\bA$ can be subsequently computed in time $\O{d^{2.1}\log^2(d)}$ and has dimension $d\log d\times d$. 
By Corollary \ref{cor:rht}, $\bM\bS_2\bS_1\bA$ can then subsequently be computed in time $d^2\polylog(d)$ and has dimensions $d\polylog(d)\times d$. 
Since $\bS$ is a sampling matrix that samples $\O{d}$ rows, then $\bS\bM\bS_2\bS_1\bA$ can be subsequently computed in time $\O{d^2}$ and has dimensions $\O{d}\times d$. 
Since the SDP can be solved to constant accuracy in time $d^2\polylog(d)$, then $\bW$ can be computed in time $d^2\polylog(d)$. 
Since $\bW$ is simply a reweighting matrix, then $\bW\bS\bM\bS_2\bS_1\bA$ can subsequently be computed in time $\O{d^2}$. 
Therefore, the total time to compute the subspace embedding $\bW\bS\bM\bS_2\bS_1\bA$ is 
\[\O{\nnz(\bA)+d^{2.1}\log^2(d)}+d^2\polylog(d).\]

More generally, we can use an arbitrary $\alpha$ instead of setting $\alpha=0.1$ to achieve the total runtime \[\O{\frac{\nnz(\bA)}{\alpha}+d^{2+\alpha}\log^2(d)}+d^2\polylog(d)=\O{\frac{\nnz(\bA)}{\alpha}} + d^{2 + \alpha} \polylog(d).\]
\end{proof}

\section{Subspace Embedding through Leverage Score Sampling}
In this section, we show that our constant factor approximation can be used to achieve leverage score sampling in the current matrix-multiplication runtime. 
Leverage score sampling is an important tool that will allow us to achieve a $(1+\eps)$-subspace embedding in this section, approximate linear regression in Section \ref{sec:lin:reg}, and  independent row selection in Section \ref{sec:ind:row}. 

We first recall the following standard result, which states that a constant-factor subspace embedding can be used to achieve constant-factor approximations to the leverage scores. 
\begin{lemma}
\label{lem:qr:lev}
Suppose $\bS$ is a subspace embedding for $\bA\in\mathbb{R}^{n\times d}$ so that for any $\bx\in\mathbb{R}^d$,
\[\|\bA\bx\|_2\le\|\bS\bA\bx\|_2\le\alpha\|\bA\bx\|_2.\]
Then for all $i\in[n]$,
\[\frac{\tau}{\alpha^2}\le\|\ba_i\bR\|_2^2\le\tau,\]
where $\bS\bA=\bQ\bR^{-1}$ for an orthonormal matrix $\bQ$, i.e., $\bQ\bR^{-1}$ is the QR decomposition of $\bS\bA$, and $\tau_i$ is the leverage score of the $i$-th row of $\bA$. 
\end{lemma}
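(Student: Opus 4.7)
The plan is to pass the subspace embedding guarantee through to a PSD ordering on the Gram matrices, use the QR factorization to identify $\bR\bR^\top$ with the inverse Gram matrix of $\bS\bA$, and then sandwich with $\ba_i$ on both sides. I would begin by rewriting the hypothesis in matrix form: the inequality $\|\bA\bx\|_2 \le \|\bS\bA\bx\|_2 \le \alpha\|\bA\bx\|_2$ for all $\bx \in \R^d$ is equivalent to the PSD ordering
\[
\bA^\top\bA \preceq (\bS\bA)^\top(\bS\bA) \preceq \alpha^2\, \bA^\top\bA.
\]
Assuming $\bA$ has full column rank (the rank-deficient case is handled identically after replacing inverses by pseudo-inverses on the appropriate range), both Gram matrices are invertible, and inverting the PSD ordering flips the inequalities to give
\[
\tfrac{1}{\alpha^2}\, (\bA^\top\bA)^{-1} \preceq \bigl((\bS\bA)^\top(\bS\bA)\bigr)^{-1} \preceq (\bA^\top\bA)^{-1}.
\]

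The next step is to identify $\bigl((\bS\bA)^\top(\bS\bA)\bigr)^{-1}$ with $\bR\bR^\top$. From the QR decomposition $\bS\bA = \bQ\bR^{-1}$ and the orthonormality of the columns of $\bQ$ (so $\bQ^\top\bQ = \bI$), we compute
\[
(\bS\bA)^\top(\bS\bA) = \bR^{-\top}\,\bQ^\top\bQ\,\bR^{-1} = \bR^{-\top}\bR^{-1},
\]
whose inverse is precisely $\bR\bR^\top$. Substituting this into the PSD chain above yields
\[
\tfrac{1}{\alpha^2}\, (\bA^\top\bA)^{-1} \preceq \bR\bR^\top \preceq (\bA^\top\bA)^{-1}.
\]

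Finally, I would sandwich this ordering between $\ba_i$ (on the left) and $\ba_i^\top$ (on the right), using the fact that $\bx \bM \bx^\top$ is monotone in $\bM$ under the PSD order. Since $\|\ba_i\bR\|_2^2 = \ba_i \bR\bR^\top \ba_i^\top$ and $\tau_i = \ba_i(\bA^\top\bA)^{-1}\ba_i^\top$ by definition of the leverage score, this immediately gives
\[
\tfrac{\tau_i}{\alpha^2} \;\le\; \|\ba_i\bR\|_2^2 \;\le\; \tau_i,
\]
as desired. There is no real obstacle to this argument; the only minor care point is the rank-deficient case, where we should interpret $(\bA^\top\bA)^{-1}$ as a pseudo-inverse and check that $\bR$ is computed on the correct range (which is guaranteed since $\bS$ preserves $\mathrm{Span}(\bA)$ by the embedding property).
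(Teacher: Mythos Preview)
Your proof is correct and is precisely the standard argument the paper alludes to: the paper does not spell out a proof but merely notes that the result follows from the orthonormality of $\bQ$ and the $\alpha$-distortion subspace embedding property, citing \cite{DrineasMMW12,Woodruff14,ChepurkoCKW22} for details. Your PSD-ordering argument on the Gram matrices, identifying $\bigl((\bS\bA)^\top(\bS\bA)\bigr)^{-1}=\bR\bR^\top$ via the QR factorization and then sandwiching with $\ba_i$, is exactly the standard elaboration of that sketch.
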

Lemma \ref{lem:qr:lev} follows from the fact that $\bQ$ has orthonormal columns and that the leverage score of each row of $\bA$ is just the squared row norm of $\bU$ in the singular value decomposition $\bA=\bU\bSigma\bV$, see, e.g., Section \ref{sec:prelims}, while $\bS$ is an $\alpha$-distortion subspace embedding. 
More detailed proofs of Lemma \ref{lem:qr:lev} appear in \cite{DrineasMMW12,Woodruff14,ChepurkoCKW22}. 

To quickly obtain a constant factor approximation to $\|\ba_i\bR\|_2^2$ for all $i\in[n]$, a standard approach is to use a Gaussian matrix $\bG$ with $\O{\log n}$ columns and then compute $\|\ba_i\bR\bG\|_2^2$~\cite{DrineasMMW12,Woodruff14}. 
However, this multiplication by a dense matrix incurs a high runtime. 
Instead, \cite{ChepurkoCKW22} showed that a two-stage sampling process can be performed by first using a Gaussian matrix $\bG'$ with only $\O{1/\gamma}$ columns, so that $\|\ba_i\bR\bG'\|_2^2$ is an $\O{n^\gamma\log n}$-approximation to $\|\ba_i\bR\|_2^2$ for each $i\in[n]$.  
Then after sampling a large number of rows, i.e., oversampling by an $\O{n^\gamma\log n}$ factor, we can compute constant-factor approximations to the sampled rows and then perform rejection sampling to reduce the overall number of rows. 
Formally, the guarantees are as follows:

\begin{theorem}[Lemma 7.3 in~\cite{ChepurkoCKW22}]
\label{thm:produce:sample}
Given $\bA\in\mathbb{R}^{n\times d}$, suppose $\bR\in\mathbb{R}^{d\times d}$ is a matrix such that for any vector $\bx\in\mathbb{R}^d$, $\bA\bR\bx$ can be computed in time $T_1$ and $\bR\bx$ can be computed in time $T_2$. 
Given parameters $\alpha,s>0$, there exists an algorithm that with probability at least $0.95$, samples a random subset $S\subseteq[n]$ such that each $i\in S$ with probability $f_i$, where
\[\min\left(1,\frac{s}{16}\frac{\|\ba_i\bR\|_2^2}{\|\bA\bR\|_F^2}\right)\le f_i\le\min\left(1,s\,\frac{\|\ba_i\bR\|_2^2}{\|\bA\bR\|_F^2}\right).\]
The algorithm outputs $S$ along with $f_i$ for each $i\in S$ in time $\O{\frac{T_1}{\alpha}+T_2\log n+sdn^\alpha\log^2(n)}$. 
\end{theorem}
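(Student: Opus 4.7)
The plan is to follow a two-stage sampling framework that exploits the cheapness of crude leverage-score estimates and confines expensive refinement to a small oversampled subset. First, I would draw a Gaussian matrix $\bG'\in\R^{d\times k}$ with $k=\O{1/\alpha}$ independent $\mc{N}(0,1)$ entries and compute $\bA\bR\bG'$ in time $\O{T_1/\alpha}$ by applying $\bA\bR$ to each column of $\bG'$. Setting $\hat v_i := \|\ba_i\bR\bG'\|_2^2/k$, the ratio $k\hat v_i/\|\ba_i\bR\|_2^2$ is a $\chi^2_k$ random variable, so two-sided chi-squared tail bounds with $k=\Theta(1/\alpha)$ degrees of freedom yield
\[\frac{1}{\O{n^\alpha\log n}}\cdot\|\ba_i\bR\|_2^2\ \le\ \hat v_i\ \le\ \O{n^\alpha\log n}\cdot\|\ba_i\bR\|_2^2\]
simultaneously for all $i\in[n]$ with probability at least $1-1/\poly(n)$. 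An analogous concentration applied to the Frobenius sum gives $\hat F_0:=\|\bA\bR\bG'\|_F^2/k=\Theta(\|\bA\bR\|_F^2)$ with high probability.

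Next, I would include each $i\in[n]$ in a preliminary oversampled set $S'$ independently with probability $q_i=\min(1,\,C\cdot s\cdot n^\alpha\log n\cdot \hat v_i/\hat F_0)$, for a sufficiently large absolute constant $C$. The upper bound above ensures $q_i\ge\min(1,\,s\|\ba_i\bR\|_2^2/\|\bA\bR\|_F^2)$ on the concentration event, so no candidate index is missed; summing the $q_i$ gives $\Ex{|S'|}=\O{sn^\alpha\log n}$, and a Chernoff bound shows $|S'|=\O{sn^\alpha\log n}$ with probability at least $0.99$. Computing the $q_i$ from the already available $\hat v_i$ incurs only $\O{n}$ additional work, which is dominated by the first step.

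For refinement, I would draw a second Gaussian $\bG\in\R^{d\times k'}$ with $k'=\Theta(\log n)$, precompute $\bR\bG$ in time $\O{T_2\log n}$ by applying $\bR$ to its $k'$ columns, and for each $i\in S'$ compute $v_i:=\|\ba_i\bR\bG\|_2^2/k'$ in time $\O{d\log n}$ per row. The total refinement cost is $\O{|S'|\cdot d\log n}=\O{sdn^\alpha\log^2 n}$. With $k'=\Theta(\log n)$ Gaussian columns, standard chi-squared concentration yields $v_i\in[\tfrac{1}{2},2]\cdot\|\ba_i\bR\|_2^2$ with per-row failure probability $n^{-\Omega(1)}$, and a union bound over the $\O{sn^\alpha\log n}$ indices of $S'$ makes the event simultaneous with probability at least $0.99$. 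A similar sketch-based estimate yields $\hat F_1\in[\tfrac{1}{2},2]\cdot\|\bA\bR\|_F^2$.

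Finally, I perform rejection sampling: each $i\in S'$ is kept independently in the output $S$ with probability $p_i/q_i$, where $p_i:=\min(1,\,c_0\cdot s\cdot v_i/\hat F_1)$ and $c_0$ is a small constant tuned so that $p_i\le q_i$ deterministically on the good event (this is exactly where the $n^\alpha\log n$ oversampling slack is consumed) and so that $f_i:=p_i$ lands in the two-sided window $[\tfrac{s}{16}\|\ba_i\bR\|_2^2/\|\bA\bR\|_F^2,\,s\|\ba_i\bR\|_2^2/\|\bA\bR\|_F^2]$. A union bound over the two chi-squared events, the Frobenius estimates, and the Chernoff bound on $|S'|$ yields overall success probability at least $0.95$. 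The main technical obstacle is calibrating constants simultaneously: $C$ must be large enough that $p_i\le q_i$ is deterministic on the good event so rejection is valid, yet $|S'|$ must not exceed $\O{sn^\alpha\log n}$ lest the refinement step overshoot the $\O{sdn^\alpha\log^2 n}$ budget, and $c_0$ must be chosen so that $f_i=p_i$ lies inside the stated window despite the compounded constant-factor slack in $v_i$ and $\hat F_1$.
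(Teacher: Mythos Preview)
The paper does not prove this theorem; it is imported verbatim from \cite{ChepurkoCKW22} and only the high-level idea is described in the paragraph preceding the statement. Your proposal fleshes out precisely that description---crude $\O{1/\alpha}$-column Gaussian estimates giving $\O{n^\alpha\log n}$-approximate row norms, oversampling by that factor, $\O{\log n}$-column Gaussian refinement on the oversampled set, and rejection sampling---and is correct. One cosmetic slip: where you write ``the upper bound above ensures $q_i\ge\min(1,s\|\ba_i\bR\|_2^2/\|\bA\bR\|_F^2)$,'' it is actually the \emph{lower} tail bound on $\hat v_i$ (together with the exact identity $\hat F_0=\sum_i\hat v_i$) that gives this; the upper tail is what bounds $\Ex{|S'|}$.
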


Given these previous results, our algorithm is simple. 
To perform leverage score sampling on an input matrix $\bA$ to obtain a $(1+\eps)$-subspace embedding, we first obtain a constant-factor subspace embedding $\bS_1\bA$ through Theorem \ref{thm:se}. 
We then compute a QR decomposition of $\bS_1\bA$ so that $\bQ\bR^{-1}=\bS_1\bA$ and apply Theorem \ref{thm:produce:sample} to $\bA$ and $\bR$. 
The algorithm in full appears in Algorithm \ref{alg:eps:se}. 

\begin{algorithm}[!htb]
\caption{Subspace embedding through leverage score sampling}
\label{alg:eps:se}
\begin{algorithmic}[1]
\Require{$\bA\in\mathbb{R}^{n\times d}$, $\eps,\alpha>0$}
\Ensure{Subspace embedding $\bS\bA$}
\State{Let $\bS_1\bA$ be a fast embedding of $\bA$}
\Comment{Theorem \ref{thm:se}}
\State{Let $\bQ\bR^{-1}$ be a QR decomposition of $\bS_1\bA$}
\State{$s\gets\frac{1}{\eps^2}\,d\polylog(d)$}
\State{Let $S$ and $\{f_i\}$ be the output of Theorem \ref{thm:produce:sample} with inputs $\bA,\bR,s,\alpha$}
\State{Set the $i$-th diagonal entry of $\bS=\frac{1}{\sqrt{f_i}}$ for each $i\in S$}
\State{\Return $\bS\bA$}
\end{algorithmic}
\end{algorithm}

We now show that Algorithm \ref{alg:eps:se} can be used to obtain a $(1+\eps)$-subspace embedding in the current matrix-multiplication runtime. 
\thmalgepsse*
\begin{proof}
By Theorem \ref{thm:se}, there exists a matrix $\bS_1$ such that with probability at least $0.99$, $\bS_1$ has $\O{d}$ rows and 
\[\|\bA\bx\|_2\le\|\bS_1\bA\bx\|_2\le\xi\|\bA\bx\|_2,\]
for $\xi=\O{1}$ and for all $\bx\in\mathbb{R}^d$. 
Moreover, $\bS_1\bA$ can be computed in $\O{\frac{\nnz(\bA)}{\alpha}}+d^{2+\alpha}\polylog(d)$ time for any $\alpha>0$. 
Thus the QR decomposition of $\bS_1\bA$ can be computed in time $\O{d^\omega}$ to output matrices $\bQ$ and $\bR^{-1}$ such that $\bQ$ has orthonormal columns and $\bQ\bR^{-1}=\bS_1\bA$. 
By Lemma \ref{lem:qr:lev}, 
\[\frac{\tau_i}{\xi^2}\le\|\ba_i\bR\|_2^2\le\tau_i,\]
for all $i\in[n]$, so that
\[\frac{\tau_i}{d\xi^2}\le\frac{\|\ba_i\bR\|_2^2}{\|\bA\bR\|_F^2},\]
since $\|\bA\bR\|_F^2=\sum_{i\in[n]}\|\ba_i\bR\|_2^2\le\sum_{i\in[n]}\tau_i\le d$ by Theorem \ref{thm:sum:lev:scores}. 
By Theorem \ref{thm:produce:sample}, there exists an algorithm that with probability at least $0.95$, will output a set $S$ along with corresponding sampling probabilities $f_i$, for each $i\in S$, such that
\[f_i\ge\min\left(1,\frac{s}{16}\frac{\|\ba_i\bR\|_2^2}{\|\bA\bR\|_F^2}\right)\ge\min\left(1,\frac{s}{16}\frac{\tau_i}{d\xi^2}\right).\]
Setting $s=\frac{1}{\eps^2}\,d\xi^2$, it follows that $f_i\ge\min\left(1,\frac{C\tau_i\log d}{\eps^2}\right)$ for some constant $C>0$.  
Thus, by Theorem \ref{thm:lev:score:sample}, with probability at least $0.9$ for the matrix $\bS$ of Algorithm \ref{alg:eps:se},
\[(1-\eps)\|\bA\bx\|_2\le\|\bS\bA\bx\|_2\le(1+\eps)\|\bA\bx\|_2.\]
By setting $T_1=\nnz(\bA)+d^2$ and $T_2=d^2$ in Theorem \ref{thm:produce:sample}, it follows that the total runtime is
\[\O{\frac{\nnz(\bA)}{\alpha}+d^\omega}+\frac{1}{\eps^2}\,n^\alpha d^2\polylog(d).\]
Now we note that either $d>n^{0.1}$, in which case $n^{\alpha}$ can be replaced with $d^{\alpha}$ after a reparametrization of $\alpha$ or $d<n^{0.1}$, in which case the $n^{\alpha} d^2\polylog(k)$ term is lower-order since $\nnz(\bA)$ can be assumed to be at least $n$ by throwing out zero rows. 
Therefore, the final runtime is \[\O{\frac{\nnz(\bA)}{\alpha}+d^\omega}+\frac{1}{\eps^2}\,d^{2+\alpha}\polylog(d).\]
\end{proof}

\section{Independent Row Selection}
\label{sec:ind:row}
In this section, we show how our leverage score sampling framework and our constant-factor subspace embedding can be used to select a maximal set of independent rows of an input matrix $\bA$ in the current matrix-multiplication runtime. 
We first require the following definition of rank-preserving sketches:
\begin{definition}[Rank-preserving sketch]
A distribution $\calS$ on matrices $\bS\in\mathbb{R}^{m\times n}$ is a rank preserving sketch if there exists a constant $c>0$ such that for $\bS\sim\calS$, with high probability, for an input matrix $\bA\in\mathbb{R}^{n\times d}$, we have $\min\left(\rank(\bS\bA),\frac{m}{c}\right)=\min\left(\rank(\bA),\frac{m}{c}\right)$. 
\end{definition}

\cite{CheungKL13} gave a sparse construction of a rank-preserving sketch with the following properties:

\begin{theorem}
\cite{CheungKL13}
\label{thm:rps}
There exists a rank-preserving sketch distribution with $c=11$ such that (1) $\bS\bA$ can be computed in $\O{\nnz(\bA)}$ time, (2) $\bS$ has at most two nonzero entries in a column, and (3) $\bS$ has at most $\frac{2n}{m}$ nonzeros in a row. 
\end{theorem}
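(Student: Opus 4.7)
The plan is to define $\bS$ as a sparse random matrix whose nonzero pattern is a near-regular bipartite graph. Concretely, I would partition the $n$ columns of $\bS$ into groups according to a random near-regular $2$-left-regular bipartite graph between the $n$ columns and the $m$ rows, where each column is incident to exactly two rows and each row is incident to at most $2n/m$ columns. The nonzero values at those positions are drawn independently from a continuous distribution (e.g., standard Gaussian). Properties (2) and (3) then hold by construction.

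\textbf{Runtime.} For property (1), compute $\bS\bA$ by iterating over the $n$ rows $\ba_1,\ldots,\ba_n$ of $\bA$: for each row $\ba_j$ we add a scaled copy of $\ba_j$ to exactly two rows of $\bS\bA$ (the two rows in which column $j$ of $\bS$ is nonzero). Each such contribution costs $\O{\nnz(\ba_j)}$, so the total time is $\O{\sum_{j=1}^n \nnz(\ba_j)} = \O{\nnz(\bA)}$.

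\textbf{Rank preservation (the main obstacle).} Let $r = \min(\rank(\bA),\lfloor m/11 \rfloor)$ and let $\bU \in \mathbb{R}^{n \times r}$ be an orthonormal basis of an $r$-dimensional subspace of the row space of $\bA$. Since $\rank(\bS\bA) \geq \rank(\bS\bU)$ and trivially $\rank(\bS\bA) \leq \rank(\bA)$, it suffices to show $\rank(\bS\bU) = r$ with high probability. Treat the $2n$ nonzero entries of $\bS$ as formal variables; then every $r \times r$ minor of $\bS\bU$ is a polynomial in those variables, and by the Schwartz--Zippel lemma such a polynomial is nonzero almost surely when evaluated at continuously distributed values, unless it is identically zero. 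Thus the probabilistic claim reduces to a purely combinatorial one: with high probability over the random bipartite incidence graph, some $r \times r$ minor of the symbolic matrix $\bS\bU$ is not identically zero. This in turn can be reformulated, via an Edmonds-type expansion of the determinant, as the existence of a set $I \subseteq [m]$ with $|I|=r$ such that the induced bipartite structure between $I$ and the support of $\bU$ admits a suitable perfect matching. A counting / expansion argument on the random $2$-regular-left bipartite graph shows that such an $I$ exists with high probability as soon as $m \geq 11r$.

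\textbf{Main difficulty.} The hard part is squeezing the constant $c=11$ out of the combinatorial argument. Because each column of $\bS$ has only two nonzeros, the graph is extremely sparse, and standard black-box expander or matroid-partition arguments would yield a much larger constant or would fail outright. The sharpened constant relies on the specific near-regularity of the bipartite construction, combined with the fact that the genericity of the Gaussian entries lets us replace delicate quantitative spectral bounds by the hard combinatorial existence of a single nonzero minor, which is what the Schwartz--Zippel reduction buys us.
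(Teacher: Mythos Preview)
This theorem is not proved in the paper at all: it is stated with a citation to \cite{CheungKL13} and used as a black box in the proof of Theorem~\ref{thm:ind:rows}. There is therefore no ``paper's own proof'' to compare your proposal against.

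That said, your sketch is broadly in the spirit of the Cheung--Kwok--Lau construction, with two caveats worth flagging. First, you invoke Schwartz--Zippel for continuous (Gaussian) entries; the lemma as usually stated applies to evaluations at uniformly random points of a finite set, and the original \cite{CheungKL13} construction works over a large finite field precisely so that Schwartz--Zippel gives an explicit $1-1/\poly(n)$ failure bound. With Gaussians you would instead argue that a nonzero polynomial vanishes on a measure-zero set, which gives probability $1$ rather than ``high probability'' and changes the flavor of the argument. Second, you correctly identify that the constant $c=11$ is the crux, but your proposal stops at ``a counting / expansion argument'' without indicating what that argument is; in \cite{CheungKL13} the constant comes from a specific super-concentrator/expander-style analysis of the random $2$-left-regular graph, and this is genuinely the nontrivial content of the theorem. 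As written, your proposal is a plausible outline but not yet a proof of the stated constant.
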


We also require the following guarantees from the approximate matrix multiplication algorithm. 
More precisely, the approximate matrix multiplication algorithm samples a fixed number of rows with replacement from an input matrix $\bM$, where each row $\bbm_i$ is sampled with probability proportional to $\|\bbm_i\|_2^2$. 
The result is a matrix $\bS\bM$ such that $\bM^\top\bS^\top\bS\bM$ is a ``good'' approximation to $\bM^\top\bM$. 

\begin{theorem}
\label{thm:amm}
\cite{DrineasKM06}
Suppose $\bS$ is a sampling matrix with $r$ rows randomly generated from the approximate matrix multiplication algorithm on input $\bM\in\mathbb{R}^{n\times d}$. 
Then with probability at least $\frac{2}{3}$,
\[\|\bM^\top\bS^\top\bS\bM-\bM^\top\bM\|_F^2\le\frac{10}{\sqrt{r}}\,\|\bM\|_F^4.\]
\end{theorem}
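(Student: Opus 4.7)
The plan is to prove this via the standard second-moment analysis for sampling-based approximate matrix multiplication, going through an unbiasedness check, a variance computation exploiting independence of the samples, and a final application of Markov's inequality. First I would express $\bM^\top \bS^\top \bS \bM$ as a sum of $r$ i.i.d.\ rank-one matrices, using the fact that the sampling is with replacement. Concretely, for each $t \in [r]$ let $i_t$ denote the row index sampled at step $t$, so that $i_t = i$ with probability $p_i = \|\bbm_i\|_2^2 / \|\bM\|_F^2$, and the $t$-th row of $\bS\bM$ is $\bbm_{i_t}/\sqrt{r p_{i_t}}$. Setting $\bX_t = \frac{1}{r p_{i_t}} \bbm_{i_t} \bbm_{i_t}^\top$, we have $\bM^\top \bS^\top \bS \bM = \sum_{t \in [r]} \bX_t$.

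Second, I would verify unbiasedness. A direct calculation gives $\Ex{\bX_t} = \sum_i p_i \cdot \frac{1}{r p_i} \bbm_i \bbm_i^\top = \frac{1}{r} \bM^\top \bM$, and summing over $t$ yields $\Ex{\bM^\top\bS^\top\bS\bM} = \bM^\top \bM$.

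Third, I would bound the expected squared Frobenius error using the independence of the $\bX_t$'s. Because they are i.i.d.\ and unbiased for $\frac{1}{r}\bM^\top\bM$, the variance of the sum decomposes and I obtain
\begin{align*}
\Ex{\|\bM^\top\bS^\top\bS\bM - \bM^\top\bM\|_F^2} = \sum_{t \in [r]} \Ex{\|\bX_t - \Ex{\bX_t}\|_F^2} \leq \sum_{t \in [r]} \Ex{\|\bX_t\|_F^2}.
\end{align*}
Plugging in the probability-proportional-to-squared-row-norm distribution,
\begin{align*}
\Ex{\|\bX_t\|_F^2} = \sum_i p_i \cdot \frac{1}{r^2 p_i^2}\|\bbm_i\|_2^4 = \frac{1}{r^2}\sum_i \frac{\|\bM\|_F^2 \|\bbm_i\|_2^4}{\|\bbm_i\|_2^2} = \frac{\|\bM\|_F^4}{r^2},
\end{align*}
so that the expected squared error is at most $\frac{\|\bM\|_F^4}{r}$. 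Note that the specific choice $p_i \propto \|\bbm_i\|_2^2$ is what makes $\frac{\|\bbm_i\|_2^4}{p_i}$ sum to $\|\bM\|_F^4$ exactly; any other sampling distribution would give a weaker bound, which is the only mildly subtle part of the argument.

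Finally, I would apply Markov's inequality to convert this second-moment bound into the desired high-probability statement: with probability at least $2/3$,
\begin{align*}
\|\bM^\top\bS^\top\bS\bM - \bM^\top\bM\|_F^2 \leq \frac{3\,\|\bM\|_F^4}{r} \leq \frac{10}{\sqrt{r}}\,\|\bM\|_F^4,
\end{align*}
where the last inequality is automatic for every $r \geq 1$. There is no real obstacle here; this is essentially a textbook computation, and the proposal amounts to spelling out the unbiasedness, the variance accounting, and the Markov step.
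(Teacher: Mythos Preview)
Your argument is correct and is essentially the standard second-moment proof from \cite{DrineasKM06}; the paper does not supply its own proof of this statement but simply cites it, so there is nothing further to compare. In fact you establish the sharper bound $\Ex{\|\bM^\top\bS^\top\bS\bM-\bM^\top\bM\|_F^2}\le \|\bM\|_F^4/r$, from which the stated $10/\sqrt{r}$ factor follows with room to spare.
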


The approximate matrix multiplication algorithm is simply squared row norm sampling, which is equivalent to leverage score sampling after preconditioning, since the leverage scores of each row of $\bA$ are just the squared row norms of $\bU$ in the singular value decomposition $\bA=\bU\bSigma\bV$, e.g., see Section \ref{sec:prelims}.

\subsection{Independent Row Selection for a Reduced Matrix}
\label{sec:basis:grow}
We now describe Algorithm \ref{alg:ind:row}, our algorithm for independent row selection from a matrix $\bB\in\mathbb{R}^{\O{k\log(k)}\times\O{k}}$ with rank $k$ -- we shall ultimately reduce the input matrix $\bA\in\mathbb{R}^{n\times d}$ with rank $k$ down to this case. 
Our algorithm will iteratively grow a set $S$ of independent rows of $\bB$. 

We first use leverage score sampling to sample $\O{k}$ rows of $\bB$. 
Although we require $\O{k\log k}$ samples to cover the entire row span of $\bB$, we can show using the guarantees of approximate matrix multiplication by Theorem \ref{thm:amm} that $\O{k}$ samples suffice to cover a span with rank $(1-c)k=\frac{9}{10}k$ with probability $\frac{2}{3}$, for $c=\frac{1}{10}$. 
We can efficiently compute both an independent subset of these rows and a basis $\bZ_1$ for their orthogonal complement. 
We can then add the independent subset to $S$. 
These procedures in total use at most $\gamma k^\omega$ runtime for some fixed constant $\gamma>0$.  

For the next iteration, we repeat these procedures on $\bB\bZ_1^\top$. 
Namely, we use leverage score sampling to sample $\O{k}$ rows of $\bB\bZ_1^\top$. 
We can again efficiently compute both an independent subset of these rows and a basis $\bZ_2$ for their orthogonal complement. 
We can again add an independent subset of these rows to our growing set $S$ and then compute a basis $\bZ_2$ for the orthogonal complement of $S$. 
The main idea is that the rows of $\bB$ that are spanned by $S$ will have leverage score zero in $\bB\bZ_1^\top$ because they must be orthogonal to $\bZ_1$, the orthogonal complement of $S$. 
Hence, the sampled rows will cover a constant fraction of the remaining subspace orthogonal to $S$ and in particular with probability $\frac{2}{3}$, we can sample a set of rows with rank at least a $\frac{9}{10}$ fraction of the rank of $\bB\bZ_1^\top$. 

Moreover, since the rank of $\bB\bZ_1^\top$ is at most a $c=\frac{1}{10}$-fraction of the rank of $\bB$ conditioned on the success of the first iteration, then the runtime of the second iteration is at most $\gamma(ck)^\omega$, which is a constant fraction smaller than the runtime of the first iteration. 
We can now repeatedly apply this approach by iteratively adding a set of rows to $S$ that cover a constant fraction of the remaining subspace orthogonal to $S$, while using runtime a constant fraction of that in the previous iterations. 
Since the runtime follows a geometric series in expectation, the overall runtime is $\O{k^\omega}$ in expectation, and so by Markov's inequality, the overall runtime is $\O{k^\omega}$ with constant probability. 
We remark that the correctness and runtime analysis is robust to failures in each iteration because a failure in an iteration means that at worst, no additional rows are added to $S$, which does not affect the correctness of the algorithm and only slightly increases the runtime, which is absorbed into the computation of the expected runtime.  

\begin{algorithm}[!htb]
\caption{Independent Row Selection}
\label{alg:ind:row}
\begin{algorithmic}[1]
\Require{$\bA\in\mathbb{R}^{m\times d}$ for $m=d\,\polylog(d)$}
\Ensure{A set of $\rank(\bA)$ independent rows of $\bA$}
\State{Use Theorem \ref{thm:lev:score:sample} so that $S$ is a set of $\O{\rank(\bA)}$ rows of $\bA$}
\State{Let $\bS\bA$ be the submatrix of $\O{\rank(\bA)}$ rows of $\bA$ in $S$}
\State{Let $\bZ^{(1)}$ be a basis for the orthogonal complement of $\bS\bA$}
\State{Reduce $S$ to a set of independent rows}
\State{$t\gets 1$}
\While{$\bZ^{(t)}$ is non-empty}
\State{Use Theorem \ref{thm:lev:score:sample} so that $S'$ is a set of $\O{r}$ rows of $\bA\bZ^{(1)}\ldots\bZ^{(t)}$, where $r=\rank(\bA\bZ^{(1)}\ldots\bZ^{(t)})$}
\State{$S\gets S\cup S'$}
\State{Reduce $S$ to a set of independent rows}
\State{Let $\bS\bA$ be the submatrix of $\O{r}$ rows of $\bA$ in $S$}
\State{Let $\bZ^{(t+1)}$ be a basis for the orthogonal complement of $\bS\bA$}
\EndWhile
\State{\Return $S$}
\end{algorithmic}
\end{algorithm}

We show that the subroutine Algorithm \ref{alg:ind:row} can be used to find a set of $\rank(\bA)$ independent rows of $\bA$ in matrix-multiplication runtime. 
We shall ultimately apply Algorithm \ref{alg:ind:row} to a matrix $\bA\in\mathbb{R}^{\O{k\log k}\times\O{k}}$. 
\begin{lemma}
\label{lem:smaller:ind}
Given a matrix $\bA\in\mathbb{R}^{n\times d}$ of rank $k$, there exists an algorithm that with probability at least $\frac{2}{3}$, outputs a set $S$ of $k$ independent rows of $\bA$ in time 
\[\O{\nnz(\bA)\log d+k^\omega}.\]
\end{lemma}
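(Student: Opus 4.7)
My plan is to combine a fast dimension reduction from the $n \times d$ input to an $\O{k\log k} \times \O{k}$ submatrix $\bB$ of (reweighted) rows of $\bA$ with the iterative basis-growing scheme of Algorithm~\ref{alg:ind:row}, and then show that each iteration of that loop makes constant-factor progress on the residual rank so that the total work telescopes. The reduction combines a rank-preserving column sketch (Theorem~\ref{thm:rps}) with constant-factor leverage score sampling via Algorithm~\ref{alg:eps:se}, producing $\bB$ of rank $k$ whose rows correspond to rows of $\bA$. A basis of rows of $\bB$ then lifts back to a basis of rows of $\bA$, and the reduction runs in $\O{\nnz(\bA)\log d + k^\omega}$ by Theorem~\ref{thm:alg:eps:se}, which accounts for the first term in the stated runtime.

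Correctness of Algorithm~\ref{alg:ind:row} is automatic: the while loop exits only when $\bZ^{(t)}$ is empty, i.e., when the row span of $\bS\bB$ equals that of $\bB$, and the explicit reduction-to-independent-rows step guarantees $|S| = k$ at termination. For per-iteration progress, write $r_t := \rank(\bB \bZ^{(1)} \cdots \bZ^{(t)})$ and let $\bU_t$ be an orthonormal basis for its column space. Leverage score sampling of $\bB \bZ^{(1)} \cdots \bZ^{(t)}$ is precisely squared-row-norm sampling of $\bU_t$, so I would apply Theorem~\ref{thm:amm} to $\bU_t$; combined with a matrix Chernoff or $\varepsilon$-net argument on the rank-$r_t$ subspace, this yields that $\Theta(r_t)$ samples are enough to ensure $r_{t+1} \le (1-c)\, r_t$ for some absolute constant $c < 1$ with probability at least $\tfrac{2}{3}$.

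For the runtime, I would maintain the cumulative projector $\bP^{(t)} = \bZ^{(1)} \cdots \bZ^{(t)}$ incrementally so that each iteration operates on the already-projected matrix $\bB \bP^{(t)}$. The per-iteration linear-algebraic work (independent-row extraction, orthogonal complement computation, updating $\bP^{(t)}$) runs in $\gamma\, r_t^\omega$ time for an absolute constant $\gamma$. Conditioned on successive successes, $r_t$ decays geometrically, so $\sum_t \gamma\, r_t^\omega = \O{k^\omega}$. A failed iteration leaves $r_t$ unchanged and merely triggers a fresh attempt which again succeeds with probability at least $\tfrac{2}{3}$, so the expected number of attempts per rank scale is $\O{1}$. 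The expected post-preprocessing runtime is therefore $\O{k^\omega}$, and Markov's inequality converts this into the stated constant-probability bound.

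The main obstacle I anticipate is the quantitative per-iteration rank-progress claim with only $\O{r_t}$ leverage score samples. Theorem~\ref{thm:amm} on its own only bounds $\|\bU_t^\top \bS^\top \bS \bU_t - \bI_{r_t}\|_F$ in Frobenius norm, which is too weak to force a constant-fraction rank lower bound. Upgrading this to a spectral statement strong enough to guarantee $\rank(\bS \bU_t) \ge (1 - c)\, r_t$ with constant probability -- either by a matrix Chernoff inequality on the sampled Gram matrix restricted to $\bU_t$, or by approximate matrix multiplication combined with a covering argument over the unit sphere of the column span of $\bU_t$ -- is where the analysis is genuinely delicate, and it is what pins down the hidden constants in the geometric decay.
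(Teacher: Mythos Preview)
Your iterative scheme and runtime accounting are essentially the paper's argument. The one place where you diverge is exactly the place you flag as the ``main obstacle,'' and there you are making the problem harder than it is: the Frobenius bound from Theorem~\ref{thm:amm} is already strong enough to force the rank lower bound, with no matrix Chernoff or net argument needed. The point is that rank deficiency shows up additively in the squared Frobenius norm. If $\bU_t \in \mathbb{R}^{n \times r_t}$ has orthonormal columns and $\bS\bU_t$ has rank $s$, then $\bU_t^\top \bS^\top \bS \bU_t$ has rank at most $s$, so at least $r_t - s$ of its eigenvalues are zero. Since $\bU_t^\top \bU_t = \bI_{r_t}$, each of those zero eigenvalues contributes $1$ to $\|\bU_t^\top \bS^\top \bS \bU_t - \bI_{r_t}\|_F^2$, giving
\[
\|\bU_t^\top \bS^\top \bS \bU_t - \bI_{r_t}\|_F^2 \ge r_t - s.
\]
On the other hand, Theorem~\ref{thm:amm} with $r = \Theta(r_t)$ samples and $\|\bU_t\|_F^2 = r_t$ gives, with probability at least $\tfrac{2}{3}$,
\[
\|\bU_t^\top \bS^\top \bS \bU_t - \bI_{r_t}\|_F^2 \le \frac{10}{\sqrt{r}}\, r_t^2 \le \frac{r_t}{100}
\]
for a large enough constant in $r = \Theta(r_t)$. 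Combining the two displays yields $s \ge \tfrac{99}{100}\, r_t$, which is precisely the constant-fraction rank progress you need. This is the paper's mechanism; your proposed spectral upgrade is unnecessary.

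A minor organizational difference: you front-load the reduction to an $\O{k\log k}\times \O{k}$ matrix via Theorem~\ref{thm:rps} and Algorithm~\ref{alg:eps:se}, and then run the loop on $\bB$. The paper does not do this inside Lemma~\ref{lem:smaller:ind}; it runs Algorithm~\ref{alg:ind:row} directly on $\bA$, paying $\O{\nnz(\bA)}$ per iteration for the leverage-score machinery, and the $\O{\log d}$ rounds produce the $\nnz(\bA)\log d$ term. The column reduction is deferred to the wrapper Theorem~\ref{thm:ind:rows}. Either organization works, but be aware that your claimed reduction cost does not literally fall out of Theorem~\ref{thm:alg:eps:se} as stated; you need to first apply the rank-preserving sketch so the effective dimension is $\O{k}$, and choose the tradeoff parameter $\alpha$ appropriately.
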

\begin{proof}
Suppose $\bA$ has rank $k$ and let $S$ be a set of $n_1\ge\frac{9}{10}k$ rows of $\bA$, with rank $r_1$.  
Let $\bS$ be the corresponding sampling matrix so that $S$ consists of the rows of $\bS\bA\in\mathbb{R}^{n_1\times d}$. 
Let $\bZ^{(1)}$ be a basis for the orthogonal complement of $\bS\bA$, so that $\bZ^{(1)}\in\mathbb{R}^{(k-n_1)\times d}$. 

Observe that for any row $\ba_i$ in the span of $S$, we have $\ba_i(\bZ^{(1)})^\top=0^{(k-n_1)}$, where the right-hand side denotes the all zeros vector of length $k-n_1$. 
Thus the only nonzero rows of $\bA(\bZ^{(1)})^\top$ are the rows that are independent of $S$ and so we would like to sample rows of $\bA$ proportional to their leverage score sample in $\bA(\bZ^{(1)})^\top$. 
However, due to our desired runtime, we cannot afford to explicitly compute $\bA(\bZ^{(1)})^\top$. 
Instead we apply the techniques of Theorem \ref{thm:alg:eps:se} to perform leverage score sampling on $\bA(\bZ^{(1)})^\top$. 
Namely, we first generate a matrix $\bG^{(1)}\in\mathbb{R}^{m_1\times n_1}$ via Theorem \ref{thm:se} and suppose that
\[\|\bG^{(1)}\bA(\bZ^{(1)})^\top\bx\|_2\le\|\bG^{(1)}\bS\bA(\bZ^{(1)})^\top\bx\|_2\le\gamma\|\bG^{(1)}\bA(\bZ^{(1)})^\top\bx\|_2,\]
for an absolute constant $\gamma>1$ and for all $\bx\in\mathbb{R}^d$. 
Crucially, $\bG^{(1)}\bA(\bZ^{(1)})^\top\in\mathbb{R}^{m_1\times d}$, where $m_1\le C(k-n_1)$ and can be computed in time $C(\nnz(\bA)+m_1k^{\omega-1})$ for an absolute constant $C>0$. 
As in Theorem \ref{thm:alg:eps:se}, we then use $C(\nnz(\bA)+m_1k^{\omega-1})$ time to compute a QR decomposition of $\bG^{(1)}\bA(\bZ^{(1)})^\top$ to output matrices $\bQ$ and $\bR$ such that $\bQ$ has orthonormal columns and $\bQ\bR=\bG^{(1)}\bA(\bZ^{(1)})^\top$. 

By Lemma \ref{lem:qr:lev}, we have that $\|\ba_i(\bZ^{(1)})^\top\bR^{-1}\|_2^2$ is a $\xi^2$-approximation to the leverage score of the $i$-th row of $\ba_i(\bZ^{(1)})^\top$. 
Thus by Theorem \ref{thm:produce:sample}, we can sample $\O{d-n_1}$ rows of $\bA$ with probabilities proportional to the leverage scores of $\bA(\bZ^{(1)})^\top$. 

By setting $\bM$ in the context of Theorem \ref{thm:amm} to $\bM=\bG^{(1)}\bA(\bZ^{(1)})^\top\bR^{-1}$, we have that $\bM^\top\bM$ is the diagonal matrix consisting of $r_1:=d-n_1$ ones and zeros elsewhere, since $\bQ$ has orthonormal columns. 
Thus, $\|\bM\|_F^2=r_1$ and so by Theorem \ref{thm:amm}, we have that for $r=\O{r_1}$ with probability at least $\frac{2}{3}$, 
\[\|\bM^\top\bS^\top\bS\bM-\bM^\top\bM\|_F^2\le\frac{r_1}{100},\]
where $\bS$ is the sampling matrix induced by approximate matrix multiplication, which is equivalent to leverage score sampling in this case. 
On the other hand, we have $\|\bM^\top\bS^\top\bS\bM-\bM^\top\bM\|_F^2\ge\rank(\bS\bM)$, since $\bM^\top\bM$ is a diagonal matrix consisting of only ones and zeros. 
Therefore, it follows that $\bS\bM$, i.e., the set of rows from leverage score sampling, has found at least a $\frac{9}{10}$ fraction of the remaining independent rows. 

By arguing inductively, the algorithm outputs a set of $d$ independent rows. 
Namely, we use $\bS$ to compute the matrix $\bZ^{(2)}$ for the orthogonal complement of the sampled rows. 
Then given a sequence of matrices $\bZ^{(1)},\ldots,\bZ^{(i)}$, we generate a matrix $\bG^{(i)}$ via Theorem \ref{thm:se} and its QR decomposition to iteratively perform leverage score sampling. 

Call an iteration successful if the number of remaining independent rows of $\bA$ has decreased by at least a  $\frac{9}{10}$ fraction. 
We define a round to be a number of iterations such that the number of remaining independent rows of $\bA$ has decreased by at least a $\frac{9}{10}$ fraction. 
Since each iteration in round $t$ runs in time 
\[C\left(\nnz(\bA)+\left(\frac{1}{10}\right)^{i-1}k\cdot k^{\omega-1}\right)\]
and succeeds with probability at least $\frac{2}{3}$, then the expected runtime $R_t$ of round $t$ is at most
\[\Ex{R_t}\le C\left(\nnz(\bA)+\left(\frac{1}{10}\right)^{i-1}k\cdot k^{\omega-1}\right)+\frac{1}{3}\Ex{R_t}\le2C\left(\nnz(\bA)+\left(\frac{1}{10}\right)^{i-1}k\cdot k^{\omega-1}\right).\]
Thus the total expected runtime is at most
\[\sum_{i=1}^{\log d}=2C\left(\nnz(\bA)+\left(\frac{1}{10}\right)^{i-1}k\cdot k^{\omega-1}\right)=\O{\nnz(\bA)\log d+k^\omega}.\]
Hence we have by Markov's inequality that with probability at least $\frac{2}{3}$, the algorithm uses total time
\[\O{\nnz(\bA)\log d+k^\omega}.\]
\end{proof}

\subsection{Input Matrix Reduction}
We now show there exists an algorithm for independent row selection that uses the current matrix-multiplication runtime. 
We would like to reduce from an input matrix $\bA\in\mathbb{R}^{n\times d}$ with rank $k$ to a matrix $\bB\in\mathbb{R}^{\O{k\log k}\times k}$, which would allow us to apply Algorithm \ref{alg:ind:row} and therefore, Lemma \ref{lem:smaller:ind}. 
To that end, we apply a rank-preserving sketch $\bS$ to $\bA$, so that $\rank(\bA)=\rank(\bA\bS)$, where $\bS\in\mathbb{R}^{d\times ck}$ for some constant $c>0$ and any set of independent rows of $\bA$ is also a set of independent rows of $\bA\bS$.  
We then use our constant-factor subspace embedding to select $\O{k\log k}$ reweighted rows from $\bA\bS$. 
These reweighted rows form the input matrix $\bB$ to Algorithm \ref{alg:ind:row}. 

\thmindrows*
\begin{proof}
Let $\bS_1\in\mathbb{R}^{ck\times d}$ be a rank-preserving sketch and let $\calE$ be the event that $\rank(\bA\bS_1^\top)=\rank(\bA)=k$. 
Then by Theorem \ref{thm:rps}, we have $\PPr{\calE}\ge1-\frac{1}{\poly(n)}$. 

Conditioned on $\calE$, let $I\subseteq[n]$ be a subset of independent rows of $\bA\bS_1^\top$, so that $\rank(\bA_I)=k$. 
Thus to find $k$ linearly independent rows of $\bA$, it suffices to find $k$ linearly independent rows of $\bA\bS_1^\top$. 
Let $\bB=\bA\bS_1^\top$, so that $\nnz(\bB)=\O{\nnz(\bA)}$ by Theorem \ref{thm:rps}. 

By Theorem \ref{thm:alg:eps:se}, there exists an algorithm that samples $\O{k\log k}$ rows of $\bB$ to form a matrix $\bB'$ such that $\rank(\bB)=\rank(\bB')=k$, using time \[\O{\frac{\nnz(\bA)}{\alpha}+k^\omega}+k^{2+\alpha}\polylog(k),\]
for any trade-off parameter $\alpha>0$. 
Thus, we can then apply Lemma \ref{lem:smaller:ind} to $\bB'$ to compute $k$ linearly independent rows in time $k^2\,\polylog(k)+\O{k^\omega}$.
Thus, the overall runtime is 
\[\O{\frac{\nnz(\bA)}{\alpha}+k^\omega}+k^{2+\alpha}\polylog(k).\]
\end{proof}

\section{Linear Regression}
\label{sec:lin:reg}
In this section, we show how our $(1+\eps)$-subspace embedding can be used to solve approximate linear regression in the current matrix-multiplication runtime.  

We first recall the following statement that shows how a $(1+\sqrt{\eps})$-subspace embedding suffices to achieve a $(1+\O{\eps})$-approximate solution to linear regression. 
\begin{lemma}[Theorem 14 in \cite{BourgainDN15}]
\label{lem:se:lr:solve}
Let $\bS\bA$ be a $(1+\sqrt{\eps})$-subspace embedding of an input matrix $\bA\in\mathbb{R}^{n\times d}$. 
Then
\[\min_{\bx\in\mathbb{R}^d}\|\bS\bA-\bS\bb\|_2\le(1+\O{\eps})\cdot\min_{\bx\in\mathbb{R}^d}\|\bA-\bb\|_2.\]
\end{lemma}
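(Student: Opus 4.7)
The plan is to exploit a Pythagorean decomposition of the regression residual together with an ``approximate orthogonality'' consequence of the subspace embedding hypothesis. Let $\bx^\star \in \argmin_\bx \|\bA\bx - \bb\|_2$ and let $\hat{\bx} \in \argmin_\bx \|\bS\bA\bx - \bS\bb\|_2$, and set $\bb^\perp := \bb - \bA\bx^\star$. The normal equations give $\bA^\top \bb^\perp = 0$, so $\bb^\perp$ is orthogonal to the column space of $\bA$, and for any $\bx$ the identity
\[
\|\bA\bx - \bb\|_2^2 = \|\bA(\bx - \bx^\star)\|_2^2 + \|\bb^\perp\|_2^2
\]
holds. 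It therefore suffices to show $\|\bA(\hat{\bx} - \bx^\star)\|_2^2 \le \O{\eps}\,\|\bb^\perp\|_2^2$, because this immediately yields $\|\bA\hat{\bx} - \bb\|_2 \le (1+\O{\eps})\,\min_\bx \|\bA\bx - \bb\|_2$.

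First I would establish an approximate orthogonality lemma: because $\bS$ is a $(1+\sqrt{\eps})$-subspace embedding of the column span of $[\bA,\bb]$, for any two orthogonal unit vectors $\bu,\bv$ in that span the polarization identity
\[
2\langle \bS\bu, \bS\bv\rangle = \|\bS(\bu+\bv)\|_2^2 - \|\bS\bu\|_2^2 - \|\bS\bv\|_2^2,
\]
combined with $\|\bS\bw\|_2^2 = (1\pm \O{\sqrt{\eps}})\|\bw\|_2^2$ for $\bw\in\{\bu,\bv,\bu+\bv\}\subseteq\mathrm{Span}([\bA,\bb])$, yields $|\langle \bS\bu,\bS\bv\rangle| \le \O{\sqrt{\eps}}$. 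Rescaling and applying this to $\bu = \bA\bz/\|\bA\bz\|_2$ and $\bv = \bb^\perp/\|\bb^\perp\|_2$, which are orthogonal by construction and both lie in the column span of $[\bA,\bb]$, gives the key bound
\[
|\langle \bS\bA\bz,\, \bS\bb^\perp\rangle| \le \O{\sqrt{\eps}}\,\|\bA\bz\|_2\,\|\bb^\perp\|_2 \qquad \text{for all } \bz \in \R^d.
\]

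Next I would exploit the sketched normal equations $(\bS\bA)^\top(\bS\bA\hat{\bx} - \bS\bb) = 0$. Writing $\bS\bA\hat{\bx} - \bS\bb = \bS\bA(\hat{\bx}-\bx^\star) - \bS\bb^\perp$ and pairing with $\bS\bA(\hat{\bx}-\bx^\star)$ gives
\[
\|\bS\bA(\hat{\bx} - \bx^\star)\|_2^2 = \langle \bS\bA(\hat{\bx} - \bx^\star),\, \bS\bb^\perp\rangle.
\]
Applying the approximate-orthogonality bound to the right-hand side and the lower side of the subspace embedding on $\bA$ to the left-hand side yields
\[
(1-\O{\sqrt{\eps}})\,\|\bA(\hat{\bx} - \bx^\star)\|_2^2 \le \O{\sqrt{\eps}}\,\|\bA(\hat{\bx} - \bx^\star)\|_2\,\|\bb^\perp\|_2,
\]
from which $\|\bA(\hat{\bx} - \bx^\star)\|_2 \le \O{\sqrt{\eps}}\,\|\bb^\perp\|_2$, so $\|\bA(\hat{\bx}-\bx^\star)\|_2^2 \le \O{\eps}\|\bb^\perp\|_2^2$, and plugging into the Pythagorean decomposition delivers the claim.

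The main obstacle is the polarization step: the hypothesis supplies only norms, not inner products, so it is essential that every vector appearing in the polarization identity (in particular $\bu+\bv$) lies in the subspace preserved by $\bS$. This is precisely why the hypothesis must embed the augmented matrix $[\bA,\bb]$ rather than $\bA$ alone; once $\bb^\perp$ is guaranteed to be embedded with distortion $\sqrt{\eps}$, the ``square-root'' improvement from a $\sqrt{\eps}$-distortion sketch to an $\eps$-accurate regression solution falls out of a single Cauchy-type pairing against the residual. The remaining manipulations of the two sets of normal equations are routine.
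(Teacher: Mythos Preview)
The paper does not supply its own proof of this lemma; it is quoted verbatim as Theorem~14 of \cite{BourgainDN15}. Your argument is correct and is essentially the standard one (normal equations plus Pythagorean decomposition plus the polarization trick to turn norm preservation into approximate inner-product preservation). You also correctly flag that the embedding hypothesis must be on the augmented matrix $[\bA;\bb]$ rather than on $\bA$ alone, and indeed this is exactly how the paper invokes the lemma in the proof of Theorem~\ref{thm:lin:reg}, where it first computes a $(1+\O{\sqrt{\eps}})$-subspace embedding of $[\bA;\bb]$.

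One small remark: the lemma as literally stated compares the two optimal \emph{values}, whereas what you prove is the stronger (and more directly useful) statement that the sketched minimizer $\hat{\bx}$ satisfies $\|\bA\hat{\bx}-\bb\|_2 \le (1+\O{\eps})\min_{\bx}\|\bA\bx-\bb\|_2$. This stronger form is what the paper actually needs downstream, so your reading is the right one.
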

Thus Lemma \ref{lem:se:lr:solve} implies that we should simply find an approximate solution to linear regression after applying a $(1+\sqrt{\eps})$-subspace embedding, e.g., Theorem \ref{thm:alg:eps:se}. 
However, this is not straightforward because the resulting dimension after applying Theorem \ref{thm:alg:eps:se} would be $\O{\frac{1}{\eps}\,d\log d}$, which is not small enough to compute the closed-form solution for linear regression in the current matrix-multiplication runtime, due to the extra logarithmic factor. 
Instead, we consider gradient descent to find an approximately optimal solution for linear regression, recalling the guarantees on the convergence rate of gradient descent in Theorem \ref{thm:gd:iterations}. 

We now show that there exists an algorithm for approximate linear regression in the current matrix-multiplication runtime. 
The main approach is similar to that of \cite{ChepurkoCKW22}, but we have a better runtime due to our constant-factor subspace embedding, and also we have a better dependence on $\eps$ due to invoking Lemma \ref{lem:se:lr:solve}. 

The main idea is that Lemma \ref{lem:se:lr:solve} states that it suffices to solve approximate linear regression after applying a $(1+\O{\sqrt{\eps}})$-subspace embedding $\bS$, which by Theorem \ref{thm:alg:eps:se} results in $\O{\frac{1}{\eps}\,d\log d}$ rows. 
Unfortunately, the dimension of $\bS$ is too high to find a closed form solution to $\min_{\bx\in\mathbb{R}^d}\|\bS\bA\bx-\bS\bb\|_2$. 
On the other hand, since we only require finding a vector $\bw\in\mathbb{R}^d$ such that $\|\bS\bA\bw-\bS\bb\|_2\le(1+\O{\eps})\min_{\bx\in\mathbb{R}^d}\|\bS\bA\bx-\bS\bb\|_2$, we instead use gradient descent to find such a vector $\bw$. 
However, gradient descent requires a small condition number and a ``good'' initial solution. 
To decrease the condition number to $\O{1}$, we instead consider $\min_{\bx\in\mathbb{R}^d}\|\bS\bA\bR\bx-\bS\bb\|_2$, where $\bG\bA=\bQ\bR^{-1}$ is a QR decomposition for a constant-factor subspace embedding $\bG\bA$. 
To find a good initial solution, we first find the closed-form solution to $\bw^{(0)}=\argmin_{\bx\in\mathbb{R}^d}\|\bG\bA\bx-\bG\bb\|_2$, since $\bG\bA$ is a constant-factor subspace embedding. 
We then account for the preconditioning by computing $\bw^{(1)}=\bR^{-1}\bw^{(0)}$, which is a good starting point for our gradient descent because it provides a constant-factor approximation to the optimal solution due to the properties of $\bG\bA$. 

\thmlinreg*
\begin{proof}
Let $\alpha>0$ be a fixed constant. 
By Theorem \ref{thm:alg:eps:se}, we can in time $\O{\frac{\nnz(\bA)}{\alpha}+d^\omega}+\frac{1}{\eps}\,d^{2+\alpha}\polylog(d)$, compute a $(1+\O{\sqrt{\eps}})$ subspace embedding $[\bS\bA; \bS\bb]$ of the matrix $[\bA; \bb]$ with probability at least $0.9$.  
By Theorem \ref{thm:se}, we can compute a matrix $\bG$ such that with probability at least $0.9$, $\bG$ has $\O{d}$ rows and $\bG\bA$ is a $\xi$-distortion subspace embedding with $\xi=\O{1}$. 
Moreover, since $\bG\bA\in\mathbb{R}^{\O{d}\times d}$, then we can compute its QR decomposition $\bG\bA=\bQ\bR^{-1}$ in $\O{d^\omega}$ time. 
Because $\bQ$ has orthonormal columns, then the condition number of $\bG\bA\bR$ is $\kappa(\bG\bA\bR)=1$. 
Since $\bG\bA$ is a $\xi$-distortion subspace embedding with $\xi=\O{1}$, it follows that $\kappa(\bA\bR)=1$. 
Similarly, we have $\kappa(\bS\bA\bR)=\O{1}$ since $\bS$ is also a $(1+\O{\sqrt{\eps}})$ subspace embedding of $\bA$. 
Intuitively, $\bR$ serves as a good preconditioner to the matrix $\bA$. 

More precisely, let $\bw$ be an approximate minimizer of the resulting matrix, so that \[\|\bS\bA\bR\bw-\bS\bb\|_2\le(1+\eps)\min_{\bx\in\mathbb{R}^d}\|\bS\bA\bR\bx-\bS\bb\|_2.\]
Then by Lemma \ref{lem:se:lr:solve}, $\bR\bw$ is a $(1+\O{\eps})$-approximate solution to the linear regression problem, so it remains to compute $\bR\bw$. 

Unfortunately, since $\bS\bA\bR$ has $\O{\frac{1}{\eps}\,d\log d}$ rows, we cannot afford to immediately use the closed-form solution to compute $\min_{\bx\in\mathbb{R}^d}\|\bS\bA\bR\bx-\bS\bb\|_2$. 
On the other hand, since $\bR\bw$ is a $(1+\O{\eps})$-approximate solution to the linear regression problem, we can use gradient descent to compute $\bR\bw$ after finding a ``good'' initial point. 

To that end, we first find the closed-form solution to $\bw^{(0)}=\argmin_{\bx\in\mathbb{R}^d}\|\bG\bA\bx-\bG\bb\|_2$, since $\bG\bA$ is a constant-factor subspace embedding. 
We then account for the preconditioning by computing $\bw^{(1)}=\bR^{-1}\bw^{(0)}$, which will serve as a starting point for our gradient descent. 

More specifically, let $\bw^{(0)}=(\bG\bA)^+(\bG\bb)$, where $(\bG\bA)^+$ is the pseudo-inverse of $\bG\bA$. 
Since $\bG\bA\in\mathbb{R}^{\O{d}\times d}$, then we can compute its pseudo-inverse in $\O{d^\omega}$ time. 
Moreover, since the construction of $\bG$ in Theorem \ref{thm:se} consists of a reweighted subsampled Hadamard transform, then we can compute $\bG\bb$ in $\O{n\log n}=\O{\nnz(\bA)}$ time. 
Thus we can compute $\bw^{(0)}$ in total time $\O{\nnz(\bA)+d^\omega}$ after computing $\bG$.  

We can now compute $\bw^{(1)}=\bR^{-1}\bw^{(0)}$ in $\O{d^2}$ time and furthermore,
\[\|\bS\bA\bR\bw^{(1)}-\bS\bb\|_2\le(1+\eps)\|\bA\bR\bw^{(1)}-\bb\|_2=(1+\eps)\|\bA\bw^{(0)}-\bb\|_2\le(1+\eps)\|\bG\bA\bw^{(0)}-\bG\bb\|_2.\]
Let $\bz=\argmin_{\bx\in\mathbb{R}^d}\|\bS\bA\bx-\bS\bb\|_2$. 
Since $\bw^{(0)}=(\bG\bA)^+(\bG\bb)$, then $\bw^{(0)}=\argmin_{\bx\in\mathbb{R}^d}\|\bG\bA\bx-\bG\bb\|_2$.
Therefore,
\begin{align*}
\|\bS\bA\bR\bw^{(1)}-\bS\bb\|_2&\le(1+\eps)\|\bG\bA\bw^{(0)}-\bG\bb\|_2\\
&\le(1+\eps)\|\bG\bA\bz-\bG\bb\|_2\\
&\le(1+\eps)\gamma\|\bA\bz-\bb\|_2\\
&\le(1+\eps)\gamma\min_{\bx\in\mathbb{R}^d}\|\bS\bA\bx-\bS\bb\|_2.
\end{align*}

In other words, $\bw^{(1)}$ is an $\O{1}$-approximation to the optimizer of the linear regression problem for the input matrix $\bS\bA\bR$ and the measurement vector $\bS\bb$, since $\gamma=\O{1}$. 
Since $\bQ$ is orthonormal, then (squared) linear regression for $[\bS\bA\bR; \bS\bb]$ is $1$-strongly convex. 
Moreover, since $\kappa(\bS\bA\bR)=\O{1}$, then we can set $m=1$ and $M=\O{1}$ in Theorem \ref{thm:gd:iterations}. 
Further, setting the parameters  $f(x^{(0)})\le\gamma\min_{\bx\in\mathbb{R}^d}\|\bS\bA\bx-\bS\bb\|_2$ and the gradient descent accuracy $\zeta=\eps\min_{\bx\in\mathbb{R}^d}\|\bS\bA\bx-\bS\bb\|_2$ in Theorem \ref{thm:gd:iterations}, we obtain a $(1+\eps)$-approximation by using $\O{\log\frac{1}{\eps}}$ iterations of gradient descent with the initial solution as $\bw^{(1)}$. 
Since $\bS$ has $\O{\frac{1}{\eps}\,d\log(d)}$ rows from Theorem \ref{thm:alg:eps:se}, each iteration of gradient descent can be performed in time $\frac{1}{\eps}\,d^2\polylog d$. 
Hence, the overall runtime to compute a $(1+\eps)$-approximate solution to the linear regression problem on input matrix $\bA$ and measurement vector $\bb$ is 
\[\O{\frac{\nnz(\bA)}{\alpha}+d^\omega}+\frac{1}{\eps}\,d^{2+\alpha}\polylog(d)+\frac{1}{\eps}\,d^2\polylog(d)\log\frac{1}{\eps}.\] 
\end{proof}

\paragraph{Acknowledgments.} 
We thank Jelani Nelson for helpful discussions over the course of the project. 
David P. Woodruff and Samson Zhou were supported by a Simons Investigator Award and by the National Science Foundation under Grant No. CCF-1815840. Sandeep Silwal is supported by an NSF Graduate Research Fellowship under Grant No.\ 1745302, and NSF TRIPODS program (award DMS-2022448), NSF award CCF-2006798, and Simons Investigator Award (via Piotr Indyk).

\bibliographystyle{alpha}
\bibliography{references}

\end{document}

